\newcommand\scalemath[2]{\scalebox{#1}{\mbox{\ensuremath{\displaystyle #2}}}}
\newtheorem{theorem}{Theorem}[section]
\newtheorem{lemma}{Lemma}[section]
\newtheorem{assumption}{Assumption}[section]
\newenvironment{proof}[1][Proof]{\textbf{#1.} }{\ \rule{0.5em}{0.5em} \vspace{1ex}}
\newcommand{\R}{\mathbb{R}}
\newcommand{\C}{\mathbb{C}}
\newcommand{\bs}{\boldsymbol}
\newcommand{\Herm}{\mathrm{Herm}}
\newcommand{\im}{\mathrm{i}}
\DeclareMathOperator{\argmin}{argmin}
\DeclareMathOperator{\dist}{dist}
\begin{document}

\title{Iteration Complexity of Variational Quantum Algorithms}

\author{Vyacheslav Kungurtsev}
\affiliation{Department of Computer Science, Czech Technical University in Prague, Karlovo nam. 13, Prague 2, Czech Republic}
\orcid{0000-0003-2229-8824}
\author{Georgios Korpas}
\affiliation{Department of Computer Science, Czech Technical University in Prague, Karlovo nam. 13, Prague 2, Czech Republic}
\affiliation{Archimedes Research Unit on AI, Data Science and Algorithms,  Athena Research and Innovation Center, 15125 Marousi, Greece}
\orcid{0000-0003-3850-4979}
\author{Jakub Marecek}
\affiliation{Department of Computer Science, Czech Technical University in Prague, Karlovo nam. 13, Prague 2, Czech Republic}
\orcid{0000-0003-0839-0691}
\author{Elton Yechao Zhu}
\affiliation{Fidelity Center for Applied Technology, FMR LLC, Boston, MA 02210, USA}
\orcid{0000-0002-4497-2093}


\maketitle


\begin{abstract}
  There has been much recent interest in near-term applications of quantum computers, i.e., using quantum circuits that have short decoherence times due to hardware limitations. 
Variational quantum algorithms (VQA), wherein an optimization algorithm implemented on a classical computer evaluates a parametrized quantum circuit as an objective function, are a leading framework in this space. 
An enormous breadth of algorithms in this framework have been proposed for solving a range of problems in machine learning, forecasting, applied physics, and combinatorial optimization, among others.

In this paper, we analyze the iteration complexity of VQA, that is, the number of steps that VQA requires until its iterates satisfy a surrogate measure of optimality. We argue that although VQA procedures incorporate algorithms that can, in the idealized case, be modeled as classic procedures in the optimization literature, the particular nature of noise in near-term devices invalidates the claim of applicability of off-the-shelf analyses of these algorithms. 
Specifically, noise makes the evaluations of the objective function via quantum circuits \emph{biased}.
Commonly used optimization procedures, such as SPSA and the parameter shift rule, can thus be seen as derivative-free optimization algorithms with biased function evaluations, for which there are currently no iteration complexity guarantees in the literature.
We derive the missing guarantees and find that the rate of convergence is unaffected. However, the level of bias contributes unfavorably to both the constant therein, and the asymptotic distance to stationarity, i.e., the more bias, the farther one is guaranteed, at best, to reach a stationary point of the VQA objective.
\end{abstract}



\clearpage 
\section{Introduction}\label{sec:intro}
There has been a great deal of recent interest in near-term applications of quantum computers. 
Variational quantum algorithms \cite{McClean_2016,yuan2019theory,cerezo2021variational,2023arXiv231202279A} are the most popular approach by a large margin, as evidenced by the quantity of papers and citations thereof studying their performance on a wide range of practical and theoretical problems.
Recent studies show that they can potentially outperform existing approaches in a number of important tasks, such as combinatorial optimization~\cite{Egger2021warmstartingquantum}, quantum chemistry \cite{bauer2020quantum,lin2022heisenberg},
particle physics~\cite{humble2022snowmass,blance2021quantum} and many others.

In variational quantum algorithms (VQAs), a parametric quantum circuit is run and ``evaluated", or endowed with measurement that collapses the state to some noisy eigenvalue of an observable operator. This measurement is in turn used to classically guide the estimate of the parameters, which are adjusted according to the evaluation, and then used to prepare the (updated) parametric quantum circuit again. Thus, it is a scheme to simultaneously use classical and quantum computing in an alternating manner, developed as a means of utilizing the potential speedup properties of quantum computing despite the inability to obtain useful output from circuits beyond a relatively shallow depth in the so-called near-term devices \cite{cerezo2021variational}. 

The classical computation component can be viewed as an unconstrained optimization problem, wherein a function of the so-called variational parameters is minimized with respect to these variables. This objective function has certain specific properties, which we shall discuss in detail in the following. However, one important and immediate point is that the evaluations of the function (and its gradients and any higher-order information) are ultimately noisy, meaning that one evaluation with a certain set of parameters may yield a different value than another evaluation with the same set of variational parameters. 

We can distinguish two classes of algorithms used to compute the steps of the classical optimization procedure in VQA.
One popular option \cite{HarrowNapp}, inspired by automatic differentiation, evaluates closed-form expressions for the gradient \cite{ekert2002direct,fischer2022ancilla}.
Alternatively, one can use ``zero order" or ``derivative free optimization" methods, which only make use of function evaluations rather than seeking to compute gradients. 
A popular choice is the \emph{Simultaneous Perturbation Stochastic Approximation} (SPSA) algorithm, a classic method introduced by Spall et al.~\cite{spall1992multivariate} to solve stochastic root-finding problems using noisy function evaluations to approximate a gradient. There it was found that a careful evaluation of the problem function at distinct points yielded a more reliable estimate of the gradient than finite difference methods, relative to the number of evaluations required, and it became a standard tool of zeroth-order stochastic approximation. {Despite being a comparatively unused method in classical settings, known to perform substandard compared to other methods, in the case of VQA, practical numerical experience has yielded SPSA to be the most popular zero order optimization method, see, e.g.~~\cite{Bonet2023}. In studying the convergence properties under biased noise, we hope to elucidate on the potential unique value of SPSA to this class of problems.

More commonly in the general optimization literature, an approach described (ambiguously, since SPSA also used two evaluations) as the ``two-point function approximation" rule is the most extensively analyzed. As such, as a point of both theoretical and numerical comparison, we consider the application of both SPSA vis-a-vis the two point function approximation rule for optimizing VQA objectives.

The third popular procedure, which is unique to the functional form of parametrized quantum circuits, is called the ``parameter shift rule". It can be understood as an intermediate step between zero'th- and first-order methods. In particular, although by strict definition the procedure is zero-order in that it estimates a gradient using function evaluations, it is based on the original observation that for certain classes of gates and choices of perturbation, the approximation is exact. Based on select choices of the direction and magnitude of the perturbation around the base point, the parameter shift rule can be calculated to produce the exact gradient for certain circuits~\cite{Mitarai_2018, schuld2019evaluating}. The class of circuits for this applicability, however, is limited, especially when one considers the entire pipeline through an ultimate evaluation of an observable. Recent work has expanded the scope of procedures that appear equivalent to analytical expressions as well as the estimation accuracy of those that are not~\cite{Banchi_2021}. As a consequence, given that the parameter shift rule computes analytically exact in certain contexts, and approximations in many others, it is best considered as a method in between zero- and first-order, and is best analyzed in a manner attuned to the problem at hand. 

{A recent numerical comparison of these solvers is given~\cite{Bonet2023}. Extensive testing and hyperparameter tuning was used to obtain the best settings for computational performance on a set of representative problems. SPSA is compared with a custom solver and a few classic second-order optimization methods. There was no clear winner, with different methods being competitive for different problem types and parameters.}

In regards to claiming analytical results on these methods, while there are plenty of studies on the iteration complexity of classical optimization algorithms, their applicability to VQA is suspect due to the presence and nature of the errors arising in near-term circuits. That is: beyond the potential stochastic nature of a quantum circuit evaluation (i.e., for a number of problems, the theoretical output can be one of several eigenstates of an observable operator), the gates in near-term devices are far from implemented with perfect fidelity. The errors or unintended operations in the quantum state~\cite{benenti2019principles} are difficult to model. Common models consider rotation (\emph{dephasing}) and dampening of an amplitude (\emph{dissipation}). While there is active research in developing hardware and error-correcting codes to mitigate the scale and impact of these errors, for the time being, it is clear that such systematic errors in the evaluation of quantum circuits shall remain endemic for some time. 

Whereas there is thorough theoretical understanding of the performance of many variants of SGD and zero-order optimization methods, the analysis is typically inspired by statistical and machine learning applications. In this case, the function being minimized is a loss function in the form of a large sum of discrepancies from actual to predicted estimates across a large sum of data samples. The stochasticity comes from the necessity of sampling one or multiple (minibatch) samples in evaluation of the function or gradient during the course of the optimization procedure, due to the frequently large sample size and unrealizability of storing an entire function or gradient in memory. A key feature of this class of problems is that a function or gradient evaluation can be performed in such a way that it is unbiased, i.e., while it is noisy, the statistical expectation of the evaluation is equal to a desired deterministic quantity.

By contrast, as we shall see below, the errors that appear in the evaluation of quantum circuits in VQA tend to exhibit ``bias". To formalize this notion, consider an optimization problem of the form,
\begin{equation}\label{eq:optprob}
    \min\limits_{\bs \theta} \, f(\bs\theta):= \mathbb{E}_{\xi\sim \Xi(\bs\theta)}[F(\bs \theta,\xi)],
\end{equation}
where $\bs \theta \in \mathbb{R}^p$ are the deterministic VQA parameters and $\xi \in \mathbb{R}$ is the stochastic sample. We consider that $\Xi$ is the underlying distribution on which we want to minimize this first moment quantity. However, in practice, we may not have access to unbiased samples of the quantity in~\eqref{eq:optprob}, but instead each physical evaluation of $F(\bs\theta,\cdot)$ satisfies a difference distribution, formally,
\begin{equation}\label{eq:noisyeval}
    F(\bs\theta,\cdot)\sim F(\bs\theta,\xi),\,\xi\sim \chi (\bs\theta),
\end{equation}
where in general the state-dependent distributions over $\xi$ are not the same nor do they have the same moments, i.e.,
\[
\mathbb{E}_{\xi\sim\Xi(\bs\theta)} [F(\bs\theta,\xi)]\neq \mathbb{E}_{\xi\sim\chi(\bs\theta)} [F(\bs\theta,\xi)].
\]
Thus, in this work the desired minimization is with respect to the distribution denoted by $\Xi(\bs\theta)$, but only estimates associated with the distribution $\chi(\bs\theta)$ are available. We assume that there is a reasonable amount of resemblance in the sense that $\text{supp}\left(\cup_{\bs\theta}\Xi(\bs\theta)\right)\subseteq \text{supp}\left(\cup_{\bs\theta}\chi(\bs\theta)\right)$. In the sequel, for parsimony we will drop the state-dependence of the distributions, i.e., write $\Xi(\bs\theta)$ as just $\Xi$, observing that the notation limitation does not hinder the clarity of the presented results.

In standard Stochastic Approximation (SA) Theory, one iteratively computes,
\begin{equation}\label{eq:biasedsg}
\bs g_k \approx \nabla F(\bs \theta_k,\xi_k) = \nabla f(\bs \theta_k)+\bs n(\bs \theta_k,\xi_k)
\end{equation}
i.e., forming some possibly approximate estimate of a sample of $\nabla F$ from a distribution, written formally as the exact gradient of the (hidden) function $f$ plus an approximation error denoted as $ \bs n(\bs \theta_k,\xi_k)$. This error represents the deviation from the exact gradient and can be due to systemic stochastic noise and algorithmic inexact approximation of the gradient. We consider the general case of state-dependent noise, where $\bs \theta_k$ is the state.

SA then involves iterating,
\begin{equation}\label{eq:saiter}
    \bs \theta_{k+1}=\bs \theta_k-\alpha_k \bs g_k
\end{equation}
where $\alpha_k$ is a \emph{gain} sequence satisfying
\begin{equation}\label{eq:sagain}
    \sum\limits_{k=0}^\infty \alpha_k=\infty,\qquad\sum\limits_{k=0}^\infty (\alpha_k)^2<\infty.
\end{equation}

Standard SGD methods assume that such an evaluation satisfies $\mathbb{E}_\xi[\bs n(\bs \theta_k,\xi)]=0$, that is, they assume the noise is unbiased. The convergence properties of SA under asymptotically vanishing biased noise are considered throughout many classic texts~\cite{borkar2009stochastic}. Several recent works do consider systemic bias that remains throughout the iterative process, i.e., $\mathbb{E}_k [\bs n(\bs \theta_k,\xi_k)]\neq 0,\,\forall k$, and, even stronger, a stationary $\limsup_k\left\|\mathbb{E}_k [\bs n(\bs \theta_k,\xi_k)]\right\|= b\neq 0,\,\forall k$, has been considered in works such as~\cite{tadic2017asymptotic,pmlr-v99-karimi19a} and most recently~\cite{ajalloeian2020convergence}. Of course, with systematic bias, one cannot hope to reach, even asymptotically, even a local minimizer. Instead, guarantees an iteration is in some neighborhood of a stationary point.

Analyses of biased zeroth-order optimization methods in the literature are scant, to the best of the authors' knowledge.
In~\cite{chen2015stochastic} this problem is claimed to be uninteresting since one can simply minimize $f(\bs \theta_k)-\mathbb{E}_\xi[n(\bs \theta_k,\xi_k)]$. However, we believe this is misguided, \textcolor{black}{at least in the case of VQA,} as one  does not have a closed-form expression for the noise, and its estimation may add significant computational expense. That said, we are also unaware of any other applications, other than VQA, wherein there is a known (noisy but with known distribution) function to minimize, with the actual observations subject to biased noise with unknown distribution and even moments thereof. As such, the novelty of the work is appropriate for the highlight, and to the knowledge of the authors, sole relevant application of VQA.

The presence of biased noise indicates that one cannot hope to achieve, even asymptotically, a stationary point of $\min_{\bs \theta} \mathbb{E}_{\xi}[F(\bs \theta,\xi)]$. Instead, there are notions of approximation and robustness that become relevant: given a bound for the noise, for a zero-order algorithm, 1) how close can we guarantee that iterates get to stationary points and 2) how can we guarantee an estimate from the algorithm that is the best choice under consideration of the worst case outcome of the noise?
We should also mention that there exists one work studying the asymptotic guarantees of SPSA under bounded noise~\cite{granichin2014simultaneous} with subsequent application to Gaussian mixture models~\cite{boiarov2017simultaneous}. In this work the authors established a quantitative bound on the asymptotic bias associated with the mean squared error of the gradient for SPSA and other zero-order optimization procedures.

\paragraph{Our contribution}
In this paper, we are interested in studying the properties of classical optimization algorithms as used for variational quantum algorithms (VQA)
under non-trivial noise in the evaluation of a quantum circuit.
The bias in zeroth-order methods would consist of \emph{function evaluations} yielding an error, i.e.,
\begin{equation}\label{eq:feval}
f \sim F(\bs \theta,\xi) = f(\bs \theta)+ n(\bs \theta,\xi),\quad \xi\in\chi 
\end{equation}
with $\mathbb{E}_{\xi\sim\chi}[ n(\bs \theta,\xi)]\neq 0$. 

The rest of this paper proceeds as follows. 
Section~\ref{sec:back} presents the background on VQA and applicable optimization methods. We also present our argument in regards to the necessity of considering systematic bias in the analysis of the convergence of optimization algorithms for VQA and review the literature on convergence guarantees in the presence of bias. 
Section~\ref{sec:newconv} presents our novel contribution in providing nonasymptotic guarantees of neighborhood convergence. 
Section~\ref{sec:num} presents the results of numerical performance on a set of quantum optimization problems. 
Finally, we discuss the implications of our work and insights for future research in Section~\ref{sec:conc}. Appendix~\ref{app:proofs} provides the proofs of our main results. 
In summary, this work represents a more faithful-to-reality convergence analysis for quantum variational solvers, providing first-principles insight into the intuitive practice of eschewing attempts to evaluate a gradient and consider a zero-order method instead.

\begin{table}[]
\centering
\resizebox{\columnwidth}{!}{
\begin{tabular}{ccc}  \Xhline{5\arrayrulewidth} 
Quantity & Description           & Reference   \\ \hline
$F,C$      & cost functions         & Eq. \eqref{eq:optprob} and Eq. \eqref{eq:cost}        \\
$\bs \theta$    & variational parameter in $\mathbb{R}^p$ & Eq.  \eqref{eq:optprob} and Sec. \ref{sec:vqas}      \\
$\bs \theta^*$    & optimal variational parameter & Sec. \ref{sec:vqas}      \\
$\xi$       & stochastic sample     & Eq. \eqref{eq:optprob}    \\
$\Xi(\bs \theta)$       & ideal (noise) distribution &   Eq. \eqref{eq:optprob}           \\
$\chi(\bs \theta)$       & physical (noise) distribution &   Eq. \eqref{eq:noisyeval}          \\
$n(\bs \theta,\xi)$ & random variable describing mean zero and noisy bias & Eq. \eqref{eq:feval}\\
$\bs g_k$       & (stochastic) gradient at step $k$ &  Eq. \eqref{eq:biasedsg}           \\
$\alpha_k$       & gain in the negative gradient direction $k$ &  Eq. \eqref{eq:saiter} and Eq. \eqref{eq:sagain}          \\ 
$\ket{\psi_0},\, \rho_0$       &    initial quantum state  & Eq.\eqref{eq:cost} and Sec. \ref{sec:vqas}\\
$\ket{\psi(\bs \theta)},\, \rho(\bs \theta)$       &    parametrized quantum state  & Eq. \eqref{eq:cost} and Sec. \ref{sec:vqas} \\
$U(\bs \theta)$       &   parametrized variational ansatz  & Eq. \eqref{eq:U} and Eq. \eqref{eq:U} \\
$O, \, B$       &   observable that defines the cost function  & Eq. \eqref{eq:cost2}, Sec. \ref{sec:vqas2} and Fig. \ref{fig:PSR} resp. \\
$\mathcal{E}(\rho)$ & quantum noise channel & Eq.\eqref{eq:Kraus} and Sec. \ref{sec:bias} \\
$\mathrm{Herm}(\mathbb{C}^n)$ & space of Hermitian operators over $\mathbb{C}^n$ & Before Eq. \eqref{eq:cost} \\
$\mathrm{Hilb}(\mathbb{C}^n)$ & Hilbert space over $\mathbb{C}^n$ & Eq.\eqref{eq:Kraus} and Sec. \ref{sec:bias} \\ 
$\bs{\Delta}_k$ & $k$-th simultaneous perturbation vector & Sec. \ref{sec:derfree} and Eq. \eqref{eq:spsa}\\
$\mu$ & smoothing parameter $>0$ & Eq. \eqref{eq:smoothedfunc} \\
$f_\mu$ &   Gaussian smoothing function & Eq. \eqref{eq:smoothedfunc} \\ 
$N$ & number of SA iterations & Sec. \ref{sec:reviewguarantees} \\
$L$ &  modulus of Lipschitz continuity & Eqs. \eqref{eq:litdiff} \\
$\gamma$ & step size parameter & Eq. \eqref{eq:step} \\
\hdashline[0.5pt/5pt]
$C_n$       & constant $>0$    & Assumption \ref{ass:31}      \\
$c,\, c_k$     &  set of scalars (errors) in the gradient approximations   & Eq. \eqref{eq:spsa} and. Eq. \eqref{eq:twopointNesterov}         \\
${b},\, {b}_k$     &       upper bound of bias          &    Eq. \eqref{eq:biasas}     \\ \hdashline[0.5pt/5pt]
$p,L_V$ & number of layers in the VQA & Eq. \eqref{eq:U} and Eq. \eqref{eq:bLv} resp. \\ \hline
\end{tabular}  
}
\caption{Summary of frequent notation used throughout the paper.}
\end{table}


\section{Background}\label{sec:back}

\subsection{Variational Quantum Algorithms}\label{sec:vqas}

\textcolor{black}{The advent of \emph{variational quantum algorithms} (VQAs) \cite{Peruzzo_2014, Farhi2014, Wecker2015,McClean_2016, yuan2019theory,cerezo2021variational} came as a practical response to the challenges of \emph{noisy intermediate scale quantum} (NISQ) devices \cite{Preskill2018quantumcomputingin}. NISQ devices suffer from coherent and incoherent noise, short coherence times, and limitations on qubit connectivity. VQA attempt to achieve quantum advantage under those practical limitations.}
Of course, VQAs also have many disadvantages. By incorporating a hybrid classical-quantum framework, a significant overhead appears in preparing the initial quantum state to feed as input to the circuit, and significant computing potential is lost by frequent wave collapse by observation. 
On the other hand, one of their attractive features is that they provide a general framework that fits a range of problems, from quantum simulation \cite{Peruzzo_2014} to solving differential equations \cite{Kyriienko2021} and to designing quantum autoencoders \cite{Shaikh2022}, to name a few. 
No matter what class of problem one is interested in solving using VQAs, the basic structure of these algorithms is invariant, despite the fact that certain applications might benefit from different algorithmic subroutines. 

Specifically, VQAs are built around a quantum-classical loop. First, assuming that one contains data that can be encoded in a parametrized quantum circuit, one defines a cost function $C(\bs \theta)$,  $\bs \theta \in \mathbb{R}^p$, whose true minimum, obtained by the optimizer $\bs \theta^* \coloneqq \argmin_{\bs \theta}C(\bs \theta)$, corresponds to the solution of the problem. Usually, $C(\bs \theta)$ is given as the expectation value of some operator $O \in \Herm(\mathbb{C}^d)$,
\begin{equation}
\begin{aligned}\label{eq:cost}
    C(\bs \theta) &= \braket{\psi(\bs \theta) | O \psi(\bs \theta)}\\
                  &= \braket{\psi_0|U^\dagger(\bs \theta)O U(\bs \theta)|\psi_0},
\end{aligned} 
\end{equation}
where $\ket{\psi_0} \in \mathbb{C}^d$ denotes the initially prepared state and $U \in U(d)$. Furthermore, it is assumed that the minimum of $C(\bs \theta)$ is not efficiently computable using only a classical computer. (This requires, among others, the use of gates outside of the Clifford group and limits the applicability of common quantum error correction mechanisms.) Therefore, the quantum part of the information loop mentioned above amounts to:
\begin{enumerate}[font=\itshape]
    \item Encoding (loading) the data into a parametrized quantum circuit.
    \item Initializing the parameters $\bs \theta \in \mathbb{R}^p$ of the quantum circuit.
    \item Measuring the parametrized output of the circuit $C(\bs \theta)$ or its gradient.
\end{enumerate}
Once an iteration of the quantum part of the loop has been performed, a classical optimizer is used to minimize the measured cost function with the aim of updating the parameters $\bs \theta \to \bs \theta'$, $ \bs \theta' \in \mathbb{R}^p$ such that $C(\bs \theta') \leq C(\bs \theta)$.
In Eq. \eqref{eq:cost}, the unitary $U = U(\bs \theta)$ is referred to as the \emph{variational ansatz}, which explicitly contains precise information about the variational parameters $\bs \theta \in \R^p$ and corresponds to the actual quantity that is \emph{classically trainable}.
In the literature, one encounters problem-inspired or problem-specific
ans{\" a}tze, suited for certain problems, as well as generic ansatz architectures that are problem-agnostic \cite{cerezo2021variational}. Generically, in the simplest form, $U(\bs \theta)$ is decomposed as the sequential application of unitaries parameterized by the parameters: 
\begin{align}\label{eq:U}
    U(\bs \theta) = U_p(\theta_p) \ldots U_1(\theta_1),
\end{align}
where each parameterized unitary $U_i\in U(d)$, where $d$ is the level of the quantum system, takes the form
\begin{align}\label{eq:Ui}
    U_i(\theta_i) \coloneqq W_i\exp(-\im \theta_i H_i),
\end{align}
for $H_i \in \Herm(\C^d)$, $W_i \in U(d)$ and $\theta_i \in \R$, for any $i \in [p]$. The unitaries $W_i$ are not subject to variation by the classical optimizer, while Hermitian operators $H_i$, the \emph{Hamiltonians}, are optimally chosen according to the problem at hand. For example, for a certain type of VQAs, the so-called \emph{quantum approximate optimization algorithm} (QAOA) \cite{Farhi2014}, as well as generalization thereof such as the quantum alternating operator ansatz \cite{Hadfield_2019}, best suited for combinatorial optimization problems, one considers a repeating sequence of two types of Hamiltonians where $H_{2i}$ corresponds to a sum of the Pauli $\sigma_x$ matrices and $H_{2i+1}$ corresponds to the Ising model Hamiltonian. Generally, there are many different types of ansatze, including ansatze for thermal states \cite{Verdon2019}.

An important aspect of VQAs is the presence of systemic errors (bias). As a motivating example, consider the depolarizing channel of a density matrix
\begin{align}
    \mathcal{E}(\rho) =p\frac{\mathds{1}}{d}+(1-p)\rho,
\end{align}
$\mathcal{E} \in {\rm End}({\rm Herm}(\mathbb{C}^d))$, which replaces each qubit (or the whole system) in state $\rho$ with a totally mixed state with some probability $p > 0$. Consider that subsequently, an observable $O$ is applied to compute the energy, entropy or entanglement, for instance. It is clear that if in all circuit runs the state $\rho$ is input into the operator $\mathcal{E}$ any evaluation will not have an expectation $\braket{O} \coloneqq {\rm Tr}(\rho O)$, rather $\braket{\tilde O} \coloneqq {\rm Tr}(\mathcal{E}(\rho) O)$, and no matter how many trials are performed, the sample average will not equal this desired value, that is, 
\begin{align}
    \braket{O} \neq \mathbb{E}(\tilde{O})\coloneqq \frac{1}{N}\sum\limits_{i=1}^N f_i,\quad f_i\sim \mathbb{E}_i(\tilde{O}),
\end{align}
where $f_i$ corresponds to a measurement. Said differently, the estimator $\tilde{O}$ of some observable $O$ is biased in the sense $\mathbb{E}[O-\tilde{O}] \neq 0$. It is known that, for example, on IBM machines, there exists an asymmetric read-out error \cite{Nachman2020}.
See Appendix \ref{sec:bias} for a numerical illustration. 

Thus, it is clear that the noise arising from a classical measurement applied to a quantum circuit, as is commonly done in, e.g., VQA, is not unbiased, as far as the expectation of the (parameterized) estimator is concerned, and thus any analysis of algorithms with function evaluations of this sort must incorporate this for faithful modeling of the problem. 

\subsection{Parameter Shift Rule as a Gradient Estimate}\label{sec:vqas2}

Let us consider a VQA unitary in the form of Eq. \eqref{eq:U}. The cost function to be optimized takes the form
\begin{align}\label{eq:cost2}
    C(\bs\theta) = \braket{\psi_0 | U^\dagger(\bs\theta) O U(\bs\theta) |\psi_0},
\end{align}
with $U(\bs\theta)$ as in Eq. \eqref{eq:U}. Typically, the unitaries executed in the context of VQAs are of the form of Eq. \eqref{eq:Ui} where $H_i$ are rotation-like \textcolor{black}{operators such} that $H_i^2 =\mathds{1}$. This is the case for all operators acting on single qubits; however, $H_i$ can be any generic operator, and a viable strategy is to decompose them into Pauli strings \cite{Crooks2019}, for example.  In what follows we introduce the backbone of what is generically referred to as the \emph{parameter shift rule} (PSR).

The gradient of Eq. \eqref{eq:cost2} is of particular interest in this article. Precisely for the reasons mentioned in the previous paragraph, it suffices to analyze the case where the set of Hermitian operators $\{ H_i\}$ are rotation-like operators, that is, given in terms of Pauli strings. Recall that any gate can be represented as a unitary operator 
\begin{align}
   U_i(\theta_i) = e^{-i \theta_i H_i} = \cos(\theta_i)\mathds{1}- i \sin(\theta_i)H_i.  
\end{align}
Any operator of the form $O(\theta_i)\coloneqq U_i^\dagger(\theta_i)OU_i(\theta_i)$, as in Eq. \eqref{eq:cost2}, can be written as 
\begin{align}
    A + B \cos(\theta_i)+ C \sin(\theta_i),
\end{align}
with $A,B,C$ operators independent of the variational parameter $\theta_i$ and only dependent on the unitary $H_i$. Additionally, using standard trigonometric rules, 
\begin{equation}
\begin{aligned}
\frac{d \cos (x)}{d x} &=\frac{\cos (x+s)-\cos (x-s)}{2 \sin (s)} \\
\frac{d \sin (x)}{d x} &=\frac{\sin (x+s)-\sin (x-s)}{2 \sin (s)},
\end{aligned}
\end{equation}
for $s\neq n\pi$ where $n \in \mathbb{Z}$, we can easily compute the derivative of $O(\theta_i)$ as
\begin{align}\label{eq:derivative}
    \frac{dO(\theta_i)}{d\theta_i}= \frac{O(\theta_i+s)-O(\theta_i-s)}{2\sin(s)}.
\end{align}
Therefore, computing the derivative with respect to one of the variational parameters amounts to computing the actual observable twice in symmetrically shifted points. The parameter-shift rule, as defined in \cite{Mitarai_2018,Li_2017,schuld2019evaluating} corresponds to $s=\pi/2$. Note that for any rotation-like operator $H_i$, the derivative \eqref{eq:derivative} is exact (except for integer multiples of $\pi)$. The gradient can be easily evaluated using the same exact rule by redefining the expectation value. That is, to compute the derivative with respect to $\theta_1$, we redefine $\tilde{O}_1\coloneqq U_2^\dagger(\theta_2)\ldots U_L^\dagger(\theta_L)O U_L(\theta_L) \ldots U_2(\theta_2)$ and $\tilde{O}_1(\theta_1) \coloneqq U_1^\dagger(\theta_1)\tilde{O}_1U_1(\theta_1)$, while for $1< j < L$ we define $\tilde{O}_j\coloneqq U_{j+1}^\dagger(\theta_{j+1})\ldots U_L^\dagger(\theta_L)O U_L(\theta_L) \ldots U_{j+1}(\theta_{j+1})$ and $\ket{\tilde{\psi}_j} = U_{j-1}(\theta_{j-1})\ket{\psi_0}$ such that the derivative with respect to $\theta_j$ is taken for $\tilde{O}_j(\theta_j) = U_j^\dagger(\theta_j) \tilde{O}_j U_j(\theta_j)$. The gradient is then given as
\begin{align}\label{eq:nabla}
    \nabla_{\bs \theta} O(\bs \theta) = \sum_{i=1}^L \frac{d\tilde{O}_i(\theta_i)}{d\theta_i}.
\end{align}
Higher-order derivatives can also be derived by iterations of the same rule \cite{Mari_2021}. 

For quantum variational algorithms, the parameter shift rule has emerged as a standard means by which to analytically compute the gradients of a parametrized circuit (see, e.g.,~\cite{Banchi_2021}). The rule became popular because for noisy near-term quantum computers, it is quite difficult to implement accurate finite difference estimates since the two required shifted evaluations, of difference of order $O(10^{-7})$, cannot be effectively distinguished from the noise. However, the parameter-shift rules are not only exact, unlike the finite-difference methods, but also involve a relatively large shift away from the noise threshold. However, the expectation values required to compute Eq. \eqref{eq:nabla} are still subject to the limitation that only a finite number of shots will be available to construct empirical estimates, and since the ultimate objective evaluation is an expectation, a sampling error will appear. In addition, though, as we shall see in the discussion of bias in Sec. \ref{sec:bias}, the evaluations will not be of the ideal desired sampled distribution but, as a result of systemic noise, biased. We shall study the effect of this on the convergence properties in the following section. 

\begin{table}[h!]
\centering
\resizebox{\columnwidth}{!}{
\begin{tabular}{l|ccccc}  \Xhline{5\arrayrulewidth} 
\multirow{2}{*}{ Description }          & \multirow{2}{*}{ Year}     &\multirow{2}{*}{ Name }         &  \multirow{2}{*}{Reference}    &  \multirow{2}{*}{Context} &  \multirow{2}{*}{ Exactness}   \\ &&&&&\\\hline
\multicolumn{6}{c}{\multirow{2}{*}{\bf A Small Sample of Classical Precursors}} \\ 
\multicolumn{6}{c}{}  \\ \hline
\multirow{2}{*}{\begin{tabular}[c]{@{}l@{}}Random Optimization  \\ method \end{tabular}}   &\multirow{2}{*}{1977} &\multirow{2}{*}{ Dorea }  &\multirow{2}{*}{\cite{Dorea1983}} & \multirow{2}{*}{OPT}  &\multirow{2}{*}{ No, stochastic in nature } \\&&&&&\\ \hdashline[0.5pt/5pt]
\multirow{2}{*}{\begin{tabular}[c]{@{}l@{}}Automatic  \\ Differentiation\end{tabular}}   &\multirow{2}{*}{1983} &\multirow{2}{*}{ Wengert }  &\multirow{2}{*}{\cite{Wengert1964ASA}} & \multirow{2}{*}{OPT}  &\multirow{2}{*}{ Yes } \\&&&&&\\ \hdashline[0.5pt/5pt]
\multirow{2}{*}{\begin{tabular}[c]{@{}l@{}}SPSA\end{tabular}}   &\multirow{2}{*}{1992} &\multirow{2}{*}{ Spall }  &\multirow{2}{*}{\cite{spall1992multivariate}} & \multirow{2}{*}{OPT}  &\multirow{2}{*}{ No, stochastic in nature } \\&&&&&\\ \hdashline[0.5pt/5pt]
\multirow{2}{*}{\begin{tabular}[c]{@{}l@{}}Nesterov Smoothing \end{tabular}}   &\multirow{2}{*}{2005} &\multirow{2}{*}{ Nesterov}  &\multirow{2}{*}{\cite{nesterov2005smooth}} & \multirow{2}{*}{OPT}  &\multirow{2}{*}{ Yes  } \\&&&&&\\ \hdashline[0.5pt/5pt]
\multirow{2}{*}{\begin{tabular}[c]{@{}l@{}}Two-point \\ Approximation\end{tabular}}   &\multirow{2}{*}{2015} &\multirow{2}{*}{ Duchi \emph{et. al.} }  &\multirow{2}{*}{\cite{duchi2015optimal}} & \multirow{2}{*}{OPT}  &\multirow{2}{*}{ No, stochastic in nature } \\&&&&&\\ \hdashline[0.5pt/5pt]
\multirow{2}{*}{\begin{tabular}[c]{@{}l@{}}Gaussian Smoothing  \end{tabular}}   &\multirow{2}{*}{2017} &\multirow{2}{*}{ Nesterov \& Spokoiny}  &\multirow{2}{*}{\cite{Nesterov2015}} & \multirow{2}{*}{OPT}  &\multirow{2}{*}{ No, stochastic in nature  } \\&&&&&\\ \hline

\multicolumn{6}{c}{\multirow{2}{*}{\bf Parameter Shift Rule (PSR)}} \\
\multicolumn{6}{c}{}  \\ \hline
\multirow{2}{*}{\begin{tabular}[c]{@{}l@{}}Gradient of expectation\\ values of observables\end{tabular}}   &\multirow{2}{*}{2005} &\multirow{2}{*}{ Khaneja \emph{et. al.} }  &\multirow{2}{*}{\cite{Khaneja2005}} & \multirow{2}{*}{QOC}  &\multirow{2}{*}{ First-order approximation } \\&&&&&\\ \hdashline[0.5pt/5pt]
\multirow{2}{*}{\begin{tabular}[c]{@{}l@{}}Gradient evalutation\\with two parameters\end{tabular}} & \multirow{2}{*}{2017} &  \multirow{2}{*}{ Li \emph{et. al.} }& \multirow{2}{*}{\cite{Li_2017}} & \multirow{2}{*}{QOC} & \multirow{2}{*}{ First-order approximation } \\&&&&&\\ \hdashline
\multirow{2}{*}{\begin{tabular}[l]{@{}l@{}}First paper on PSR\\with two parameters\end{tabular}} & \multirow{2}{*}{2018} & \multirow{2}{*}{ Mitarai \emph{et. al.} }&\multirow{2}{*}{\cite{Mitarai_2018}} &  \multirow{2}{*}{QML} &  \multirow{2}{*}{Yes} \\&&&&&\\ \hdashline[0.5pt/5pt]
\multirow{2}{*}{\begin{tabular}[l]{@{}l@{}}Extension of \cite{Mitarai_2018} for\\ more generic circuits\end{tabular}} & \multirow{2}{*}{2018} & \multirow{2}{*}{Vidal \& Theis}&\multirow{2}{*}{\cite{Theis2018}} &  \multirow{2}{*}{QML} &  \multirow{2}{*}{Yes} \\&&&&&\\ \hdashline[0.5pt/5pt]
\multirow{2}{*}{\begin{tabular}[c]{@{}l@{}}Follow up of \cite{Mitarai_2018}\\and slight generalization\end{tabular}} & \multirow{2}{*}{2019} & \multirow{2}{*}{ Schuld \emph{et. al.} }&\multirow{2}{*}{\cite{schuld2019evaluating}} & \multirow{2}{*}{QML} &\multirow{2}{*}{\begin{tabular}[c]{@{}c@{}}Yes for Pauli generators\\and no for  non-Pauli generators\end{tabular}} \\&&&&&\\ \hdashline[0.5pt/5pt]
\multirow{2}{*}{\begin{tabular}[c]{@{}l@{}}Follow up of \cite{schuld2019evaluating} allowing\\more generic gates\end{tabular}} & \multirow{2}{*}{2019} & \multirow{2}{*}{ Crooks }&\multirow{2}{*}{\cite{Crooks2019}} & \multirow{2}{*}{QML} &\multirow{2}{*}{\begin{tabular}[c]{@{}c@{}}Yes due to the Pauli decomposition\\of  non-Pauli generators\end{tabular}}\\&&&&&\\  \hdashline
\multirow{2}{*}{\begin{tabular}[l]{@{}l@{}}Extension of PSR for generators\\ with general eigenspectrum\end{tabular}} & \multirow{2}{*}{2021} & \multirow{2}{*}{ Mari \emph{et. al.} }&\multirow{2}{*}{\cite{Mari_2021}} & \multirow{2}{*}{QML}  & \multirow{2}{*}{Yes, iterative in nature}\\&&&&&\\\hdashline[0.5pt/5pt]
\multirow{2}{*}{\begin{tabular}[l]{@{}l@{}}Generalization of the PSR\\for higher-order derivatives\end{tabular}} & \multirow{2}{*}{2021} & \multirow{2}{*}{ Izmaylov \emph{et. al.} }&\multirow{2}{*}{\cite{izmaylov2021analytic}} & \multirow{2}{*}{QML}  & \multirow{2}{*}{Yes, iterative in nature}\\&&&&&\\\hdashline[0.5pt/5pt]
\multirow{2}{*}{\begin{tabular}[l]{@{}l@{}}Stochastic generalization of\\the original PSR\end{tabular}} & \multirow{2}{*}{2021} & \multirow{2}{*}{ Banchi \& Crooks }&\multirow{2}{*}{\cite{Banchi_2021}} & \multirow{2}{*}{QML}  & \multirow{2}{*}{No, stochastic in nature}\\&&&&&\\ \hdashline[0.5pt/5pt]
\multirow{2}{*}{\begin{tabular}[l]{@{}l@{}}Stochastic generalization for gates\\with more than two eigenvalues\end{tabular}} & \multirow{2}{*}{2021} & \multirow{2}{*}{ Wierichs \emph{et. al.} }&\multirow{2}{*}{\cite{Wierichs2021}} & \multirow{2}{*}{QML}  & \multirow{2}{*}{No, stochastic in nature}\\&&&&&\\ \hdashline[0.5pt/5pt]
\multirow{2}{*}{\begin{tabular}[l]{@{}l@{}}Fourier analysis for feasibility \\ and characterization of PSRs\end{tabular}} & \multirow{2}{*}{2021} & \multirow{2}{*}{ Theis }&\multirow{2}{*}{\cite{Theis2021}} & \multirow{2}{*}{OPT}  & \multirow{2}{*}{N/A}\\&&&&&\\ \hdashline[0.5pt/5pt]
\multirow{2}{*}{\begin{tabular}[l]{@{}l@{}}Four-term PSR for \\ fermionic Hamiltonians\end{tabular}} & \multirow{2}{*}{2021} & \multirow{2}{*}{ Kottmann \emph{et. al.} }&\multirow{2}{*}{\cite{Kottmann2021}} & \multirow{2}{*}{OPT}  & \multirow{2}{*}{Yes}\\&&&&&\\ \hdashline
\multirow{2}{*}{\begin{tabular}[l]{@{}l@{}}Quantum gradient computation\\with classical resources\end{tabular}} & \multirow{2}{*}{2022} & \multirow{2}{*}{ Koczor \&  Benjamin}&\multirow{2}{*}{\cite{koczor2022quantum}} & \multirow{2}{*}{QML}  & \multirow{2}{*}{No, classical computation}\\&&&&&\\ \hline
\end{tabular}
}
\caption{The evolution of parameter shift rule (PSR) approaches from the ancient (top) to the most recent (bottom). Acronyms used in context include QOC: Quantum Optimal Control, QML: Quantum Machine Learning, OPT: Optimization.}
\label{t:table1}
\end{table}

Several extensions to the ``original" parameter shift rules have been discovered, including stochastic versions, all of which we summarize in Table \ref{t:table1}. 
Let us review some of the landmark literature regarding the parameter shift rule. Parameter shift rules originate from the work on quantum optimal control \cite{Khaneja2005, Li_2017}. The ``original" parameter shift rules, as they are currently understood in the context of VQAs, were first derived in \cite{Mitarai_2018,Theis2018,schuld2019evaluating}. In Ref. \cite{Crooks2019}, generalized the ``original" work for operators that are not given as Pauli strings by using gate decomposition to simpler gates. Similar generalizations were studied in Ref. \cite{Mari_2021} while in Ref. \cite{izmaylov2021analytic} parameter shift rules were derived for higher-order derivatives (Hessian, etc.). Ref. \cite{Banchi_2021} produced an stochastic generalization of the parameter shift rules that was claimed to be unbiased due to the inherent randomness of the quantum measurements. Using the quantum Fourier transform, Wierichs \emph{et. al.} proposed a generalized parameter shift rule (abbreviated as PSR in Fig. \ref{fig:PSR}) that is capable of significantly reducing the number of circuit evaluations required for the computation of the gradients with respect to gate parameters \cite{Wierichs2021}. Ref. \cite{Theis2021} discussed optimization/feasibility aspects of parameter shift rules using Fourier analysis. Finally, Ref. \cite{koczor2022quantum} proposed a classical algorithm for computing quantum gradients based on a conservative number of circuit evaluations at nearby points to a reference point.

\begin{figure}[tb!]
\centering
\begin{tikzpicture}[scale=0.8, every node/.style={scale=0.8}]
\draw (0,9.5) node[draw,text width=4cm,align=center, very thin] {CLASSICAL PRECURSORS};
\node[rectangle,
    draw = black,
    text = black,
    minimum width = 18cm, 
    minimum height = 3cm,very thin] (r) at (0,7) {};
    
\draw[very thin,dashed, gray] (0,5.5) -- (0,8.5) node[right] {};

\draw (-4,7) node {Autodiff (Wengert,~1964)};
\draw (4,7.5) node[align=center,text=black] {Random optimization method (Dorea,~1983)};
\draw (4.4,6.5) node[align=center,text=black] {Gaussian~smoothing (Nesterov~\&~Spokoiny,~2017)};

\draw[] (-5,9.5) node[text width=4cm,align=center, very thin, gray] {Determenistic algorithms};
\draw[] (5,9.5) node[text width=4cm,align=center, very thin, gray] {Stochastic algorithms};
\draw [-stealth,gray](-5,9.2) -- (-5,8.6);
\draw [-stealth,gray](5,9.2) -- (5,8.6);
\draw [-stealth,gray](-5,5.2) -- (-5,4.0);
\draw [-stealth,gray](5,5.2) -- (5,4.0);
\draw (0,4.5) node[draw,text width=4cm,align=center, very thin] {PARAMETER SHIFT RULES (PSRs)};
\node[rectangle,
    draw = black,
    text = black,
    minimum width = 18cm, 
    minimum height = 8cm,very thin] (r) at (0,-0.5) {};

\draw[line width=0.01mm,dashed,gray] (0,-4.5) -- (0,3.5) node[right] {};
\draw (-4,2.5) node[text width=6cm,align=center,text=black] {Original Deterministic PSR \cite{Mitarai_2018}\\[0.5em] $\sim \braket{B(x)}-\braket{B(-x)}$};
\draw (4,2.5) node[text width=6cm,align=center,text=black] {Original Stochastic PSR \cite{Banchi_2021} \\[0.5em] $\sim  \int_0^1(\braket{B(x)}-\braket{B(-x)}) dx $}; \node[rectangle,
    fill = green,
    opacity = 0.1,
    draw = green,
    text = purple,
    minimum width = 18cm, 
    minimum height = 2.0cm] (r) at (0,2.5) {}; 

\draw (-4,0.5) node[text width=8cm,align=center,text=black] {Four-term PSR (fermionic $H$) \cite[Eq. (22)]{Kottmann2021} \\[0.5em] $\sim \sum_{a \in \{+,- \}} c_\alpha\Big(\braket{B(x_\alpha)}-\braket{B(-x_\alpha)}\Big)$};
\draw (-4,-1.5) node[text width=8cm,align=center,text=black] {General PSR (equidistant) \cite{Wierichs2021} \\[0.5em] $\sim \sum_{\mu=1}^{2R} \mathbb{E}(x_\mu)f(\mu)$};
\node[text width=8cm,align=center,text=black] at (-4,-3.2) {General PSR \cite{Wierichs2021} \\[0.5em] $\sim \sum_{\mu=1}^Ry_\mu\Big( {\mathbb{E}(x_\mu)}-\braket{\mathbb{E}(-x_\mu)} \Big) $ };  
General PSR
\node[text width=8cm,align=center,text=black] at (4,-0.5) {General Stochastic PSR \cite{Wierichs2021}\\[0.8em] $ \sim \int_{0}^{1} \sum_{\mu=1}^{R} y_{\mu}\left[\mathbb{E}_{\mu}\left(x_{0}, t\right)-\mathbb{E}_{-\mu}\left(x_{0}, t\right)\right] \mathrm{d} t$\\[0.8em]  $E_{\pm \mu}\left(x_{0}, t\right):=\langle B\rangle_{U_{F}\left(t x_{0}\right) U\left(\pm x_{\mu}\right) U_{F}\left((1-t) x_{0}\right)|\psi\rangle}$ }; 
\end{tikzpicture}
\caption{The evolution of parameter shift rule (PSR) approaches loosely based on \cite[Fig. 1]{Wierichs2021}, from the ancient (top) to the most recent (bottom). Our methods apply to the PSR approaches highlighted in green. Following \cite{Wierichs2021}, for an observable $B$, we denote by $\braket{B(x)}$, that is, the expectation value at $x\in \mathbb{R}^n$. In the formula for the equidistant PSR, $x_\mu = \tfrac{2\mu-1}{2R}\pi$ and $f(\mu) = \tfrac{(-1)^{\mu-1}}{4R\sin(x_\mu/2)}$ (see Ref. \cite[App. A]{Wierichs2021}). In the equidistant General PSR, the General PSR and the General Stochastic PSR $\mathbb{E}_\mu$ is a function of $\braket{B(x_{\mu})}$. Specifically,  
$\mathbb{E}(x) \coloneqq \left\langle\psi\left|U^{\dagger}(x) B U(x)\right| \psi\right\rangle$ while for the General Stochastic PSR, the actual definition is provided below the corresponding PSR 
(for the precise definitions see Ref. \cite{Wierichs2021}). Finally, in the above, $U_F$ denotes the stochastic ``unitary" $\exp{i(H+F)}$, with $H$ the Hermitian generator with eigenvalues in the set $\{+1,-1\}$, $F$ any other Hermitian (that does not have to commute with $G$), and $R \in \mathbb{N}$.
    }
    \label{fig:PSR}
\end{figure}

\subsection{Derivative-Free (Zero-Order) Optimization}\label{sec:derfree}

The parameter-shift approach specific to VQA, as well as other popular algorithms for minimizing VQA objectives such as SPSA, fall under the class of derivative-free, or zero-order, optimization algorithms, meaning that they only use function evaluations, rather than analytical gradient and higher derivative computations, in the iterative procedure of solving the problem. 

In the classical optimization literature \cite{conn2009introduction}, this consists of two classes of strategies. The first class is mesh-based methods that have a set of predefined directions to consider at each iteration, with the function evaluated at some number of such directions and compared to each other and the original point. Alternatively, the so-called model-based methods use function evaluation information to approximate derivatives and higher-order information, and subsequently use some algorithm similar to classical derivative-based optimization procedures, such as trust region and gradient descent. There are a number of possible techniques to approximate a gradient using function evaluations, many of which are compared in~\cite{berahas2021theoretical}. The two we consider in this paper, the two-point function approximation and SPSA, are classic in their widespread use, in particular for the problems of interest in this paper, despite, however, their overall weaker precision and performance compared to alternatives as found in this work. 

The SPSA method, introduced in~\cite{spall1992multivariate} is defined to compute a stochastic gradient approximation at $x$ by evaluating the noisy objective function at two random points. In particular, letting $\{\bs\Delta_k\}$ be an independent set of random variables with $\mathbb{E}\left[|\bs\Delta_k|_i^{-1}\right]\le D$ and $c_k$ a set of scalars with $c_k\to 0$, compute the gradient estimate with each component defined as,
\begin{equation}\label{eq:spsa}
    [\bs g]_i \sim \frac{F(\bs \theta_k+c_k\bs\Delta_k,\xi_k^{1})-F(\bs \theta_k-c_k\bs\Delta_k,\xi_k^{2})}{2c_k [\bs\Delta_k]_i}  
\end{equation}

This was compared to schemes based on simple finite difference estimations of the gradient and it was found both theoretically and numerically that SPSA could achieve similar performance in convergence speed by SA iteration, while requiring fewer function evaluations for the approximation at each iteration~\cite{chin1997comparative}.

We note the resemblance of this procedure to the notion of two-point function approximation popularized in~\cite{duchi2015optimal} inspired by ``Nesterov smoothing"~\cite{nesterov2005smooth}. See also~\cite{Nesterov2015} for a convergence analysis. In this case we evaluate,
\begin{equation}\label{eq:twopointNesterov}
    \bs g_k\sim \frac{F(\bs \theta_k+c_k\bs\Delta_k,\xi_k^{1})-F(\bs\theta_k,\xi_k^{2})}{c_k} \bs\Delta_k
\end{equation}
We shall analyze the performance of this method alongside SPSA because of its strong resemblance to relying on two function evaluations to estimate the gradient. We note that ultimately this latter method has received more attention in the literature in the past decade, and thus is more known. However, SPSA tends to be the algorithm of choice for quantum applications. With these conceptual trade-offs, we shall cover both approaches in the subsequent analysis and experiments.
A detailed overview of the available SPSA analyses, culminating in \cite{granichin2014simultaneous}, is presented in the Appendix. 

The two-point approximation~\cite{duchi2015optimal} suggests that the gradient can be approximated by~\eqref{eq:twopointNesterov}.
The technique is based on the Gaussian smoothing technique \cite{Dorea1983,Nesterov2015}, which defines a function for a smoothing parameter corresponding to the perturbation for the function evaluation, here $c_k$. Considering a generic $c>0$, we write:
\begin{equation}\label{eq:smoothedfunc}
    f_{c}(\bs\theta) = \frac{1}{(2\pi)^{\frac{n}{2}}}
    \int f(\bs\theta+c \bs \Delta)e^{-\frac{1}{2}\|\bs \Delta\|^2} d\bs \Delta = \mathbb{E}_{\bs \Delta\sim \mathcal{N}(0,I)}[f(\bs\theta+c \bs \Delta)]
\end{equation}
and satisfies some nice properties that shall be used in the sequel (see, e.g.,~\cite{nesterov2005smooth,ghadimi2013stochastic}) for $L$-Lipschitz smooth $f(\bs\theta)$,
\begin{subequations}\label{eq:litdiff}
\begin{align}
\nabla f_{c}(\bs\theta) &= \frac{1}{(2\pi)^{\frac{n}{2}}}\int \frac{f(\bs\theta+c \bs \Delta)-f(\bs\theta)}{c} \bs\Delta e^{-\frac{1}{2}\|\bs \Delta\|^2} d\bs\Delta \\
&|f_{c}(\bs\theta)-f(\bs\theta)| \le \frac{c^2}{2} Lp \\
& \|\nabla f_{c}(\bs\theta)-\nabla f(\bs\theta)\|\le \frac{c}{2}L(p+3)^{\frac{3}{2}}
\end{align}
\end{subequations}

Subsequently,~\cite{balasubramanian2018zeroth} contributed to supplying the convergence theory for the case that $f(\bs\theta)$ is generally smooth but nonconvex, and 1) the problem is constrained-- $\bs\theta$ is required to be in some compact set $\mathcal{X}$, or 2) the gradient $\nabla F(\bs\theta,\xi)$ is almost surely sparse. Furthermore, it is assumed that there is a global bound on the gradient norm.
In this setting, function evaluations are assumed to be unbiased, that is, for $\hat F\sim F(\bs\theta,\xi)$, we have $\mathbb{E}[\hat F]=f(\bs\theta)$.

\subsection{A Review of Asymptotic Results for SPSA}
We present the single work in the literature that we were able to find with guarantees for SPSA under conditions of biased but bounded noise on the function evaluations, namely~\cite{granichin2014simultaneous}. In this work, several assumptions are made regarding the noise and the algorithm, and an \emph{asymptotic} result is given, indicating the eventual convergence to an approximately optimal point. We note that this work considers a time-varying instance where the objective function to minimize is changing over time, and as such, we simplify the original results for the purposes of our work.

Now we review the assumptions in the paper, considering that an evaluation of the noise function $f$ generates a value in the form of~\eqref{eq:feval}, which for reference we repeat below:
\[
f' \sim F(\bs \theta,\xi) = f(\bs \theta)+n(\bs \theta,\xi),\,\xi\in\chi 
\]
\begin{assumption}\label{ass:31}\cite{granichin2014simultaneous}
It holds that
\begin{enumerate}
    \item It holds that $\mathbb{E}\left[\|(n(\bs \theta_k,\xi_k)-n(\bs \theta_{k-1},\xi_{k-1})\|^2\right]\le C_n$ for some $C_n>0$.
    \item The function $f(\bs \theta)$ is $\mu$-strongly convex, i.e., 
    \[
    \langle \bs\theta-\bs \theta^*,\mathbb{E}\left[\nabla F(\bs\theta,\xi)\right]\rangle \ge \mu \|\bs\theta-\bs \theta^*\|^2
    \]
    for some unique minimum $\bs \theta^*$, for all $\bs\theta$.
    \item It holds that for the unique minimum,
    \[
    \mathbb{E}\left[\left\|\nabla F(\bs \theta^*,\xi)\right\|^2\right] \le G
    \]
    for some $G>0$
    \item The gradient is $L$-Lipschitz for all $\xi$, that is,
    \[
    \|\nabla F(\bs\theta,\xi)-\nabla F(\bs\theta',\xi)\|\le L \|\bs\theta-\bs\theta'\|
    \]
\end{enumerate}
\end{assumption}
Among these assumptions, given that the objective function corresponds to the realization of a quantum circuit, and the parameters $\bs\theta$ rotate an element of a Lie group (the quantum gates), and each source of potential noise is bounded (see Section~\ref{sec:noise} above), assumptions related to boundedness of the variance clearly holds. In addition, from Section~\ref{sec:vqas} it is clear that all the objective function is continuously differentiable with respect to $\bs\theta$, being matrix exponentials, and thus possibly expressed as trigonometric functions, which are all globally Lipschitz continuous.

However, the second assumption, requiring strong convexity, is one that is rarely satisfied in our running example of minimizing functions that evaluate runs of quantum circuits. The imaginary exponentials, equivalently trigonometric functions, that are characteristic of VQA objectives are oscillatory in nature and thus exhibit both positive and negative curvature, thus violating this condition. 

The main Theorem in that work can be summarized as,
\begin{theorem}\cite[Theorem 1]{granichin2014simultaneous}
It holds that for all $\epsilon>0$ there exists some $K$ such that for $k\ge K$,
\[
\sqrt{\mathbb{E}\left\|\bs \theta_k-\bs \theta^*\right\|^2} \le M+\epsilon
\]
where $M$ depends on problem and algorithmic constants. 
\end{theorem}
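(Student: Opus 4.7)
The plan is to prove the result by a standard stochastic Lyapunov argument built around the quadratic potential $V_k := \|\bs\theta_k - \bs\theta^*\|^2$. Using the SPSA recursion $\bs\theta_{k+1} = \bs\theta_k - \alpha_k \bs g_k$, I would expand
\[
V_{k+1} = V_k - 2\alpha_k \langle \bs\theta_k - \bs\theta^*, \bs g_k\rangle + \alpha_k^2 \|\bs g_k\|^2,
\]
and take conditional expectations with respect to the filtration $\mathcal{F}_k$ generated by $\bs\theta_0,\dots,\bs\theta_k$. The aim is to show that, modulo an $O(\alpha_k^2)$ term and a non-vanishing bias contribution, $\mathbb{E}[V_{k+1}\mid\mathcal{F}_k]$ contracts toward a fixed neighborhood of $\bs\theta^*$.

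The key technical step is a decomposition of the SPSA estimator from Eq.~\eqref{eq:spsa} as
\[
\bs g_k = \nabla f(\bs\theta_k) + \bs\beta_k + \bs\eta_k,
\]
where $\bs\beta_k$ collects (i) the $O(c_k^2)$ finite-difference remainder from a second-order Taylor expansion of $f$ about $\bs\theta_k$, and (ii) the residue $\bigl(\bar n(\bs\theta_k+c_k\bs\Delta_k)-\bar n(\bs\theta_k-c_k\bs\Delta_k)\bigr)/(2c_k [\bs\Delta_k]_i)$ that survives because the state-dependent physical noise $\chi(\bs\theta)$ does not cancel between the two shifted evaluations, writing $\bar n(\bs\theta):=\mathbb{E}_\xi[n(\bs\theta,\xi)]$. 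Under Assumption~\ref{ass:31}, the first contribution is controlled by Lipschitz smoothness and shrinks with $c_k$, while the second is bounded uniformly in $k$ by a constant proportional to the bias magnitude $b$. The mean-zero term $\bs\eta_k$ has its conditional second moment dominated by $O(C_n/c_k^2)$ via the noise-difference bound in Assumption~\ref{ass:31}(1) combined with the moment assumption $\mathbb{E}[|\bs\Delta_k|_i^{-2}]\le D^2$.

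Invoking strong convexity through $\langle \bs\theta_k-\bs\theta^*,\nabla f(\bs\theta_k)\rangle\ge \mu V_k$, and applying Young's inequality to the cross term $-2\alpha_k \langle \bs\theta_k-\bs\theta^*,\bs\beta_k\rangle$ so as to absorb half of the strong-convexity gain, yields a recursion of the form
\[
\mathbb{E}[V_{k+1}\mid\mathcal{F}_k]\le (1-\alpha_k \mu)V_k + \frac{\alpha_k b^2}{\mu} + \alpha_k^2 R_k,
\]
where $R_k = O(C_n/c_k^2 + G)$ aggregates the conditional variance of $\bs g_k$. With $\{\alpha_k\}$ satisfying Eq.~\eqref{eq:sagain} and $\{c_k\}$ chosen so that $\alpha_k/c_k^2 \to 0$ summably, a Robbins--Siegmund lemma (or direct unrolling of the geometric-like recursion) produces
\[
\limsup_{k\to\infty} \mathbb{E}[V_k]\le \frac{b^2}{\mu^2} =: M^2,
\]
and identifying $M := b/\mu$ and choosing $K$ large enough that the transient and residual $\alpha_k^2 R_k$ tail contribute less than $\epsilon$ delivers the claimed bound.

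The main obstacle is coordinating the three sequences $\{\alpha_k\}$, $\{c_k\}$, and the bias and variance magnitudes so that the non-vanishing bias $b$ is isolated as the sole contributor to the asymptotic radius $M$. The $1/c_k^2$ inflation of the SPSA quotient variance forces a careful joint schedule with $\alpha_k$, which is where the Granichin-type analysis expends most of its effort. A secondary (and ultimately more restrictive) obstacle is the reliance on the strong-convexity hypothesis of Assumption~\ref{ass:31}(2): it is indispensable for the clean contraction above but is incompatible with the oscillatory trigonometric structure of typical VQA objectives, motivating the refined non-asymptotic, non-convex theory developed in Section~\ref{sec:newconv}.
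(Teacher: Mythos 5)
The paper does not prove this statement at all: it is quoted verbatim as \cite[Theorem~1]{granichin2014simultaneous}, with the surrounding text only summarizing the assumptions (Assumption~\ref{ass:31}) and remarking that the strong-convexity hypothesis is the one least compatible with VQA objectives. So there is no in-paper proof to compare against; what you have written is a reconstruction of the external argument. As such a reconstruction, your Lyapunov scheme with $V_k=\|\bs\theta_k-\bs\theta^*\|^2$, the three-way decomposition of the SPSA estimator, the use of strong convexity for contraction, Young's inequality on the bias cross term, and a Robbins--Siegmund / unrolling step is indeed the standard route to results of this type, and your closing remarks correctly identify why the paper then abandons this framework in Section~\ref{sec:newconv}.

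There is, however, one step that would fail as written. You assert that the residue $\bigl(\bar n(\bs\theta_k+c_k\bs\Delta_k)-\bar n(\bs\theta_k-c_k\bs\Delta_k)\bigr)/(2c_k[\bs\Delta_k]_i)$ is ``bounded uniformly in $k$ by a constant proportional to the bias magnitude $b$.'' With only a uniform bound $\|\bar n\|\le b$ (no smoothness of the bias in $\bs\theta$), this quotient is bounded by $O(b/c_k)$, and under the schedule $c_k\to 0$ that you also invoke (and that the paper states for SPSA after Eq.~\eqref{eq:spsa}), this term diverges rather than stabilizes. The resolution in the Granichin-type setting is either to keep $c_k$ bounded away from zero or to accept that the asymptotic radius $M$ scales like $b/c$ with $c$ the limiting perturbation size --- which is exactly what the paper's own non-asymptotic SPSA bound later exhibits through the term $b^2B_3^2/c_k^2$ in Theorem~\ref{th:spsa}. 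Consequently your identification $M=b/\mu$ cannot be right in general; $M$ must additionally carry the $1/c$ dependence (and the $O(c^2)$ Taylor remainder), which is presumably why the paper leaves $M$ as an unspecified function of ``problem and algorithmic constants.'' A smaller quibble: the tail condition you need is $\sum_k\alpha_k^2/c_k^2<\infty$ jointly with Eq.~\eqref{eq:sagain}, not ``$\alpha_k/c_k^2\to 0$ summably,'' though your intent is clear.
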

Thus for this case there is a mean squared error result asymptotically.

In this paper we seek to analyze the method to derive a non-asymptotic result, i.e., one that describes the rate of convergence, and do so applicably for nonconvex functions. For insight we turn to other related work.

\subsection{A Review of Guarantees for Two Point Function Approximation}\label{sec:reviewguarantees}
We consider now the SA iteration with the gradient approximation $\bs g_k$ given by~\eqref{eq:twopointNesterov}. 
The work~\cite{duchi2015optimal} presents convergence guarantees for the smooth convex case as well as well the convex but possibly nonsmooth set of functions.

Subsequently,~\cite{balasubramanian2018zeroth} contributed to supplying convergence theory for the case that $f(\bs\theta)$ is generally smooth but nonconvex, and 1) the problem is constrained-- $\bs\theta$ is required to be in some compact set $\mathcal{X}$, or 2) the gradient $\nabla F(\bs\theta,\xi)$ is almost surely sparse. Furthermore, it is assumed that there is a global bound on the gradient norm. In our context, the presence of $\mathcal{X}$ could correspond to some a priori knowledge in regards to bounds on the permissible $\bs\theta$, which would not be standard in practice, i.e., typically the $\bs\theta$ are taken to be unconstrained and arbitrary in $\mathbb{R}^p$.

With the Frank-Wolfe gap defined as $\bs g^k_G:=\langle \nabla f(\bs \theta_k),\bs \theta_k-\hat{\bs\theta}\rangle,\,\hat{\bs\theta}:=\argmin\limits_{\bs u\in\mathcal{X}}\langle \nabla f(\bs \theta_k),\bs u\rangle$, the work demonstrates in~\cite[Theorem 2.1]{balasubramanian2018zeroth} that $\mathbb{E}[\bs g^R_G]\le \frac{C}{\sqrt{N}}$, with $N$ the number of iterations and the constant $C$ depending on the noise, the gradient bound, and the size of the feasible set. The superscript $R$ indicates a randomly chosen combination of iterations. 

In this setting, function evaluations are assumed to be unbiased, i.e., for $\hat F\sim F(\bs\theta,\xi)$, it holds that $\mathbb{E}[\hat F]=f(\bs\theta)$. 

Finally, closest to the analysis framework we use here, the seminal work~\cite{ghadimi2013stochastic} considers both gradient and zero-order schemes for stochastic optimization (still unbiased, in both cases). In this case, one considers a fixed budget of $N$ iterations, and with $c_k\equiv c= O(1/\sqrt{N})$ and $\alpha_k\equiv \alpha = O(1/\sqrt{N})$ one obtains that the gradient at a particular randomly chosen combination of iterates is bounded by a term that is itself $O(1/\sqrt{N})$ for generally nonconvex functions plus an additive term indicating the bias due to the approximation accuracy. This is the same asymptotic rate as in the case of unbiased SGD, however, now with only an approximate stationarity guarantee, and larger constant terms appearing in the expressions for the convergence rate.

\subsection{A Review of Guarantees for Biased SGD}
Ref.~\cite{tadic2017asymptotic} explores the asymptotic guarantees of biased stochastic approximation under the model that in~\eqref{eq:biasedsg} we have $\bs n(\bs \theta_k,\xi_k)=\bs\zeta_k+\bs\eta_k$ with $\mathbb{E}[\bs\zeta_k]=0$ and $\limsup\|\eta_k\|<\infty$ and prove, depending on the regularity of the function, various bounds with regard to $\limsup\limits_{k\to\infty} \|\nabla f(\bs \theta_k)\|$ and $\dist(\bs \theta_k,\bs \Theta^*)$ with $\bs \Theta^*$ the set of stationary points. 

More recently, with quantitative bounds, the work in~\cite{ajalloeian2020convergence} considers the model for biased SGD~\eqref{eq:biasedsg} by:
\begin{equation}\label{eq:biasedsgassume}
\bs n(\bs\theta,\xi) = \bs b(\bs\theta)+\bs r(\bs\theta,\xi),\,\|\bs b(\bs\theta)\|\le m\|\nabla f(\bs\theta)\|^2+\zeta^2,\,\mathbb{E}_{\xi}\|\bs r(\bs\theta,\xi)\|^2\le M\|\nabla f(\bs\theta)+\bs b(\bs\theta)\|^2+\sigma^2
\end{equation}
they obtain:
\begin{lemma}\cite[Lemma 3]{ajalloeian2020convergence}
If $\alpha_k\equiv\alpha\le \frac{1}{(1+M)L}$ then it holds that after $K$ iterations,
\begin{align}\label{eq:step}
    \frac{1}{K}\sum\limits_{k=1}^K \mathbb{E}\|\nabla f(\bs \theta_k)\|^2
\le \frac{2F}{K\alpha(1-m)}+\frac{\alpha L\sigma^2}{1-m}+\frac{\zeta^2}{1-m},
\end{align}
where $F$ is the distance between $f(\bs \theta_0)$ and the local minimum attained.
\end{lemma}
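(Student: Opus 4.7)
The plan is to follow the standard nonconvex descent lemma template, adapted to accommodate the decomposition $\bs n(\bs\theta,\xi)=\bs b(\bs\theta)+\bs r(\bs\theta,\xi)$, with the key novelty being how the step size condition $\alpha\le 1/((1+M)L)$ is exactly what is needed to absorb the variance term into the descent. I will read the bias assumption in \eqref{eq:biasedsgassume} as $\|\bs b(\bs\theta)\|^2\le m\|\nabla f(\bs\theta)\|^2+\zeta^2$ (the squared form is needed for dimensional consistency and is the standard convention in \cite{ajalloeian2020convergence}).

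First, I would start from the $L$-smoothness descent inequality applied to the iterate $\bs\theta_{k+1}=\bs\theta_k-\alpha\bs g_k$, namely
\begin{equation*}
f(\bs\theta_{k+1})\le f(\bs\theta_k)-\alpha\langle\nabla f(\bs\theta_k),\bs g_k\rangle+\tfrac{L\alpha^2}{2}\|\bs g_k\|^2.
\end{equation*}
Taking the conditional expectation $\mathbb{E}_k$ over $\xi_k$ and using $\mathbb{E}_k[\bs g_k]=\nabla f(\bs\theta_k)+\bs b(\bs\theta_k)$ together with $\mathbb{E}_k\|\bs g_k\|^2=\|\nabla f(\bs\theta_k)+\bs b(\bs\theta_k)\|^2+\mathbb{E}_k\|\bs r(\bs\theta_k,\xi_k)\|^2$, the bound from \eqref{eq:biasedsgassume} on the variance of $\bs r$ yields
\begin{equation*}
\mathbb{E}_k\|\bs g_k\|^2\le (1+M)\|\nabla f(\bs\theta_k)+\bs b(\bs\theta_k)\|^2+\sigma^2.
\end{equation*}

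Next, the step size restriction $\alpha\le 1/((1+M)L)$ gives $\tfrac{L\alpha^2}{2}(1+M)\le \tfrac{\alpha}{2}$, so the quadratic-in-$\alpha$ term collapses into a linear one and after a cancellation of cross terms $\pm\alpha\langle\nabla f,\bs b\rangle$ one gets
\begin{equation*}
\mathbb{E}_k f(\bs\theta_{k+1})\le f(\bs\theta_k)-\tfrac{\alpha}{2}\|\nabla f(\bs\theta_k)\|^2+\tfrac{\alpha}{2}\|\bs b(\bs\theta_k)\|^2+\tfrac{L\alpha^2\sigma^2}{2}.
\end{equation*}
Invoking the bias bound $\|\bs b(\bs\theta_k)\|^2\le m\|\nabla f(\bs\theta_k)\|^2+\zeta^2$ absorbs part of the bias term into the gradient term, leaving a coefficient of $\alpha(1-m)/2$ in front of $\|\nabla f(\bs\theta_k)\|^2$; this is precisely the source of the $(1-m)$ denominators in the claimed bound.

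Finally, I would take total expectation, telescope from $k=1$ to $K$ using $\sum_k(\mathbb{E}f(\bs\theta_k)-\mathbb{E}f(\bs\theta_{k+1}))\le f(\bs\theta_0)-f^\star=F$, divide through by $K\alpha(1-m)/2$, and rearrange to obtain exactly \eqref{eq:step}. The main subtlety, and really the only non-bookkeeping step, is the combined use of Young's inequality on the cross term $-\alpha\langle\nabla f,\bs b\rangle$ and the step size condition: one has to pick the Young's split so that after adding the variance contribution $\tfrac{L\alpha^2}{2}(1+M)\|\nabla f+\bs b\|^2$ the cross terms cancel cleanly rather than proliferate, which is what makes the constant $(1-m)^{-1}$ tight in the final estimate.
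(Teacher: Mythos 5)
Your proof is correct and follows the same descent-lemma template the paper itself uses in its appendix proofs; note that the paper does not actually prove this lemma (it is imported verbatim from the cited reference), and your argument reproduces the standard one there, including the correct reading of the bias assumption as $\|\bs b(\bs\theta)\|^2\le m\|\nabla f(\bs\theta)\|^2+\zeta^2$ (the paper's Eq.~\eqref{eq:biasedsgassume} drops the square). One small point: no Young's inequality is actually needed on the cross term --- once $\tfrac{L\alpha^2}{2}(1+M)\|\nabla f(\bs\theta_k)+\bs b(\bs\theta_k)\|^2$ is bounded by $\tfrac{\alpha}{2}\|\nabla f(\bs\theta_k)+\bs b(\bs\theta_k)\|^2$ and expanded, the resulting $+\alpha\langle\nabla f(\bs\theta_k),\bs b(\bs\theta_k)\rangle$ cancels the $-\alpha\langle\nabla f(\bs\theta_k),\bs b(\bs\theta_k)\rangle$ from the inner-product term identically, which is exactly the clean cancellation your intermediate inequality already reflects.
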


\begin{theorem}\cite[Theorem 4]{ajalloeian2020convergence} \label{th:32}
For $\epsilon>0$, if $\alpha=\min\left\{\frac{1}{(1+M)L},\frac{\epsilon(1-m)+\zeta^2}{2L\sigma^2}\right\}$, then with $K=\mathcal{O}\left(\frac{M+1}{\epsilon(1-m)+\zeta^2}+\frac{\sigma^2}{\epsilon^2(1-m)^2+\zeta^4}\right)LF$ iterations, it holds that
\[
\frac{1}{K}\sum\limits_{k=1}^K\mathbb{E} \|\nabla f(\bs \theta_k)\|^2
=\mathcal{O}\left(\epsilon+\frac{\zeta^2}{1-m}\right).
\]
\end{theorem}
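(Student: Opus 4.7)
The plan is to invoke Lemma 3.1 directly and then tune $\alpha$ and $K$ to make each term on its right-hand side of order $\epsilon + \zeta^2/(1-m)$. Recall the per-iteration bound
\[
\frac{1}{K}\sum_{k=1}^K \mathbb{E}\|\nabla f(\bs\theta_k)\|^2 \le \frac{2F}{K\alpha(1-m)} + \frac{\alpha L\sigma^2}{1-m} + \frac{\zeta^2}{1-m}.
\]
The last term is unavoidable and already contributes $\zeta^2/(1-m)$ to the final error; the job is to pick $(\alpha, K)$ so that the first two terms together contribute at most a constant multiple of $\epsilon + \zeta^2/(1-m)$.

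First I would address the \emph{variance} term. Substituting the candidate $\alpha = (\epsilon(1-m)+\zeta^2)/(2L\sigma^2)$ gives $\alpha L\sigma^2/(1-m) = \epsilon/2 + \zeta^2/(2(1-m))$, which is of the desired order. However, Lemma 3.1 requires $\alpha \le 1/((1+M)L)$, so the admissible choice is the stated $\alpha = \min\{1/((1+M)L),\, (\epsilon(1-m)+\zeta^2)/(2L\sigma^2)\}$; in either branch, $\alpha L\sigma^2/(1-m) \le \epsilon/2 + \zeta^2/(2(1-m))$ holds, since in the first branch the minimum is smaller than the second candidate.

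Next I would impose $2F/(K\alpha(1-m)) \le \epsilon + \zeta^2/(1-m)$, i.e.\ $K \ge 2F/[\alpha(1-m)(\epsilon + \zeta^2/(1-m))] = 2F/[\alpha(\epsilon(1-m)+\zeta^2)]$. Splitting on which branch of the $\min$ is active,
\[
K \ge \frac{2F(1+M)L}{\epsilon(1-m)+\zeta^2} \quad\text{and}\quad K \ge \frac{4FL\sigma^2}{(\epsilon(1-m)+\zeta^2)^2}
\]
respectively, so it suffices to take $K$ at least the sum of these two quantities, which handles both cases simultaneously.

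The last step is cosmetic: rewrite the second denominator using the elementary equivalence $(\epsilon(1-m)+\zeta^2)^2 \asymp \epsilon^2(1-m)^2 + \zeta^4$ (the upper bound follows from $(a+b)^2 \le 2(a^2+b^2)$, the lower from dropping the cross term), which turns the requirement into $K = \mathcal{O}((M+1)/(\epsilon(1-m)+\zeta^2) + \sigma^2/(\epsilon^2(1-m)^2+\zeta^4))\, LF$, exactly as stated. The only step that is not entirely routine is this final equivalence, together with the case split that lets one write a single closed-form expression covering both branches of the step-size truncation; the calculation is otherwise a direct optimization of the Lemma 3.1 bound.
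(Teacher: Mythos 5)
Your derivation is correct: the case analysis on the two branches of the step-size truncation, the observation that $1/\alpha$ is bounded by the sum of the two reciprocals, and the equivalence $(\epsilon(1-m)+\zeta^2)^2 \asymp \epsilon^2(1-m)^2+\zeta^4$ all check out, and together they yield the stated $K$ and the $\mathcal{O}(\epsilon+\zeta^2/(1-m))$ bound. Note that this paper does not prove the statement at all --- it quotes it verbatim from the cited reference \cite{ajalloeian2020convergence} --- and your argument is precisely the standard tuning of the preceding lemma's bound that the original source uses, so there is nothing to flag.
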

For completeness, we shall also mention~\cite{pmlr-v99-karimi19a} which presents a (complementary) non-asymptotic analysis of the convergence guarantees associated with biased SGD under a similar model. 

Thus for circuits in which the gradient is computed analytically these results indicate the ultimate performance and guarantees in regards to the algorithm's output. As an insight into the sequel, we can expect that using zero-order methods to approximately estimate the gradient shall result in at least as similar perturbations, as in relative inaccuracy to stationarity.


\section{New Iteration Complexity Results} \label{sec:newconv}

Our main result is the analysis of iteration complexity of 
the two-point approximation and the SPSA algorithm under the setting of biased function evaluations. We remind the reader that we consider the estimation as based on the expansion given in~\eqref{eq:feval}. That is: the biased function evaluation $f'$ is the true function $f(\bs \theta)$ corrupted by noise $n(\bs \theta,\xi)$: 
\[
f' \sim F(\bs \theta,\xi) = f(\bs \theta)+n(\bs \theta,\xi),\,\xi\in\chi(\bs\theta),
\]
 wherein $\bs \theta \in \mathbb{R}^p$ are the deterministic VQA parameters and $\xi \in \mathbb{R}$ is the sample generated from the random field, as in Eqs. \eqref{eq:optprob} and \eqref{eq:feval}.

We make two assumptions. 
First, we assume the noise $n(\bs \theta_k,\xi_k)$  associated with a function evaluation
is a mixture
of a mean-zero component $r_k$ and a biased component $b_k$, wherein the bias is bounded from above $\|b_k\|\le b$, almost surely (a.s.). 

\begin{assumption}\label{as:boundbias}
The function evaluation~\eqref{eq:feval} satisfies,
\begin{equation}\label{eq:biasas}
    n(\bs \theta_k,\xi_k) = r_k+b_k,\text{ with }\mathbb{E}[r_k]= 0,\,\mathbb{E}\|r_k\|^2\le \sigma^2,\text{ and }\|b_k\|\le b\,\,\, {\rm a.s.}
\end{equation}
\end{assumption}
{Indeed, recalling the form for the objective function for VQA, starting from the discussion from~\eqref{eq:U},
\[
C(\bs\theta) = \braket{\psi_0 | U^\dagger(\bs\theta) O U(\bs\theta) |\psi_0},\text{ with }U(\bs \theta) = U_p(\theta_p) \ldots U_1(\theta_1)\text{ and }U_i(\theta_i) = W_i\exp(-\im \theta_i H_i)
\]
In considering noise as an error that takes place at one of the gates, this would correspond to a change of $H_i$. Note that regardless, the modulus of the exponential term is one, $\vert \exp(-\im \theta_i \tilde H_i)\vert=1$ for any perturbed $\tilde H_i$. For illustration, let us compute the error corresponding to an evaluation of a three layer circuit, with a gate error in the middle layer:
\begin{tiny}
\[
\begin{array}{l}
\braket{\psi_0 | W_3\exp(\im \theta_3 H_3)W_2\exp(\im \theta_2 \tilde H_2)W_1\exp(\im \theta_1 H_1) O W_3\exp(-\im \theta_3 H_3)W_2\exp(-\im \theta_2 \tilde H_2)W_1\exp(-\im \theta_1 H_1) |\psi_0} \\
- \braket{\psi_0 | W_3\exp(\im \theta_3 H_3)W_2\exp(\im \theta_2  H_2)W_1\exp(\im \theta_1 H_1) O W_3\exp(-\im \theta_3 H_3)W_2\exp(-\im \theta_2  H_2)W_1\exp(-\im \theta_1 H_1) |\psi_0} \\ 
= \braket{\psi_0 | W_3\exp(\im \theta_3 H_3)W_2\exp(\im \theta_2 \tilde H_2)W_1\exp(\im \theta_1 H_1) O W_3\exp(-\im \theta_3 H_3)W_2\exp(-\im \theta_2 \tilde H_2)W_1\exp(-\im \theta_1 H_1) |\psi_0} \\
-  \braket{\psi_0 | W_3\exp(\im \theta_3 H_3)W_2\exp(\im \theta_2 H_2)W_1\exp(\im \theta_1 H_1) O W_3\exp(-\im \theta_3 H_3)W_2\exp(-\im \theta_2 \tilde H_2)W_1\exp(-\im \theta_1 H_1) |\psi_0} \\
+ \braket{\psi_0 | W_3\exp(\im \theta_3 H_3)W_2\exp(\im \theta_2 H_2)W_1\exp(\im \theta_1 H_1) O W_3\exp(-\im \theta_3 H_3)W_2\exp(-\im \theta_2 \tilde H_2)W_1\exp(-\im \theta_1 H_1) |\psi_0} \\ 
- \braket{\psi_0 | W_3\exp(\im \theta_3 H_3)W_2\exp(\im \theta_2 H_2)W_1\exp(\im \theta_1 H_1) O W_3\exp(-\im \theta_3 H_3)W_2\exp(-\im \theta_2 H_2)W_1\exp(-\im \theta_1 H_1) |\psi_0} \\
\le \braket{\psi_0 | W_3\exp(\im \theta_3 H_3)W_2(\exp(\im \theta_2 \tilde H_2)-\exp(\im \theta_2 H_2))W_1\exp(\im \theta_1 H_1) O W_3\exp(-\im \theta_3 H_3)W_2\exp(-\im \theta_2 \tilde H_2)W_1\exp(-\im \theta_1 H_1) |\psi_0} \\
+ \braket{\psi_0 | W_3\exp(\im \theta_3 H_3)W_2\exp(\im \theta_2 H_2)W_1\exp(\im \theta_1 H_1) O W_3\exp(-\im \theta_3 H_3)W_2(\exp(-\im \theta_2 \tilde H_2)-\exp(-\im \theta_2  H_2))W_1\exp(-\im \theta_1 H_1) |\psi_0} 
\end{array}
\]
\end{tiny}
}

{
Recalling that $W_i\in U(d)$, and operations $\exp(i\theta H)$ are norm preserving, we can obtain a bound as based on the operation error. Namely, the observation operator $O$ is the one component of the function that can conscript or enlarge the norm of the input. Thus, an explicit bound can be characterized as,
\[
b= 2\|\exp(i\theta H)-\exp(i\theta\tilde H)\|\sigma^M(O)
\]
where $\sigma^M(O)$ is the largest eigenvalue of $O$, which is of course constant for a given problem, producing the upper bound. Any statistically symmetric additional noise can be considered as incorporated into $r_k$.
}

Second, we have the standard assumption on the Lipschitz continuity of the gradient. {From the equations above explicitly detailing the function evaluations, it is clear that the expressions are entirely analytic in the parameters $\theta$:}
\begin{assumption}\label{as:lipgrad}
The gradient is $L$-Lipschitz for all $\xi$, that is,
    \begin{align}
    \label{eq:lipgrad}
    \|\nabla F(\bs\theta,\xi)-\nabla F(\bs\theta',\xi)\|\le L \|\bs\theta-\bs\theta'\|
    \end{align}
\end{assumption}

\textcolor{black}{To comment on the applicability to VQA: a Lipschitz constant on the gradient of a function corresponds to the upper bound on the largest Hessian singular value. Note that the use of $\xi$ in the definition is ideal noise. In this case, recalling that each gate is norm preserving, in between a map from and onto the Boch sphere, the Lipschitz bound would be $L=\sigma^M(O)$.}

\paragraph{Non-Asymptotic Convergence Guarantees for Biased Two-Point Approximation}

We now state our first main result on the convergence rate of the two-point approximation with bias. 
In particular, we consider two regimes:
\begin{enumerate}
    \item[\emph{(1)}] one in which we analyze the performance after a given number $K$ of iterations,
    \item[\emph{(2)}] one in which we analyze the performance of a mixture $\bs \theta_R$ of the iterates with a diminishing step size. That is, $\bs \theta_R$ is a random variable where each iterate (out of $K$ iterates, with $K$ countable) is sampled with a probability that depends on the diminishing step size. 
\end{enumerate}
Across both regimes, we bound the expectation of a norm of the gradient, which could be seen as a measure of the distance to the nearest critical point. 
Across both regimes, the bound involves three terms: 
\begin{itemize}
    \item one that is independent of the noise and depends only on the modulus $L$ of Lipschitz continuity of the gradient \eqref{eq:lipgrad} 
    and a certain measure of the objective range $4(f_c(\bs \theta_k)-f^*_{c})$, where $f^*_{c}$ is a lower bound on the objective
    \item one that depends on the variance $\sigma^2$ \eqref{eq:biasas} of the expectation-zero noise \eqref{eq:biasas} and on the magnitude $b$ of the bias \eqref{eq:biasas}, 
    \item one that depends on the magnitude $b$ of the bias \eqref{eq:biasas}.
\end{itemize}
The first two are seen to asymptotically decrease to zero with the iterations, while the last presents a lower bound on the measure of optimality. Let us recall the smoothed function $f_c$ from~\eqref{eq:litdiff}. 

\begin{theorem}\label{th:twopoint}
Consider the two-point function approximation algorithm with iteration~\eqref{eq:saiter} for a problem that meets the Assumptions~\ref{as:boundbias} and~\ref{as:lipgrad}. 

\begin{enumerate}
    \item 
Consider a fixed budget of $K$ iterations. Let $\alpha_k\equiv\alpha/\sqrt{K}$ with $\alpha\le L/4$. It holds that the iterates satisfy:
\begin{equation}\label{eq:totalconvfixed0}
   \min\limits_{k\in[K]} \mathbb{E}\|\nabla f(\bs \theta_k)\|^2 \le \frac{4(f_c(\bs \theta_k)-f^*_{c})}{\alpha\sqrt{K}} +\frac{2L\alpha(\sigma^2+\frac{8b^2}{c^2})}{\alpha\sqrt{K}} +\left[\frac{2b^2}{c^2}+\frac{c^2}{2}L(p+3)^3\right]
\end{equation}

\item Consider a diminishing stepsize of $\alpha_k=\alpha/k^{\gamma}$ with $\gamma\in (1/2,1]$ and $\alpha\le L/4$. Then it holds that, taking,
\[
\bs \theta_R\sim \bs \theta_k\text{ w.p. } \alpha_k/\sum\limits_{l=1}^K \alpha_l
\]
we have,
\begin{equation}\label{eq:totalconvdim0}
   \mathbb{E}\|\nabla f(\bs \theta_R)\|^2 \le \frac{4(f_c(\bs \theta_k)-f^*_{c})}{\alpha K^\gamma} +\frac{2L\alpha(\sigma^2+\frac{8b^2}{c^2})}{\alpha K^{\gamma}} +\left[\frac{2b^2}{c^2}+\frac{c^2}{2}L(p+3)^3\right]
\end{equation}
\end{enumerate}
\end{theorem}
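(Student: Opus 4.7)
The plan is to run the standard SGD-style descent analysis on the smoothed objective $f_c$ (rather than $f$ itself), since the two-point estimator~\eqref{eq:twopointNesterov} is, up to noise, an unbiased estimator of $\nabla f_c$ via the Gaussian-smoothing identity in~\eqref{eq:litdiff}. Concretely, starting from the $L$-Lipschitz descent inequality
\[
f_c(\bs\theta_{k+1}) \le f_c(\bs\theta_k) - \alpha_k \langle \nabla f_c(\bs\theta_k), \bs g_k\rangle + \tfrac{L\alpha_k^2}{2}\|\bs g_k\|^2,
\]
I would decompose the estimator as $\bs g_k = \bs g_k^{0} + \frac{r_k^1-r_k^2}{c}\bs\Delta_k + \frac{b_k^1-b_k^2}{c}\bs\Delta_k$, where $\bs g_k^{0}$ is the noise-free two-point gradient (unbiased for $\nabla f_c(\bs\theta_k)$), and the two remaining terms collect the mean-zero stochastic noise and the systematic bias respectively, as per Assumption~\ref{as:boundbias}.

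Next, I would take conditional expectation given $\bs\theta_k$. The inner product reduces to $\|\nabla f_c(\bs\theta_k)\|^2$ plus a bias correction whose norm is controlled by $\|b_k^1-b_k^2\|/c \le 2b/c$ times $\mathbb{E}\|\bs\Delta_k\|$; Young's inequality absorbs half of this into $\tfrac12\|\nabla f_c(\bs\theta_k)\|^2$ and leaves a residual of order $b^2/c^2$. For the second moment, I would split $\mathbb{E}\|\bs g_k\|^2$ into three pieces: the noise-free contribution, bounded via the standard Nesterov smoothing estimate by a multiple of $\|\nabla f(\bs\theta_k)\|^2$ plus $c^2 L^2 (p+3)^3$; the stochastic variance contribution of order $\sigma^2/c^2$; and the bias contribution of order $b^2/c^2$. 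This is what produces the $\sigma^2 + 8 b^2/c^2$ structure inside the $L\alpha$ coefficient in~\eqref{eq:totalconvfixed0}--\eqref{eq:totalconvdim0}.

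Combining these two pieces and using the stepsize condition $\alpha \le L/4$ to ensure the quadratic term does not overwhelm the linear term yields a per-step decrease of the form
\[
\mathbb{E}[f_c(\bs\theta_{k+1})\mid\bs\theta_k]
\le f_c(\bs\theta_k) - \tfrac{\alpha_k}{2}\|\nabla f_c(\bs\theta_k)\|^2 + \tfrac{L\alpha_k^2}{2}\left(\sigma^2+\tfrac{8b^2}{c^2}\right) + \alpha_k\,\tfrac{b^2}{c^2}.
\]
Telescoping from $1$ to $K$, using $f_c(\bs\theta_{K+1}) \ge f_c^*$, and dividing by $\sum_l \alpha_l$ produces a weighted-average bound on $\mathbb{E}\|\nabla f_c(\bs\theta_k)\|^2$. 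Finally I would translate the bound from $\nabla f_c$ to $\nabla f$ using~\eqref{eq:litdiff}(c), which contributes the residual $\tfrac{c^2}{2}L(p+3)^3$ term in the statement. For part (1), the constant stepsize $\alpha_k \equiv \alpha/\sqrt{K}$ makes $\min_k \mathbb{E}\|\nabla f(\bs\theta_k)\|^2$ at most the uniform average, giving~\eqref{eq:totalconvfixed0}; for part (2), the randomization $\bs\theta_R$ is engineered precisely so that $\mathbb{E}\|\nabla f(\bs\theta_R)\|^2$ equals the weighted average, and bounding $\sum_l \alpha_l$ from below by a multiple of $K^{1-\gamma}$ together with $\alpha_k^2$-summability for $\gamma > 1/2$ yields~\eqref{eq:totalconvdim0}.

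The main obstacle is the second-moment estimate for $\bs g_k$ under biased evaluations: unlike the unbiased two-point analysis in~\cite{ghadimi2013stochastic}, one cannot pair $r_k^1$ with $r_k^2$ to cancel variance, and the bias differences $b_k^1-b_k^2$ do not average out against $\bs\Delta_k$. Both effects enter divided by $c^2$, so the bookkeeping must carefully track the competition between the $b^2/c^2$ penalty and the smoothing bias $c^2 L(p+3)^3$, which is the structural feature ruling out any choice of $c$ that simultaneously cancels both and thus forces the asymptotic neighborhood of stationarity that constitutes the final bracketed term of the bound.
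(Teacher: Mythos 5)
Your proposal follows essentially the same route as the paper's proof: a descent lemma on the smoothed objective $f_c$, a decomposition of the two-point estimator into an unbiased estimate of $\nabla f_c$ plus a mean-zero term plus a bias of norm at most $2b/c$ (the paper's Lemma~\ref{lem:accuracy}), Young's inequality on the bias cross-term, a second-moment bound on $\bs g_k$, telescoping under the two step-size regimes, and the final conversion from $\nabla f_c$ to $\nabla f$ via the Gaussian-smoothing estimate. The one place your bookkeeping diverges is that you (correctly) track the evaluation-noise variance as $O(\sigma^2/c^2)$, whereas the paper's Lemma-based decomposition absorbs it into a flat $\sigma^2$ term matching the displayed bound; your version would replace $\sigma^2$ by $\sigma^2/c^2$ in \eqref{eq:totalconvfixed0}--\eqref{eq:totalconvdim0}, which is a constant-level discrepancy with the statement rather than a structural difference in the argument.
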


The proof is given in Appendix \ref{app:proofs} on page \pageref{app:proofs}.
Note that this implies that regardless of the number of iterations performed, the measure of stationarity will always have a lower bound given by the right-hand side (``effect of bias''), which is:
\begin{align}
\label{eq:effbias}
    \frac{2b^2}{c^2}+\frac{c^2}{2}L(p+3)^3.
\end{align}
When we minimize the effect of bias \eqref{eq:effbias}, as a function of $c$, we obtain the following:
\begin{align}
    c^* =\left[\frac{4b^2}{L(p+3)^3}\right]^{1/4}.
\end{align}

\begin{figure}[]
    \centering
    \includegraphics[scale=0.25]{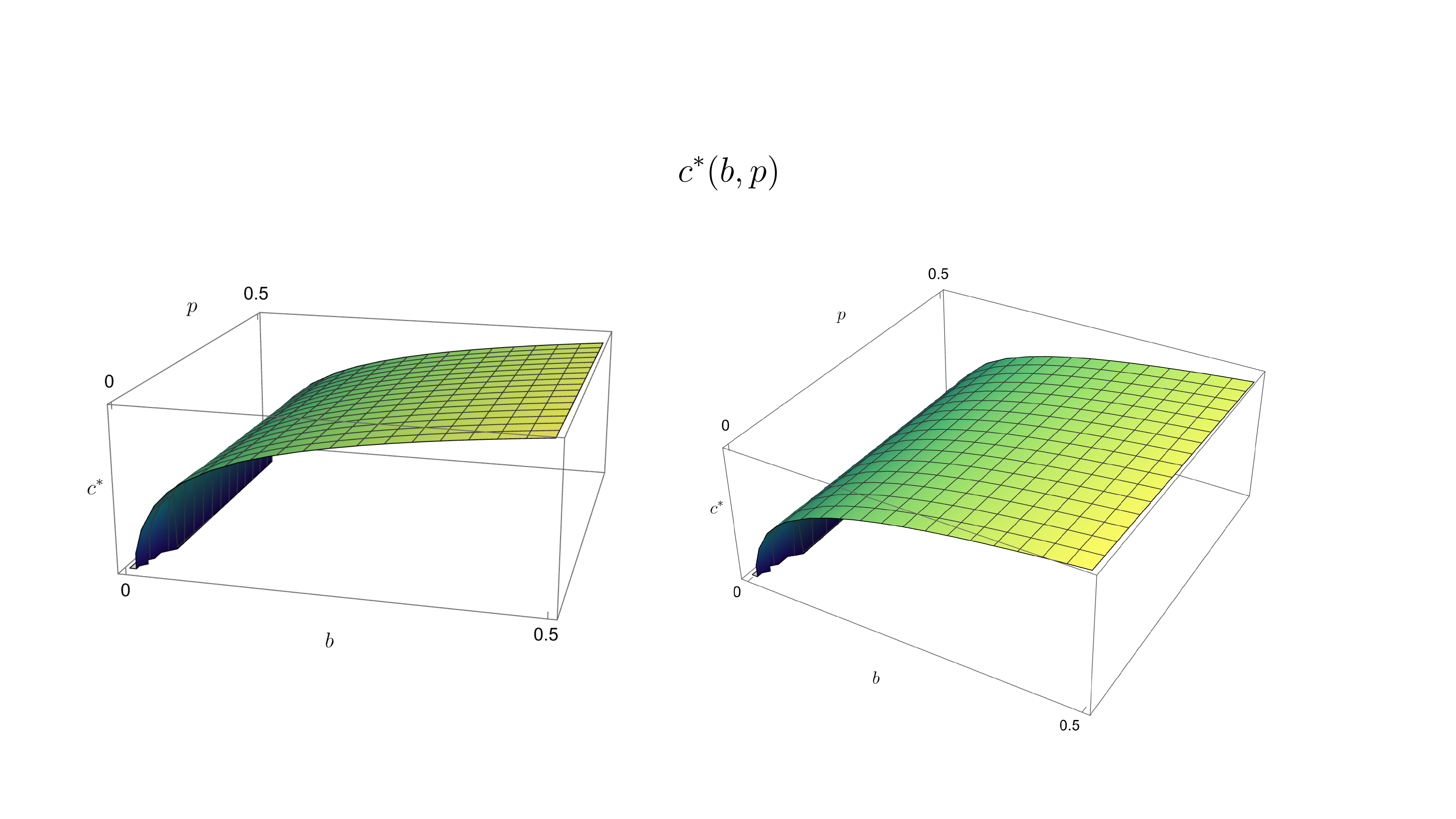}
    \caption{The (logarithm of the) bound surface of the optimal perturbation $c$ as a function of $b,p$.}
    \label{fig:cbp}
\end{figure}

\paragraph{Non-Asymptotic Guarantees for SPSA}

Now, we consider applying a similar line of analysis to SPSA. Recall that the noisy gradient estimate is of the form~\eqref{eq:spsa}
\begin{align*}
    [\bs g_k]_i \sim \frac{F(\bs \theta_k+c_k\bs\Delta_k,\xi_k^{1})-F(\bs \theta_k-c_k\bs\Delta_k,\xi_k^{2})}{2c_k [\bs\Delta_k]_i}. 
\end{align*}

We shall make use of a Lemma in the literature characterizing the bias associated with this estimate in the case of being able to compute unbiased samples of the noisy objective function itself.
\begin{lemma}\label{lem:spsaacc}\cite[Lemma 1]{spall1992multivariate}
Suppose that $\{\mathbf\Delta_k\}$, and its components, are i.i.d. and symmetrically distributed around zero with $[\mathbf\Delta_k]_i\le B_0$ almost surely and $\mathbb{E}[[\mathbf\Delta_k]_i^{-1}]\le B_1$, with $B_0,B_1>0$. Furthermore, assume that for some $K$, for all $k\ge K$, there exists $\delta>0,B_2>0$ such that $\left\|\frac{\partial^3}{\partial \bs\theta^3}f(\bs\theta_k)\right\|\le B_2$ for all $\bs\theta$ in a $\delta$-neighborhood around $\bs\theta_k$. Then it holds that, for such $\bs \theta_k$, the SPSA estimate~\eqref{eq:spsa} without bias in function evaluations yields
\begin{equation}\label{eq:spsabiasnobias}
    \bs g_k-\nabla f(\bs\theta_k)=\bs{r}(\xi^1_k,\xi^2_k,\bs\Delta_k)+\bs b_0(\bs\theta_k),\,\mathbb{E}[\bs{r}(\xi^1_k,\xi^2_k,\bs\Delta_k)]=0,\,\|\bs b_0(\bs\theta_k)\|\le b_0 c^2_k
\end{equation}
for some $b_0$ depending on $B_0$, $B_1$, $B_2$, and $p$.
\end{lemma}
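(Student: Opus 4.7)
The plan is to decompose the SPSA estimator into its analytic expansion plus a residual, and isolate the expectation-zero and bias components separately. Writing $\bs g_k = \bs g_k^{\text{sym}} + \bs g_k^{\text{noise}}$ where the former uses the noiseless values $f(\bs\theta_k\pm c_k\bs\Delta_k)$ and the latter collects the two (mean-zero, independent) function-evaluation noise contributions $F(\bs\theta_k\pm c_k\bs\Delta_k,\xi_k^{\pm})-f(\bs\theta_k\pm c_k\bs\Delta_k)$ each divided by $2c_k[\bs\Delta_k]_i$. Because $F(\cdot,\xi)$ is unbiased by hypothesis and the $\xi_k^{1},\xi_k^{2}$ are independent of $\bs\Delta_k$, $\bs g_k^{\text{noise}}$ has conditional mean zero and will be absorbed into $\bs r(\xi_k^1,\xi_k^2,\bs\Delta_k)$.

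The main work is then to analyze $\bs g_k^{\text{sym}}$. First I would Taylor-expand $f$ to third order around $\bs\theta_k$ inside the $\delta$-neighborhood guaranteed by the hypothesis (valid provided $c_k B_0 < \delta$, which will eventually hold since $c_k\to 0$):
\begin{equation*}
f(\bs\theta_k \pm c_k\bs\Delta_k)
= f(\bs\theta_k) \pm c_k \nabla f(\bs\theta_k)^\top\bs\Delta_k
+ \tfrac{c_k^2}{2}\bs\Delta_k^\top \nabla^2 f(\bs\theta_k)\bs\Delta_k
\pm \tfrac{c_k^3}{6} T(\bs\theta_k^{\pm},\bs\Delta_k),
\end{equation*}
where $T$ collects the third-order terms at mean-value points $\bs\theta_k^{\pm}$. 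Subtracting the two expansions makes the zero and second order terms cancel by antisymmetry, and after dividing by $2c_k[\bs\Delta_k]_i$ one obtains
\begin{equation*}
[\bs g_k^{\text{sym}}]_i = \frac{\nabla f(\bs\theta_k)^\top \bs\Delta_k}{[\bs\Delta_k]_i} + \frac{c_k^2}{6}\,\frac{T(\bs\theta_k^{+},\bs\Delta_k)+T(\bs\theta_k^{-},\bs\Delta_k)}{2[\bs\Delta_k]_i}.
\end{equation*}

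Next I would take the conditional expectation over $\bs\Delta_k$. For the leading term, the i.i.d.\ and symmetry assumptions on the components of $\bs\Delta_k$ yield $\mathbb{E}\bigl[[\bs\Delta_k]_j/[\bs\Delta_k]_i\bigr]=\delta_{ij}$ (the $j\neq i$ case factorizes by independence and vanishes because $\mathbb{E}[[\bs\Delta_k]_j]=0$), so its mean is exactly $\partial_i f(\bs\theta_k)$; any deviation from this mean is zero-mean and gets absorbed into $\bs r$. For the remainder, I would bound $|T|$ componentwise by $B_2\|\bs\Delta_k\|^3 \le B_2 p^{3/2}B_0^3$ almost surely, and combine with $\mathbb{E}\!\left|[\bs\Delta_k]_i^{-1}\right|\le B_1$ to conclude that the expectation of the remainder is bounded in norm by $c_k^2$ times a constant $b_0$ depending only on $B_0,B_1,B_2,p$.

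Finally, I would collect the decomposition $\bs g_k-\nabla f(\bs\theta_k) = \bs b_0(\bs\theta_k)+\bs r(\xi_k^1,\xi_k^2,\bs\Delta_k)$ by defining $\bs b_0(\bs\theta_k)$ as the conditional mean of $\bs g_k^{\text{sym}}-\nabla f(\bs\theta_k)$ and $\bs r$ as everything left over, with $\mathbb{E}[\bs r]=0$ by construction. The main obstacle is the technical bookkeeping in step two, namely ensuring that the third-derivative bound really controls the full remainder $T$ after dividing by $[\bs\Delta_k]_i$; this is what makes the hypothesis $\mathbb{E}[[\bs\Delta_k]_i^{-1}]\le B_1$ essential, as otherwise small values of $[\bs\Delta_k]_i$ could blow up the ratio despite $\|\bs\Delta_k\|$ being bounded above.
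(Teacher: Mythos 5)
The paper does not prove this lemma at all: it is imported, with notation adapted, from Spall's 1992 paper (cited as \cite[Lemma 1]{spall1992multivariate}), so there is no in-paper proof to compare against. Your argument is a correct reconstruction of the standard proof from that reference: split off the mean-zero measurement noise, Taylor-expand the symmetric difference so that the even-order terms cancel, use independence and symmetry of the components of $\bs\Delta_k$ to get $\mathbb{E}\bigl[[\bs\Delta_k]_j/[\bs\Delta_k]_i\bigr]=\delta_{ij}$ and hence an exact mean for the leading term, and bound the third-order remainder by $B_2 p^{3/2}B_0^3$ a.s.\ before applying $\mathbb{E}\bigl[|[\bs\Delta_k]_i|^{-1}\bigr]\le B_1$ (note you are implicitly, and correctly, reading the hypothesis as a bound on the inverse \emph{absolute} moment, as in Spall's original conditions). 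Two trivial quibbles: the neighborhood condition should be $c_k\sqrt{p}\,B_0<\delta$ rather than $c_kB_0<\delta$ so that the whole segment $\bs\theta_k\pm c_k\bs\Delta_k$ lies in the $\delta$-ball, and since $|T|$ is bounded almost surely rather than independent of $[\bs\Delta_k]_i$, the final expectation bound goes through by pulling out the a.s.\ bound on $|T|$ first — which is exactly what you do. No gaps.
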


We will also need an additional assumption:

\begin{assumption}\label{as:gradbound}
The perturbation is bounded away from zero, i.e., 
 there exists $B_3>0$ such that $[\bs \Delta_k]_i^{-1}\le B_3$ almost surely for all $k$.
\end{assumption}

With this, we are able to prove the convergence rate of the SPSA procedure with bias. 
Again, we consider two regimes. 
One, where we analyze the performance after a given number $K$ of iterations.
Second, we analyze the performance of mixture of the outcomes of iterations with a diminishing step size.
Across both regimes, we bound the expectation of a norm of the gradient as a measure of the distance to the nearest stationary point. 
Again, the bounds involve three terms, the first one independent of the noise and approaching zero with the iterations, the second one depending on the noise and bias but also approaching zero, and the third the asymptotic bias. 

We incorporate the previous Lemma in the following sense in the proof of the Theorem, which is also in the Appendix: we consider the SPSA procedure in the case wherein there is no bias in function evaluations, then consider a perturbation of this. Finally: 

\begin{theorem}\label{th:spsa}
Consider the SPSA algorithm with the iteration~\eqref{eq:saiter} and gradient estimate~\eqref{eq:spsa} for a problem satisfying Assumptions~\ref{as:boundbias} and~\ref{as:lipgrad}. Also, let the SPSA algorithm satisfy the conditions of Lemma~\ref{lem:spsaacc} and Assumption~\ref{as:gradbound}. 
Then it holds that

\begin{enumerate}
    \item 
Consider a fixed budget of $K$ iterations. Let $\alpha_k\equiv\alpha/\sqrt{K}$ with $\alpha\le L/2$. It holds that the iterates satisfy:
\begin{equation}\label{eq:totalconvspsafixed}
   \min\limits_{k\in[K]} \mathbb{E}\|\nabla f(\bs \theta_k)\|^2 \le \frac{4(f_c(\bs \theta_k)-f^*_{c})}{\alpha\sqrt{K}} +\frac{2L\alpha\left[\frac{\sigma^2 B_3^2}{c_k^2}+2b_0 c_k^2+\frac{2b^2 B_3^2}{c^2_k}\right]}{\sqrt{K}} +\left[b_0 c_k^2+\frac{b^2 B_3^2}{c^2_k}\right]
\end{equation}

\item Consider a diminishing stepsize of $\alpha_k=\alpha/k^{\gamma}$ with $\gamma\in (1/2,1]$ and $\alpha\le L/2$. Then it holds that, taking,
\[
\bs \theta_R\sim \bs \theta_k\text{ w.p. } \alpha_k/\sum\limits_{l=1}^K \alpha_l
\]
we have,
\begin{equation}\label{eq:totalconvspsadim}
   \mathbb{E}\|\nabla f(\bs \theta_R)\|^2 \le \frac{4(f_c(\bs \theta_k)-f^*_{c})}{\alpha K^\gamma} +\frac{2L\alpha\left[\frac{\sigma^2 B_3^2}{c_k^2}+2b_0 c_k^2+\frac{2b^2 B_3^2}{c^2_k}\right]}{ K^{\gamma}} +\left[b_0 c_k^2+\frac{b^2 B_3^2}{c^2_k}\right]
\end{equation}
\end{enumerate}
\end{theorem}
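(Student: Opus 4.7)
The plan is to adapt the $L$-smooth descent framework from the proof of Theorem~\ref{th:twopoint} to the SPSA estimator, using Lemma~\ref{lem:spsaacc} to absorb the deterministic finite-difference bias and layering the systematic noise bias from Assumption~\ref{as:boundbias} on top. First I would write the descent inequality for the smoothed function $f_c$ (which inherits $L$-smoothness from $f$ via \eqref{eq:litdiff}) along the update $\bs\theta_{k+1} = \bs\theta_k - \alpha_k \bs g_k$:
\[
f_c(\bs\theta_{k+1}) \le f_c(\bs\theta_k) - \alpha_k \langle \nabla f_c(\bs\theta_k), \bs g_k\rangle + \tfrac{L\alpha_k^2}{2}\|\bs g_k\|^2,
\]
and use $\|\nabla f_c - \nabla f\| \le \tfrac{c}{2}L(p+3)^{3/2}$ from \eqref{eq:litdiff} to swap between $\nabla f_c$ and $\nabla f$ where convenient.

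Second, I would decompose the SPSA estimator~\eqref{eq:spsa} under biased evaluations as $\bs g_k = \bs g_k^{(f)} + \bs g_k^{(r)} + \bs g_k^{(b)}$, where $\bs g_k^{(f)}$ is the SPSA formula applied to the clean function $f$ and $\bs g_k^{(r)}, \bs g_k^{(b)}$ are the SPSA formulas applied respectively to the mean-zero noise $r$ and the deterministic bias $b$ from Assumption~\ref{as:boundbias}. Lemma~\ref{lem:spsaacc} gives $\bs g_k^{(f)} = \nabla f(\bs\theta_k) + \bs r_k^{\text{SPSA}} + \bs b_0(\bs\theta_k)$ with $\mathbb{E}[\bs r_k^{\text{SPSA}}] = 0$ and $\|\bs b_0\| \le b_0 c_k^2$. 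Conditional independence of $\bs\Delta_k$ from $\xi_k^{1,2}$ makes $\bs g_k^{(r)}$ mean-zero with per-component second moment at most $\tfrac{\sigma^2 B_3^2}{c_k^2}$, using Assumption~\ref{as:gradbound}; and $\bs g_k^{(b)}$ is deterministic with $\|\bs g_k^{(b)}\| \le \tfrac{b B_3}{c_k}$.

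Third, substituting the decomposition into the descent inequality and taking conditional expectation, I would apply Young's inequality in the form $\langle \nabla f, \bs b\rangle \le \tfrac{1}{4}\|\nabla f\|^2 + \|\bs b\|^2$ to peel off the bias pieces $\bs b_0$ and $\bs g_k^{(b)}$, and bound $\mathbb{E}\|\bs g_k\|^2$ by a constant multiple of $\|\nabla f\|^2 + \tfrac{\sigma^2 B_3^2}{c_k^2} + b_0^2 c_k^4 + \tfrac{b^2 B_3^2}{c_k^2}$. The stepsize restriction $\alpha_k \le L/2$ ensures that the quadratic contribution $\tfrac{L\alpha_k^2}{2}\|\bs g_k\|^2$ does not dominate the linear descent term, producing the one-step recursion
\[
\mathbb{E}[f_c(\bs\theta_{k+1})] \le \mathbb{E}[f_c(\bs\theta_k)] - \tfrac{\alpha_k}{4}\mathbb{E}\|\nabla f(\bs\theta_k)\|^2 + L\alpha_k^2\Bigl[\tfrac{\sigma^2 B_3^2}{c_k^2} + 2b_0 c_k^2 + \tfrac{2b^2 B_3^2}{c_k^2}\Bigr] + \tfrac{\alpha_k}{4}\Bigl[b_0 c_k^2 + \tfrac{b^2 B_3^2}{c_k^2}\Bigr].
\]

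Finally, I would telescope from $k = 1$ to $K$, divide by $\tfrac{1}{4}\sum_k \alpha_k$, and bound the running minimum by the weighted average. In the fixed-budget regime $\alpha_k \equiv \alpha/\sqrt K$ this yields~\eqref{eq:totalconvspsafixed}; in the diminishing-stepsize regime the definition of $\bs\theta_R$ converts the weighted average directly into $\mathbb{E}\|\nabla f(\bs\theta_R)\|^2$, and the elementary estimate $\sum_{k=1}^K \alpha_k \gtrsim K^{1-\gamma}$ (with the $\gamma = 1$ case handled via a $\log K$ factor) delivers~\eqref{eq:totalconvspsadim}. The main obstacle is the double role played by $c_k$: the SPSA discretization bias from Lemma~\ref{lem:spsaacc} shrinks as $c_k^2$, whereas the noise-bias amplification through $[\bs\Delta_k]_i^{-1}$ grows as $1/c_k^2$. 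These competing effects produce precisely the irreducible floor $b_0 c_k^2 + b^2 B_3^2/c_k^2$ appearing in the theorem, reflecting the fundamental trade-off that no choice of $c_k$ can simultaneously eliminate both sources of bias.
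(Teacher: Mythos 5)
Your overall architecture matches the paper's proof: decompose the biased SPSA estimate into the ``clean'' SPSA estimator (controlled by Lemma~\ref{lem:spsaacc}, giving a mean-zero part plus a deterministic bias of size $b_0 c_k^2$) plus the contribution of the evaluation noise $r_k + b_k$ propagated through the divided difference (controlled via $B_3$ from Assumption~\ref{as:gradbound}, giving a mean-zero part of second moment $O(\sigma^2 B_3^2/c_k^2)$ and a bias of size $O(bB_3/c_k)$), then run the descent lemma, Young's inequality, and telescoping in the two stepsize regimes. The one genuine deviation is your choice to run the descent lemma on the Gaussian-smoothed $f_c$ and convert back to $\nabla f$ via \eqref{eq:litdiff}. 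The paper deliberately works with $f$ itself here, precisely because Lemma~\ref{lem:spsaacc} characterizes the SPSA estimator's accuracy relative to $\nabla f$ directly --- unlike the two-point estimator of Theorem~\ref{th:twopoint}, whose expectation is $\nabla f_c$ and therefore forces the smoothing machinery. Your route is workable but creates an avoidable mismatch: the signal term in the descent inequality becomes $\langle \nabla f_c(\bs\theta_k), \nabla f(\bs\theta_k)\rangle$ rather than a squared norm, and repairing it with $\|\nabla f_c - \nabla f\| \le \tfrac{c}{2}L(p+3)^{3/2}$ injects an extra dimension-dependent floor term of order $c^2 L (p+3)^3$ that does not appear in \eqref{eq:totalconvspsafixed}. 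Dropping the smoothing and working with $f$ throughout recovers the stated bound exactly and shortens the argument.
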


The proof is in Appendix \ref{app:proofs} on page \pageref{secondproof}.

\begin{figure}[t!]
\centering
\includegraphics[scale=0.22]{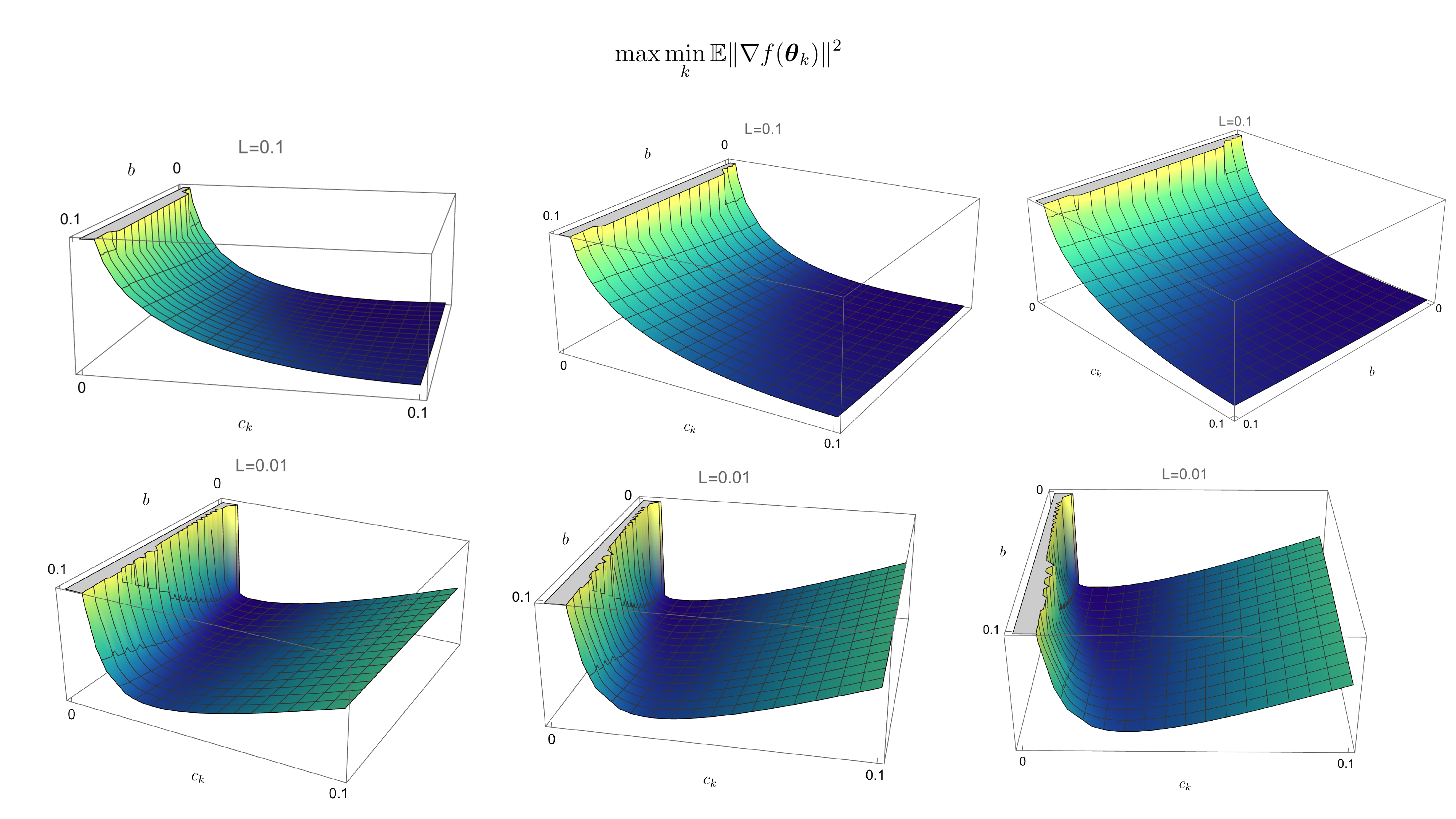}
    \caption{The bound surface of Eq. \eqref{eq:totalconvspsafixed} with $L=8$, $\alpha=4$, $\sigma=1,K=10^5$ for $0 \leq b,c \leq 1$. The $z$-axis is in logarithmic scale.}
    \label{fig:boundsurface1}
\end{figure}

\paragraph{Guarantees for the Parameter Shift Rule}
Now, we consider the original Parameter Shift rule as given by~\eqref{eq:derivative}, rewritten for convenience below with $s=\frac{\pi}{2}$,
\[
    \frac{dO_i(\theta_i)}{d\theta_i}= \frac{O_i\left(\theta_i+\frac{\pi}{2}\right)-O_i\left(\theta_i-\frac{\pi}{2}\right)}{2},\quad \nabla O(\bs \theta) = \sum_{i=1}^L \frac{d\tilde{O}_i(\theta_i)}{d\theta_i}
    \]
    Now, we have that, without bias in the function evaluations, this is an \emph{exact} gradient computation. For noisy operations under the ideal probability distribution $\Xi$, this would be an unbiased stochastic gradient. For the real noise distribution $\chi$, however, biased noise will appear from the evaluations of $O\left(\theta_i+\frac{\pi}{2}\right)$ and $O\left(\theta_i-\frac{\pi}{2}\right)$. By our biased noise model given in Assumption~\ref{as:boundbias} we have that,
    \[
    n(\bs\theta,\xi) = O(\bs\theta,\xi\sim\Xi)+\left(O(\bs\theta,\xi\sim\chi)-O(\bs\theta,\xi\sim\Xi)\right)
    \]
    and 
    \begin{align}\label{eq:bLv}
    b=L_V\max\limits_i\sup_{\xi\in \chi} \left|O_i\left(\theta_i \pm \frac{\pi}{2},\xi\sim\chi\right)-O_i\left(\theta_i+\frac{\pi}{2}\right)\right|
    \end{align}
    where $L_V$ are the number of layers in the VQA, to distinguish it from the Lipschitz constant.
    Now, given that the ideal PS computation is the exact stochastic gradient, we are in the biased SGD regime, with the noise setting~\eqref{eq:biasedsgassume} with $\zeta^2 = b$, and it can be seen that we can take $m=M=0$. From~\eqref{eq:step} we have,
    \begin{equation}\label{eq:psbound}
        \frac{1}{K}\sum\limits_{k=1}^K \|\nabla f(\bs \theta_k)\|^2
\le \frac{2F}{K\alpha}+\alpha L\sigma^2+\zeta^2 = \frac{2F}{K\alpha}+\alpha L\sigma^2+b
    \end{equation}

\paragraph{Discussion and Comparison}

Let us compare the convergence guarantees for the two point function approximation, SPSA and Parameter Shift rule by~\eqref{eq:totalconvfixed0},~\eqref{eq:totalconvspsafixed} and~\eqref{eq:psbound}, respectively. 
    
\begin{enumerate}
    \item[\emph{(1)}] It appears that all algorithms have the same overall convergence rate.
    \item[\emph{(2)}] The error grows at least with the square of the maximum eigenvalue of the observable $\sigma_M(O)$ through the influence of this factor on $L$ and $b$.
    \item[\emph{(3)}] The Parameter Shift Rule (PSR) exhibits the smallest effect of bias, in general, as well as the smallest constants for the convergence, which is reasonable given it is an exact gradient computation
    \item[\emph{(4)}] SPSA, however, seems to be the most tuneable, i.e., it is clear that $B_3$ and $c_k$ are freely chosen by the user, and as such can be adjusted so as to significantly mitigate the bias. Thus, in cases wherein significant bias mitigation is called for in order to obtain sensible result, aggressive parameter choices can be made in SPSA to achieve this. \textcolor{black}{Note that the influence of the circuit depth, corresponding to $p$, does not influence the SPSA convergence except via its influence of $b_c$, in contrast to the two point function approximation.}
    \item[\emph{(5)}] The two-point function approximation would perform better than SPSA for generic derivative-free optimization using classical algorithmic parameter choices, but is clearly the option of last resort in variational quantum algorithms, given that there is only one parameter $c$ to adjust to mitigate the noise (compared to SPSA's two: $c$ and $B_3$), while at the same time underperforming PSR, if the circuit is amenable to gradient computations using PSR. \textcolor{black}{Furthermore, the convergence error scales with the cube of the circuit depth, or number of variational parameters, $q$, due to the additional bias incurred with greater dimensional problems in the two point function approximation to the gradient.}

    \item[\emph{(6)}] Taking more samples in each function evaluation would reduce the symmetric noise, and thus $\sigma^2$, which would also have a favorable effect on both the rate of convergence and quality of the final iterate.
    \item[\emph{(7)}] The final bias scales with the error profile as defined by the worst norm difference that occurs in the case of a modified operation, $\|\exp(i\theta H)-\exp(i\theta\tilde H)\|$.
\end{enumerate}


\section{Numerical Experiments}\label{sec:num}
In this section, we illustrate the iteration complexity of VQAs in the context of a standard QAOA test cast, the MAX-CUT problem on a 4-regular graph.

he Quantum Approximate Optimization Algorithm \cite{Farhi2014} is a VQA that can tackle a variety of quadratic unconstrained bnary optimization problems (QUBOs) by finding a approximate solutions. The way it works, as a VQA, is to maximize the cost function of output bit strings. It is a (quantum) random search algorithm where, e.g., for the MAX-CUT problem, it can be viewed as a sequence or quantum random walks on the hypercube defined by the problem. At each iteration, the algorithm provides an improved cost function by optimizing it and feeding it back to the quantum computer for the next iteration. In the standard QAOA one starts with the uniform quantum superposition state and then two different Hamiltonians are encoded, the mixer Hamiltonian $H_M$, and the problem Hamiltonian $H_C$ which is equivalent to the $d=1$ Ising model Hamiltonian. The corresponding circuit is shown in Figure \ref{fig:qaoa}.

The standard quantum circuit is shown in Fig. \ref{fig:qaoa}, 
\begin{figure}[tb]
    \centering
    \includegraphics[scale=0.70]{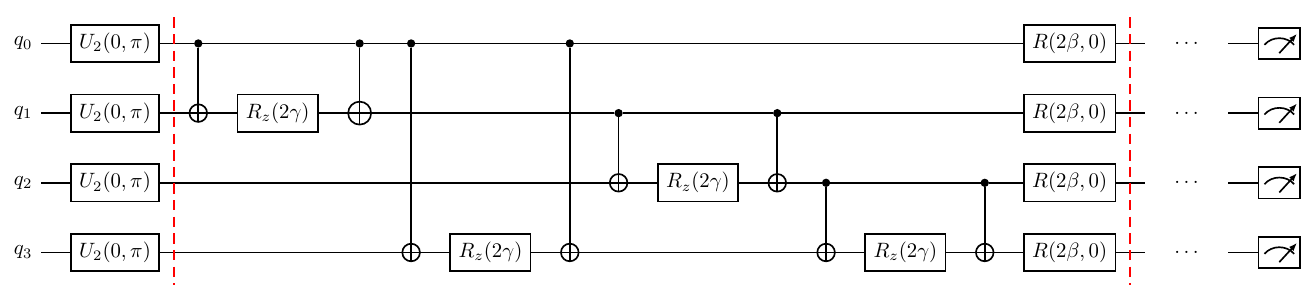}
    \caption{The QAOA circuit for 4 qubits. Within the red crossed lines we denote one layer of the QAOA and with dots we denote the repetition of the layers to the desired depth.}
    \label{fig:qaoa}
\end{figure}
where the gates are defined as usual (cf. Appendix \ref{sec:gatedef}).

In order to perform tests with varying amounts of noise and varying bounds on the bias therein, 
we utilize the Qiskit \texttt{Aer} simulator \cite{Qiskit} with 
the native gateset comprised of identity, $R_z$, $\sqrt{X}$, and CNOT gates,
 to which we transpile the $U_2$ and $R(\theta,\phi)$ gates of Fig. \ref{fig:qaoa}, 
and a custom noise model. The custom noise model
captures three sources of error:
\begin{itemize}
    \item imperfect fidelity of 1-qubit gates: we associate each application of a 1-qubit gate (after transpiling to the native gateset) with a depolarizing channel (cf. p. \pageref{eq:depolarizing}) acting upon $\rho$ as:
    \begin{align}
    \mathcal{E}_{1}(\rho) := 0.999 \rho + 0.001 \text{Tr}(\rho) \frac{\mathds{1}}{2}
    \end{align}
    \item imperfect fidelity of 2-qubit gates: we associate each application of a CNOT gate (i.e., after transpiling $R(\theta,\phi)$ gates of Fig. \ref{fig:qaoa} to the native gateset with CNOT only) with a depolarizing channel acting on $\rho$ as:
    \begin{align}
    \mathcal{E}_{2}(\rho) :=0.99 \rho + 0.01 \text{Tr}(\rho) \frac{\mathds{1}}{4}
    \end{align}
    \item bias in the read-out: 
    we consider a matrix $P$ defining the so-called assignment probabilities, i.e., conditional probability of recording a noisy measurement outcome as $\alpha$ given that the measurement outcome should have been $\beta$, for all representations $\alpha, \beta$ of bit-strings on $4$ qubits.
    We detail two such matrices $P$ below. 
\end{itemize}

In what follows, we first test the optimization performance for a random choice of assignment probabilities within the read-out, while afterwards we define a custom read-out noise model based on the average 1-qubit local read-out errors of some standard freely available low-qubit \texttt{IBMQ} devices; see Table \ref{table:ibmq}. The need of such a custom noise model stems from the fact that it becomes practically impossible to distinguish between the gate errors and the read-out error for anything more than one qubit.

\begin{table}[!htb]
\centering
\begin{tabular}{cccccccc} \hline \hline
\texttt{ibmq\_}    & \texttt{armonk} & \texttt{lima}   & \texttt{santiago} & \texttt{bogota} & \texttt{belem}  & \texttt{quito} & \texttt{manila}   \\  \hline
qubits & 1      & 5      & 5        & 5      & 5      & 5  &   5 \\
$p(1|0)$    & 0.0322 & 0.0078 & 0.0274   & 0.0496 & 0.0052 & 0.0186 &0.0114 \\
$p(0|1)$    & 0.0358 & 0.0260  & 0.0448   & 0.1916 & 0.0352 & 0.0548 &0.034 \\
bias $\tilde b$    & 0.00359  & 0.0182  &  0.0174   & 0.1402 &  0.0300 & 0.0362  &0.0226  \\ \hline
\end{tabular} 
\caption{Assignment probabilities for qubit-flipped measurements across various IBM devices. Observe that all entries of the last row are larger than those of the row before for all devices, clearly showcasing the read-out asymmetry.}
\label{table:ibmq}
\end{table}

\newpage
\subsection*{A random noise model}

First, we revisit the convergence of VQA utilizing SPSA and 
a randomly (and highly non-physical) probability transition matrix $P_{\rm rand}$:

\[ 
P_{\rm rand}=\scalemath{0.84}{\left(
\begin{matrix*}[r]
 0.55 & 0.27 & 0.17 & 0.00 & 0.00 & 0.00 & 0.00 & 0.00 & 0.00 & 0.00 & 0.00 & 0.00 & 0.00 & 0.00 & 0.00 & 0.00 \\
 0.58 & 0.14 & 0.16 & 0.04 & 0.00 & 0.02 & 0.04 & 0.00 & 0.00 & 0.00 & 0.00 & 0.00 & 0.00 & 0.00 & 0.00 & 0.00 \\
 0.10 & 0.10 & 0.60 & 0.13 & 0.01 & 0.01 & 0.02 & 0.00 & 0.00 & 0.01 & 0.01 & 0.00 & 0.00 & 0.00 & 0.00 & 0.00 \\
 0.32 & 0.64 & 0.01 & 0.01 & 0.02 & 0.00 & 0.00 & 0.00 & 0.00 & 0.00 & 0.00 & 0.00 & 0.00 & 0.00 & 0.00 & 0.00 \\
 0.30 & 0.68 & 0.01 & 0.00 & 0.01 & 0.00 & 0.00 & 0.00 & 0.00 & 0.00 & 0.00 & 0.00 & 0.00 & 0.00 & 0.00 & 0.00 \\
 0.02 & 0.52 & 0.44 & 0.00 & 0.00 & 0.01 & 0.00 & 0.00 & 0.00 & 0.00 & 0.00 & 0.00 & 0.00 & 0.00 & 0.00 & 0.00 \\
 0.90 & 0.06 & 0.01 & 0.01 & 0.00 & 0.02 & 0.00 & 0.00 & 0.00 & 0.00 & 0.00 & 0.00 & 0.00 & 0.00 & 0.00 & 0.00 \\
 0.94 & 0.02 & 0.02 & 0.01 & 0.00 & 0.00 & 0.00 & 0.00 & 0.00 & 0.00 & 0.00 & 0.00 & 0.00 & 0.00 & 0.00 & 0.00 \\
 0.25 & 0.29 & 0.13 & 0.28 & 0.00 & 0.00 & 0.00 & 0.02 & 0.02 & 0.00 & 0.00 & 0.00 & 0.00 & 0.00 & 0.00 & 0.00 \\
 0.35 & 0.05 & 0.38 & 0.19 & 0.01 & 0.01 & 0.01 & 0.00 & 0.00 & 0.00 & 0.00 & 0.00 & 0.00 & 0.00 & 0.00 & 0.00 \\
 0.32 & 0.27 & 0.09 & 0.07 & 0.19 & 0.03 & 0.03 & 0.00 & 0.00 & 0.00 & 0.00 & 0.00 & 0.00 & 0.00 & 0.00 & 0.00 \\
 0.81 & 0.10 & 0.06 & 0.01 & 0.00 & 0.00 & 0.00 & 0.00 & 0.00 & 0.00 & 0.00 & 0.00 & 0.00 & 0.00 & 0.00 & 0.00 \\
 0.41 & 0.02 & 0.24 & 0.01 & 0.30 & 0.01 & 0.00 & 0.01 & 0.00 & 0.00 & 0.00 & 0.00 & 0.00 & 0.00 & 0.00 & 0.00 \\
 0.42 & 0.46 & 0.07 & 0.02 & 0.01 & 0.00 & 0.02 & 0.00 & 0.00 & 0.00 & 0.00 & 0.00 & 0.00 & 0.00 & 0.00 & 0.00 \\
 0.15 & 0.18 & 0.64 & 0.03 & 0.00 & 0.00 & 0.00 & 0.00 & 0.00 & 0.00 & 0.00 & 0.00 & 0.00 & 0.00 & 0.00 & 0.00 \\
 0.10 & 0.75 & 0.08 & 0.03 & 0.01 & 0.00 & 0.01 & 0.01 & 0.00 & 0.00 & 0.00 & 0.00 & 0.00 & 0.00 & 0.00 & 0.00 \\
\end{matrix*} \right)}
\]

This matrix was randomly generated subject to the conditions of sparsity and row-stochasticity. 
Using this random noise model $P_{\rm rand}$, we performed several runs of the SPSA to record its iteration complexity. We observe that the mean-valued run converges after approximately 60 iterations, see Fig. \ref{fig:my_spsa}.

\begin{figure}[!htb]
    \centering
    \includegraphics[scale=0.5]{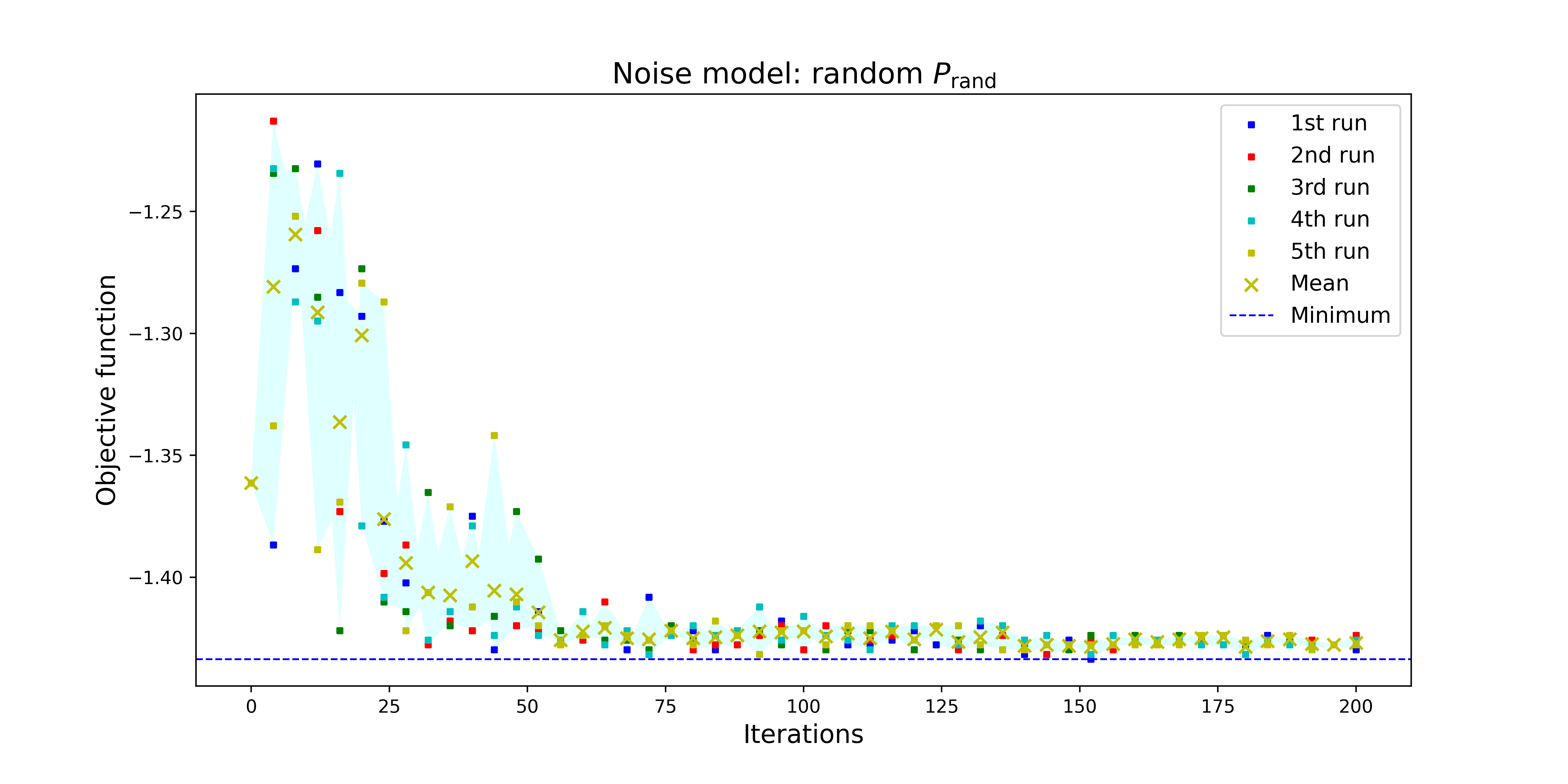}
    \caption{The (noisy) objective values of SPSA for the random noise model $P_{\rm rand}$ for a 4-regular graph. Although this noise model is not physical, we see that convergence is already achieved after approximately 50 iterations.}
    \label{fig:my_spsa}
\end{figure}

\newpage
\subsection*{A realistic noise model}

Next, we consider a realistic biased noise model given by the matrix $P$:

\[ 
P=\scriptscriptstyle{\left(
\begin{smallmatrix*}[r]
x& \tfrac{1-x}{4}  &{}\tfrac{1-x}{4}{}  & {}\tfrac{1-x}{4}{}& {}\tfrac{1-x}{4}{}&&&&&&&&&&&\\
{}\tfrac{1-y}{4}{} &{}y{}&&&& {}\tfrac{1-y}{4}{}&&&{}\tfrac{1-y}{4}{}&&{}\tfrac{1-y}{4}{}&&&&&\\
{}\tfrac{1-y}{4}{}&& {}y{}&&&& {}\tfrac{1-y}{4}{}&&& {}\tfrac{1-y}{4}{}  & {}\tfrac{1-y}{4}{}&&&&&\\
{}\tfrac{1-y}{4}{}&&&{}y{}&&&&& {}\tfrac{1-y}{4}{}& {}\tfrac{1-y}{4}{}&{}\tfrac{1-y}{4}{}&&&&&\\
{}\tfrac{1-y}{4}{}&&&&{}y{}&{}\tfrac{1-y}{4}{}&{}\tfrac{1-y}{4}{}&{}\tfrac{1-y}{4}{}&&&&&&&&\\
& {}\alpha\tfrac{1-x}{2}{}&&& {}\alpha\tfrac{1-x}{2}{} & {}y{}&&&&&&& {}\tfrac{1-y}{2}{} & {}\tfrac{1-y}{2}{}&&\\
&&{}\alpha\tfrac{1-x}{2}{} && {}\alpha\tfrac{1-x}{2}{} && {}y{}&&&&&{}\alpha\tfrac{1-y}{2}{} && {}\alpha\tfrac{1-y}{2}{}&&\\
&&&{}\alpha\tfrac{1-x}{2}{} & {}\alpha\tfrac{1-x}{2}{}&&&{}y{}&&&& {}\alpha\tfrac{1-y}{2}{} & {}\alpha\tfrac{1-y}{2}{} &&&\\
& {}\alpha\tfrac{1-x}{2}{} && {}\alpha\tfrac{1-x}{2}{} &&&&&{}y{}&&&& {}\alpha\tfrac{1-y}{2}{}&&{}\alpha\tfrac{1-y}{2}{}&\\
&& {}\alpha\tfrac{1-x}{2}{}& {}\alpha\tfrac{1-x}{2}{}&&&&& &{}y{}&& {}\alpha\tfrac{1-y}{2}{}&&&{}\alpha\tfrac{1-y}{2}{}&\\
& {}\alpha\tfrac{1-x}{2}{} &{}\alpha\tfrac{1-x}{2}{} &&&&&&&&{}y{}&&&{}\alpha\tfrac{1-y}{2}{} & {}\alpha\tfrac{1-y}{2}{}&\\
&&&&&&{}\alpha\tfrac{1-x}{3}{}&{}\alpha\tfrac{1-x}{3}{}&&{}\alpha\tfrac{1-x}{3}{}&&{}y{}&&&&{}\alpha(1-y){} \\
&&&&&{}\alpha\tfrac{1-x}{3}{}&& {}\alpha\tfrac{1-x}{3}{}&{}\alpha\tfrac{1-x}{3}{}&&&&{}y{}&&&{}\alpha(1-y){} \\
&& &&& {}\alpha\tfrac{1-x}{3}{} & {}\alpha\tfrac{1-x}{3}{} &&&&{}\alpha\tfrac{1-x}{3}{}&&&{}y{}&& {}\alpha(1-y){} \\
&&&&&&&& {}\alpha\tfrac{1-x}{3}{}& {}\alpha\tfrac{1-x}{3}{} & {}\alpha\tfrac{1-x}{3}{} &          && &{}y{}& {}\alpha(1-y){} \\
&&&&&&&&&&& {}\beta\tfrac{1-x}{4}{} & {}\beta\tfrac{1-x}{4}{} & {}\beta\tfrac{1-x}{4}{} & {}\beta\tfrac{1-x}{4}{} & {}y{}      \\ 
\end{smallmatrix*} \right)}
\]

In this model, $x = P(1|0)$, $y=P(0|1)$, and $\alpha,\beta$ are normalization factors such that the sum of each row is one, preserving the probability. 
Our model only takes into account single bit flips in the classical measurement. For example, it assigns probabilities as described for states $\ket{\alpha_1 \alpha_2 \alpha_3 \alpha_4}$ where any (but only one) of the $\alpha_i$ flips its bit, but assigns zero probability for two or more bit flips. 

Using this model and taking into account the conditional probabilities of Table \ref{table:ibmq}, we performed several runs of the SPSA using the probability matrix induced by 1-qubit probability matrix of three \texttt{IBMQ} devices, \texttt{Armonk}, \texttt{Lima}, and \texttt{Bogota}.
In Figure \ref{fig:my_spsa}, we present the multiple noisy trajectories of the objective function values for five runs, together with the mean (golden crosses) and one stadard deviation added to and subtracted from the mean (shaded region) for the five runs.

\begin{figure}[!tbp]
    \centering
    \includegraphics[scale=0.4]{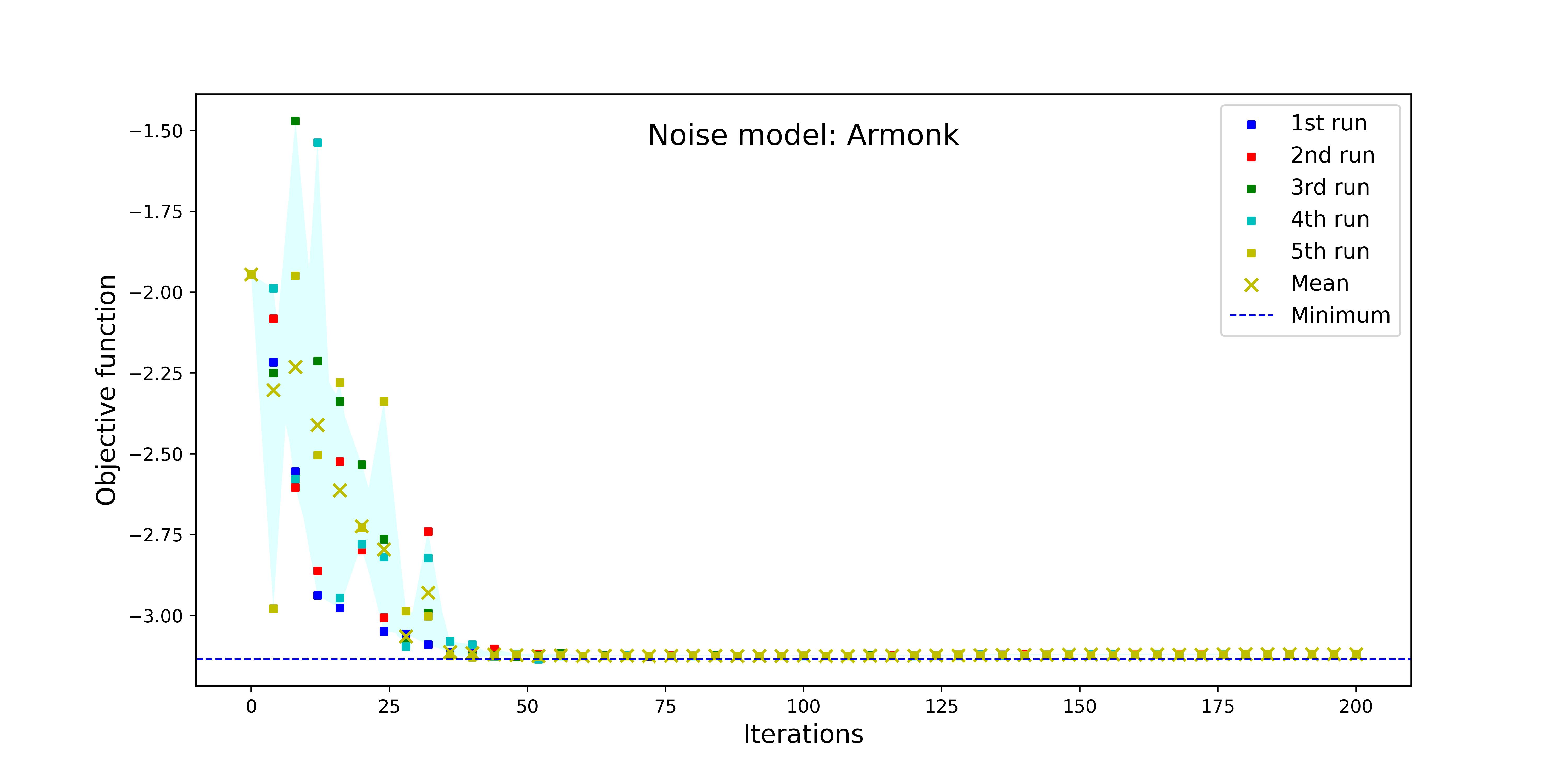}
    \includegraphics[scale=0.4]{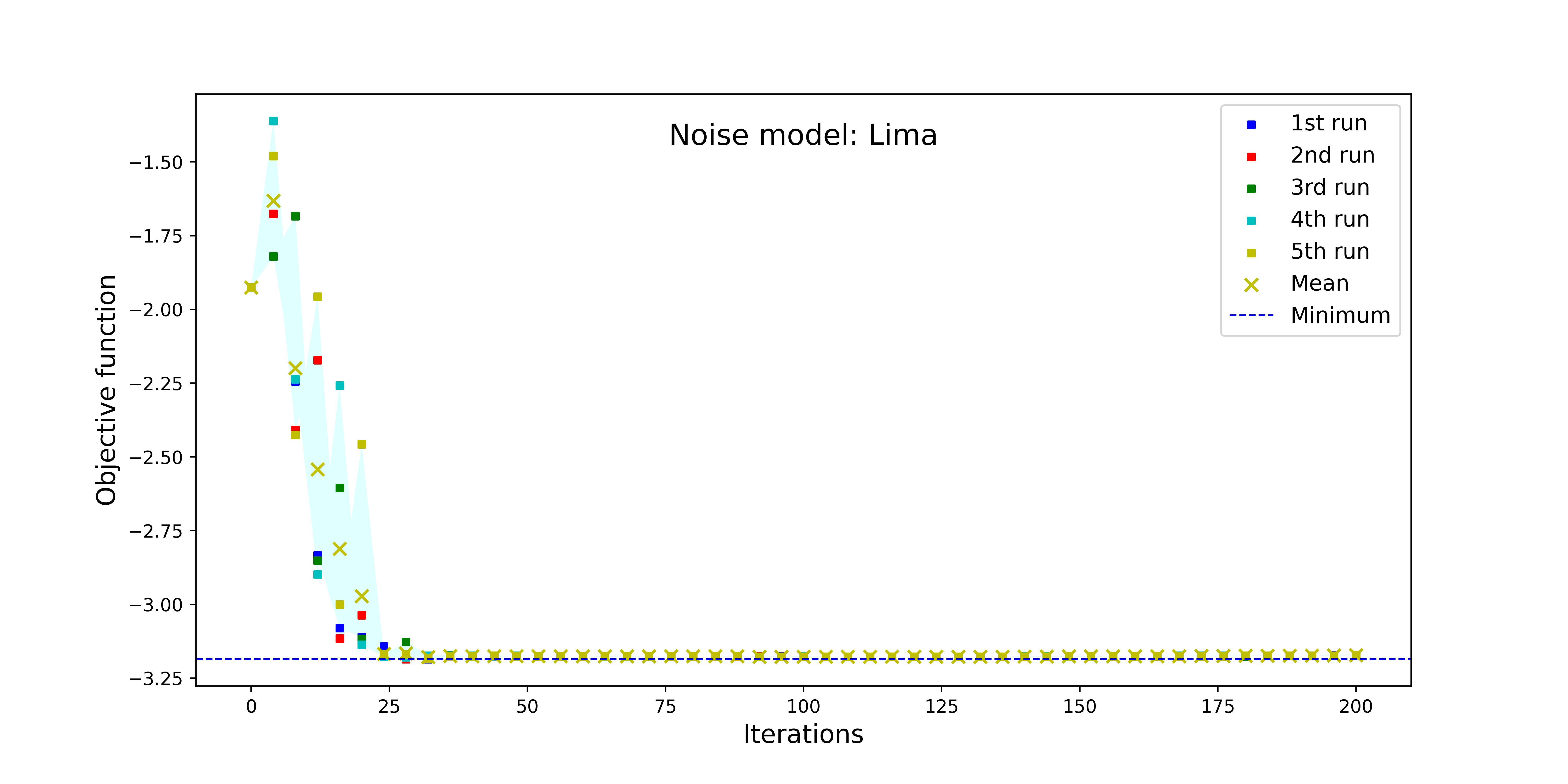}
    \includegraphics[scale=0.4]{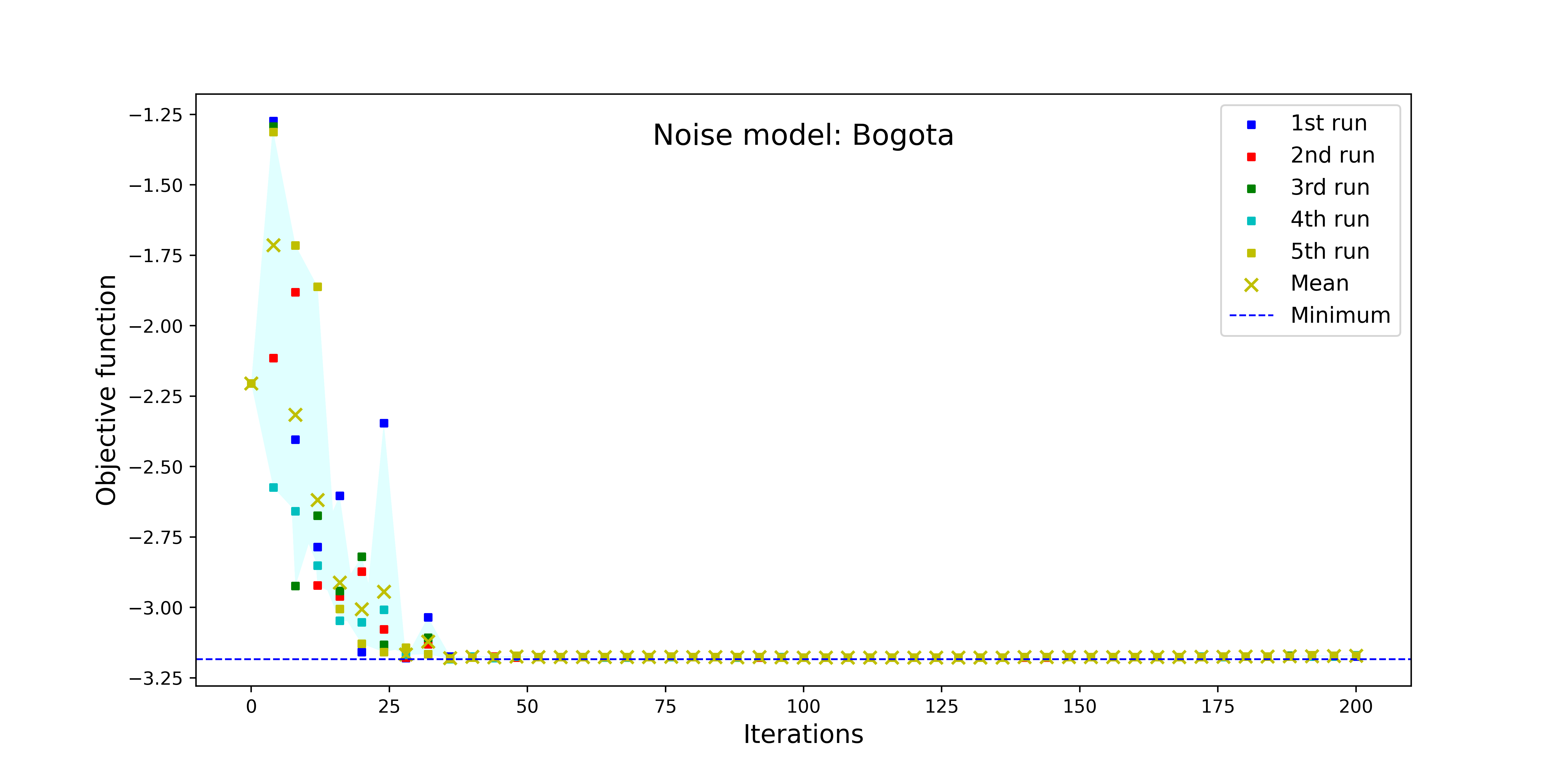}
    \caption{The iteration complexity of SPSA for different IBMQ devices.}
    \label{fig:my_spsa}
\end{figure}

In this setting, consider four different manifestations of the bias:
\begin{itemize}
    \item bias in the read-out model. In particular, a fixed difference $\tilde{b}(c) := |c-x|$ input into our choice $P$ of the model of the read-out error of a 4-qubit device, which defines dividing the magnitude of the SPSA perturbation parameter $c$ across the off-diagonal elements of a row of $P$ above. 
    \item 
    the empirical distribution of objective function values obtained by running the circuit of Fig. \ref{fig:qaoa} for QAOA on the 4-vertex MAXCUT instance described above with a particular read-out model with a particular choice of $\tilde{b}$. 
    In Fig. \ref{fig:objective_data}, we display the empirical distributions obtained by running 1 iteration of QAOA for four different choices of $\tilde{b}$ in the read-out model: 0.01, 0.02, 0.03, 0.04, with 1024 runs per each value of $\tilde{b}$.
   \item an estimate $\hat b$ of the bias in the empirical distribution of the objective function values, as obtained by performing parameter estimation of a mixture model \eqref{eq:biasas} composed of a normal and a uniform distribution on a set of samples of the QAOA run for a representative value of $\bs\theta$.  
In Fig. \ref{fig:objective_data}, clock-wise from the top-left, the estimates  $\hat{b}$ are $0.31224, 0.32779, 0.33484, 0.36894$ based on $\tilde{b}(c)$ of 0.01, 0.02, 0.03, and 0.04, respectively.
For comparison, see 
Fig.~\ref{fig:objective_data2} for the case of no read-out error, but realistic fidelity of the gates. 
\item the impact of the bias on the rate of convergence. 
In Fig. \ref{fig:my_spsa}, for the same values of $c,\tilde{b}$ and estimated bias $\hat b$ as in Fig. \ref{fig:objective_data}, we illustrate the trajectory of objective values by SPSA iteration while considering the variable noise model given by $P$.
\end{itemize}

\begin{figure}[!tbp]
    \centering
    \includegraphics[width=0.49\columnwidth]{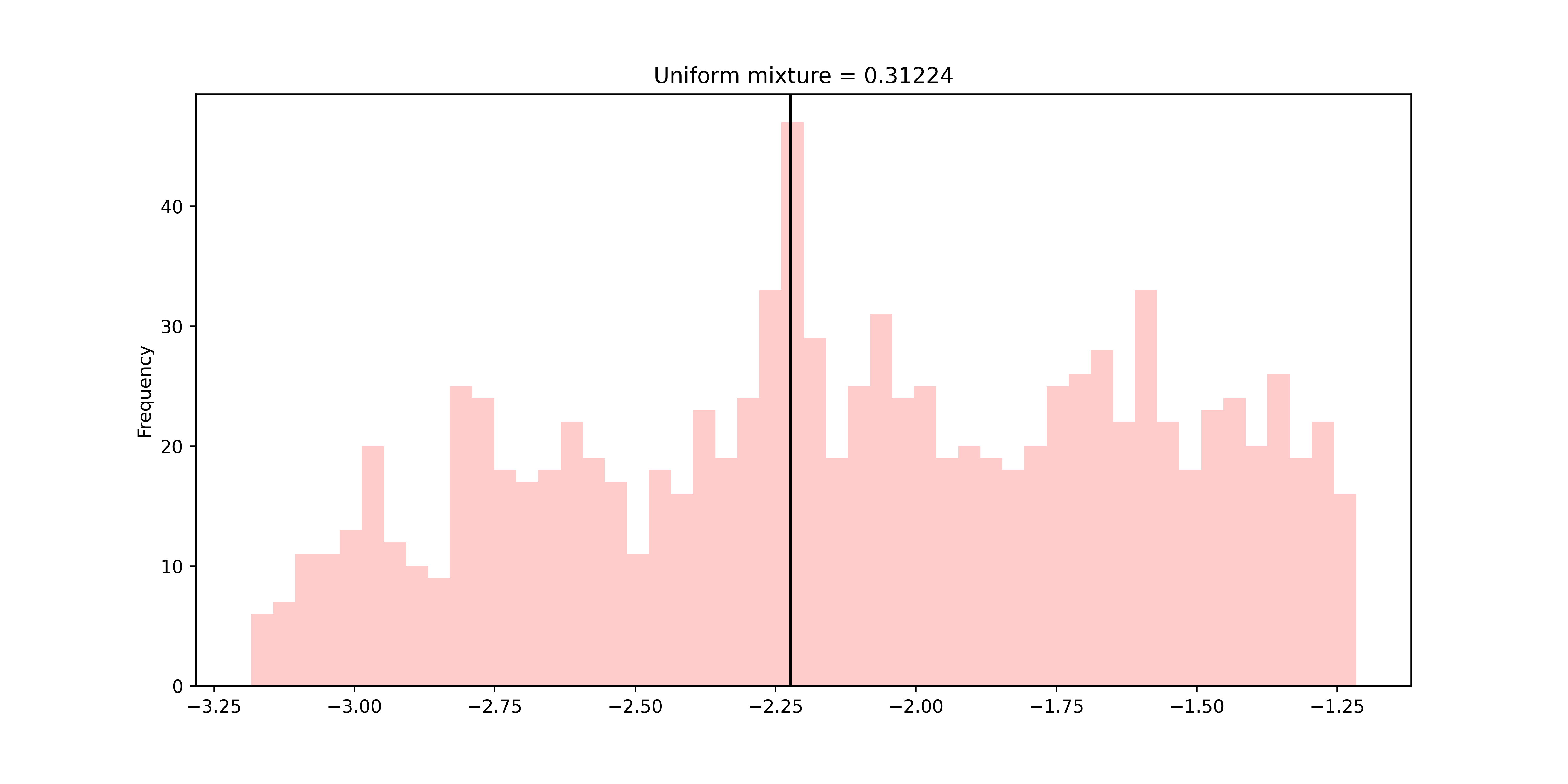} 
    \includegraphics[width=0.49\columnwidth]{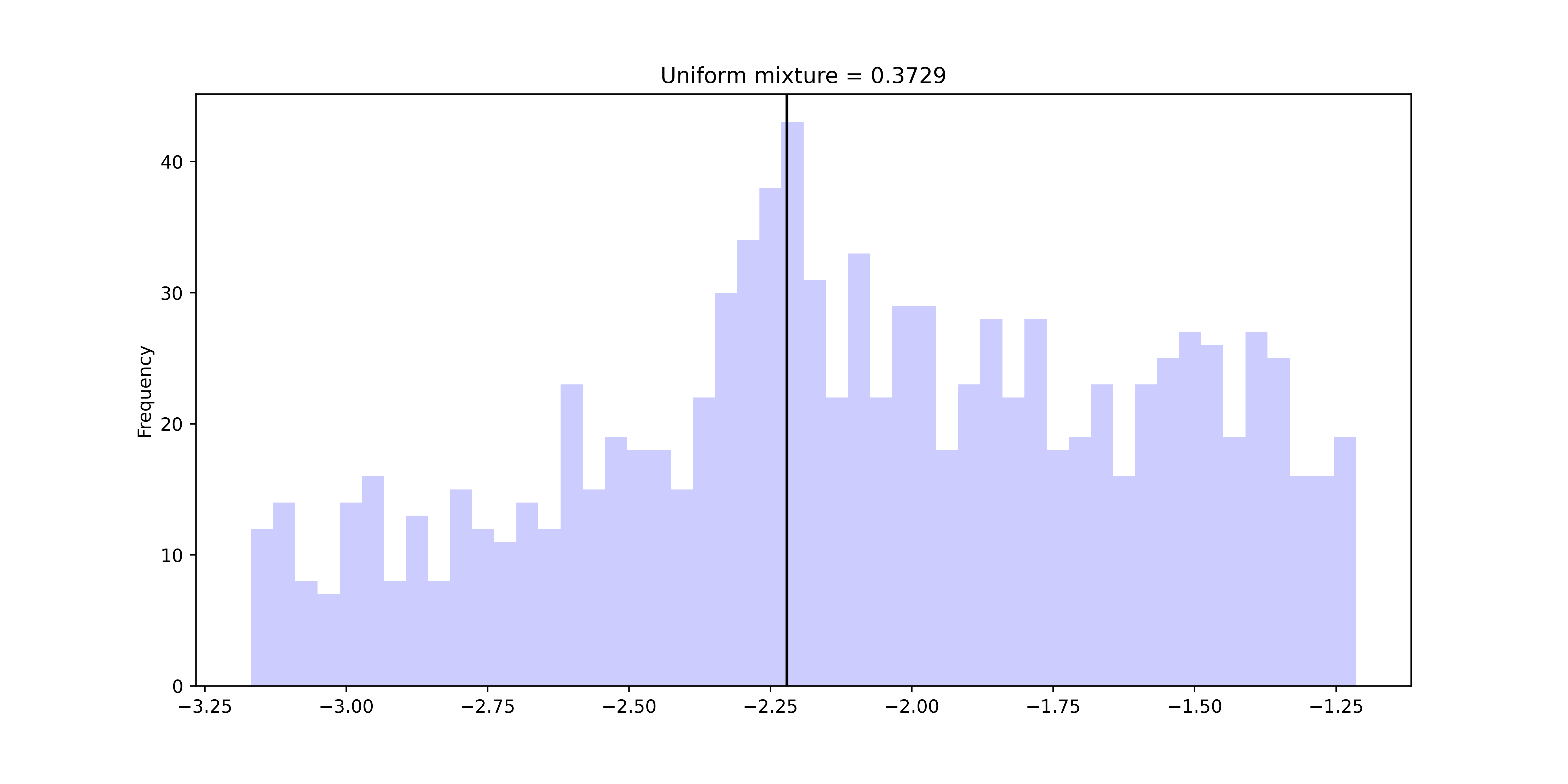}\\
    \includegraphics[width=0.49\columnwidth]{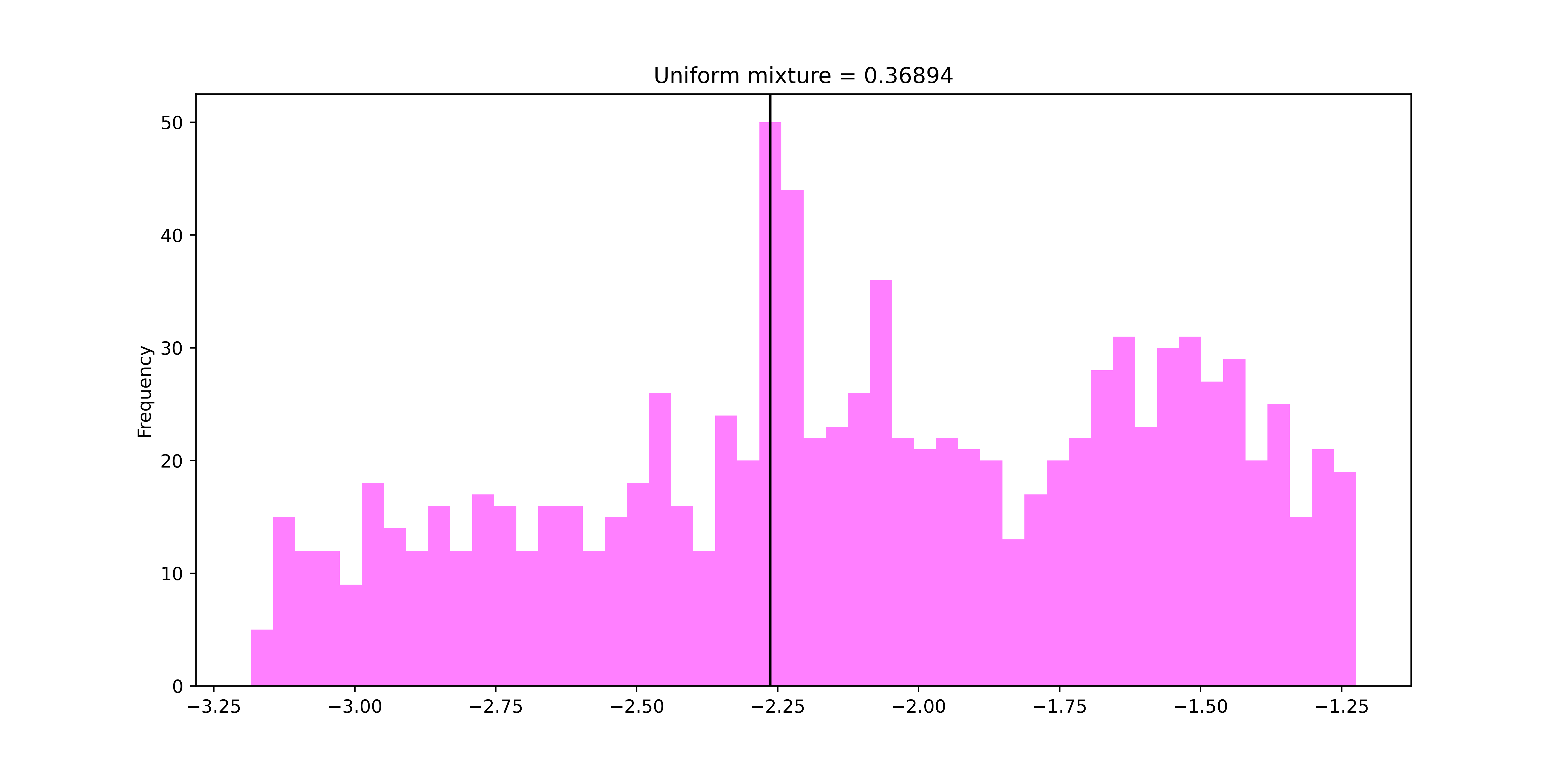}    
    \includegraphics[width=0.49\columnwidth]{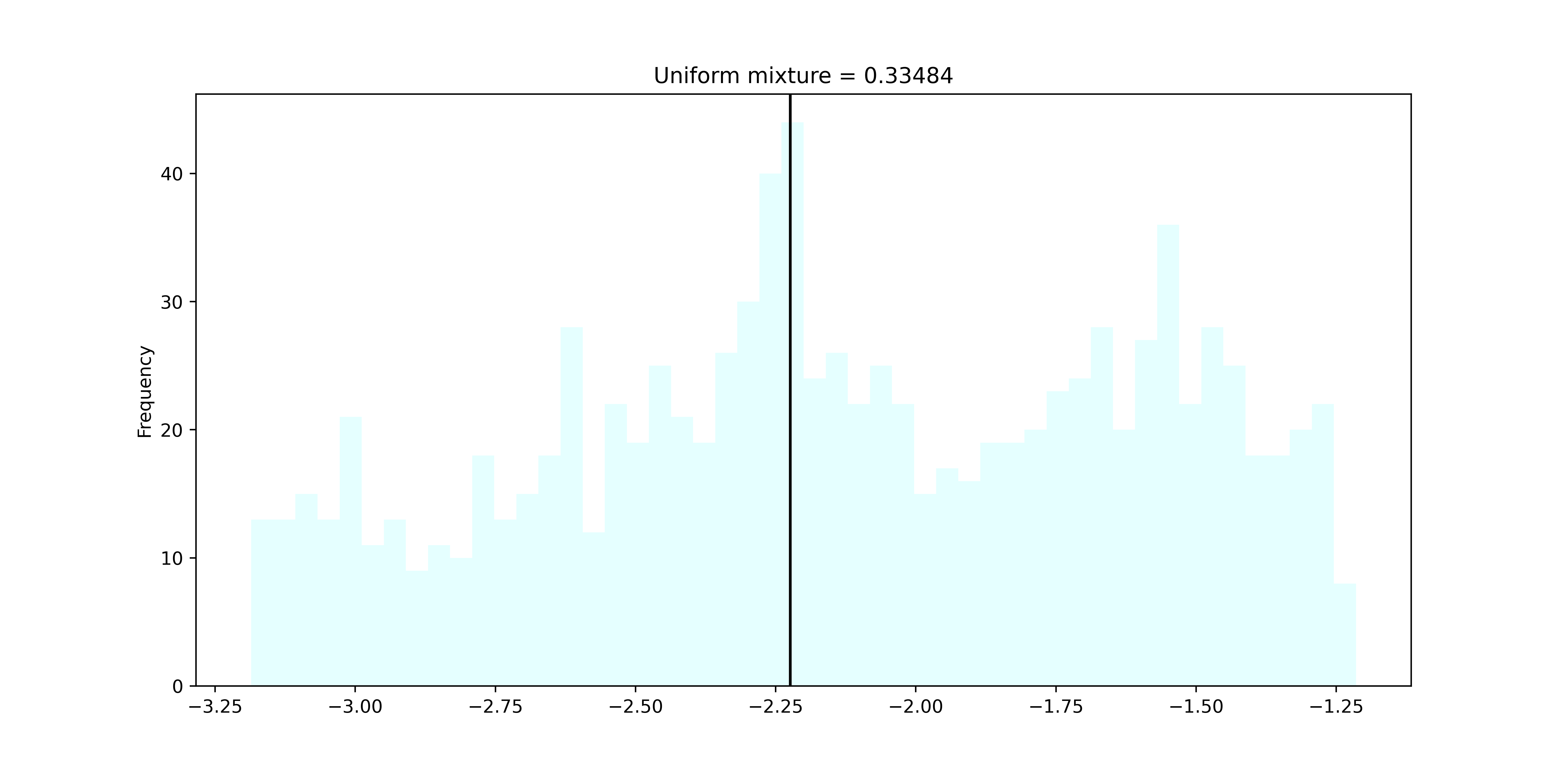}
    \caption{Four empirical distributions of the objective function values for four values of $\tilde{b}$. Clock-wise from the top-left: $\tilde{b}$: 0.01, 0.02, 0.03, 0.04, respectively, that yield $\hat{b} = 0.31224, 0.32779, 0.33484, 0.36894$. 
    The modes, indicated by the vertical black lines, are located at $-2.224,-2.221,-2.2635,-2.219$.
    The colours match those of Fig. \ref{fig:my_spsa}, except for 30\% transparency. 
    For comparison, see Fig. \ref{fig:objective_data2}, which suggests that the mean of the normally-distributed 
    component of the mixture is $-1.95$.}
    \label{fig:objective_data}
\end{figure}

\begin{figure}[!tbp]
    \centering
    \includegraphics[scale=0.5]{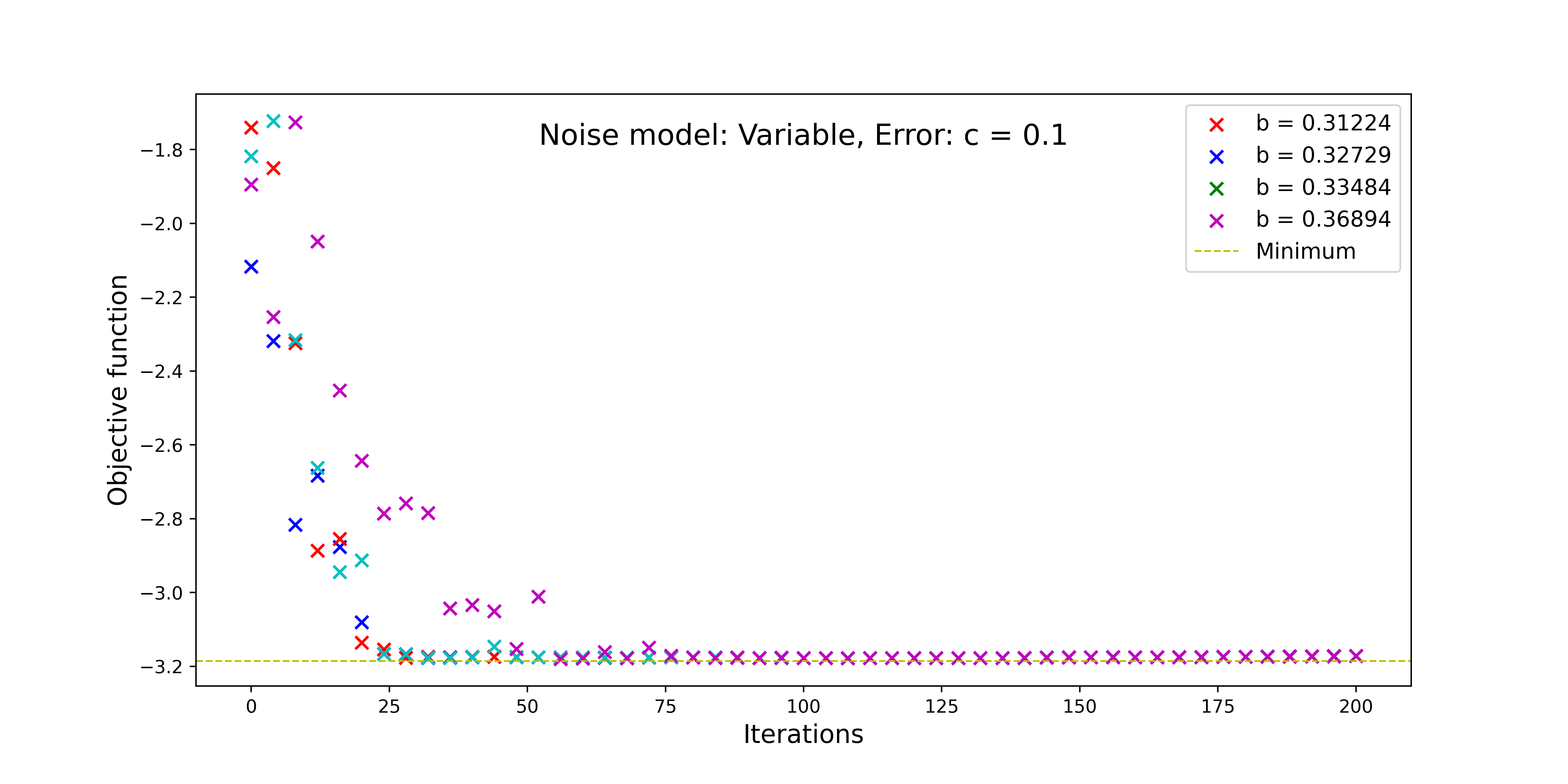}
    \caption{The iteration complexity of SPSA for four different values of bias $\hat b$ with error $c = 0.01$. }
    \label{fig:my_spsa}
\end{figure}

\begin{figure}[!htb]
    \centering
    \includegraphics[scale=0.50]{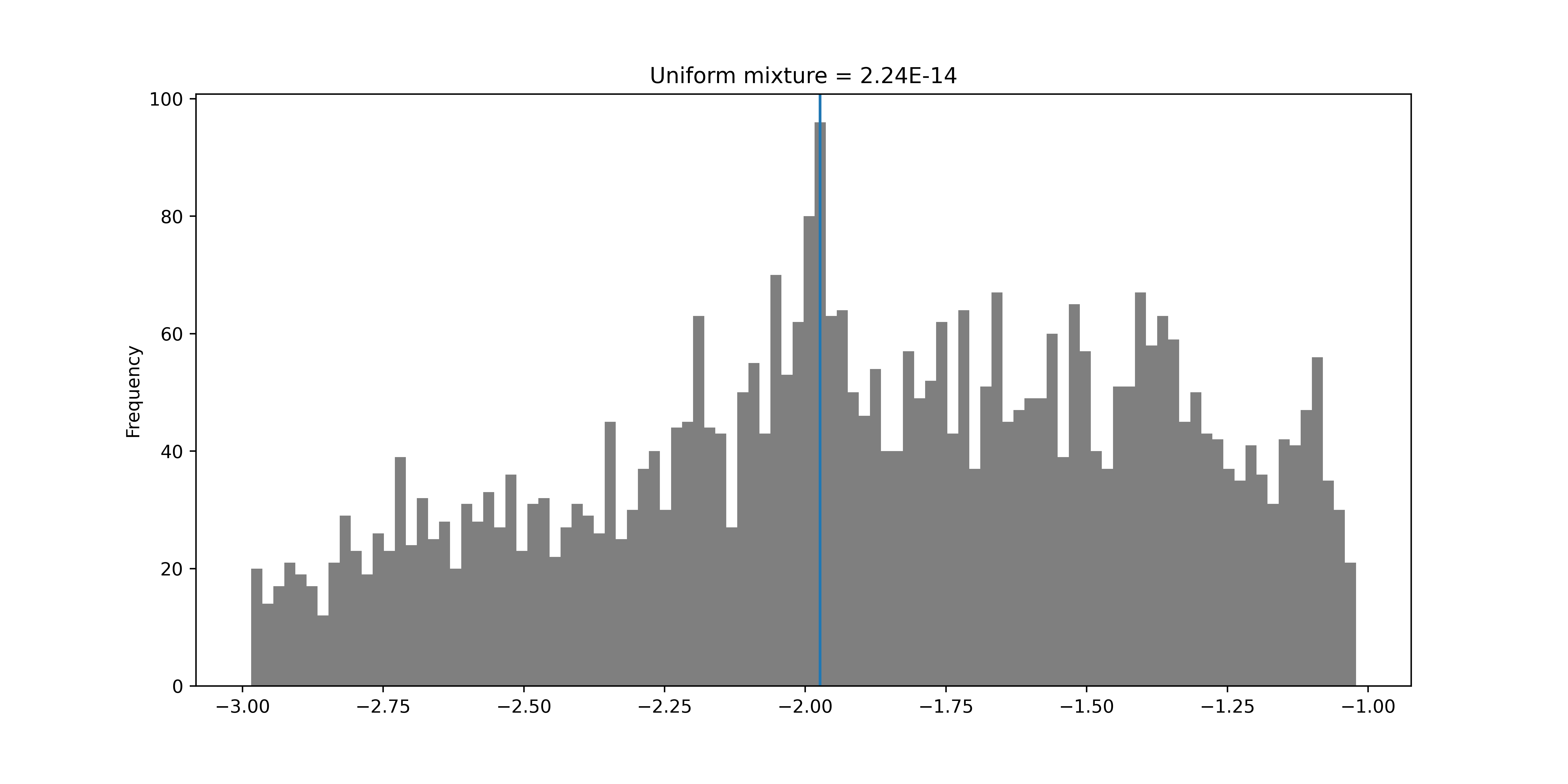}
    \caption{The empirical distribution of objective function values from 4096 evaluations in the ideal read-out scenario, i.e., where $P = \textrm{diag}(\mathds{1}_{16})$. 
    Our estimate of the bias is $2.24\times 10^{-14}$ with an empirical observation for the mean of the normal distribution at $-1.95$ and the mode of the mixture at $-1.9737$.}
    \label{fig:objective_data2}
\end{figure}

The empirical results in Fig.\ref{fig:my_spsa} seem to corroborate the analytical results of Thm.~\ref{th:spsa}. As expected from Thm.~\ref{th:spsa} and Fig.\ref{fig:boundsurface1}, the iteration complexity grows with the bias. 

{
\subsection{Comparison Between SPSA, 2PFA, and PSR}
Now we shall compare SPSA, the two point function approximation and the PSR as zero order gradient approximation schemes. For the experiment we chose 10 random starting points, and then started 20 trials for all three algorithms across parameter values $c=1e-1,1e-2,1e-3$ and the stepsize $\alpha=1e-1,1e-2,1e-3$. We are interested in studying the average final output value, the number of iterations until reaching convergence to the stationary point or noise dominated region, and the variability of the runs with respect to these criteria. To perform this analysis, we consider four different regimes with respect to small or large $c$ and small or large $\alpha$. Below we show figures that are representative of the results obtained. Note that the perturbation is not a parameter in PSR. We used the Lima noise model for all experiments. 
\subsubsection{Large Stepsize, Large Perturbation}
With this regime, we would expect fast convergence, however, to a fairly high variance noise-dominated region. In addition, monotonic decrease should be consistently seen, but perhaps not as steep as otherwise, as we expect high perturbation gradient estimates to be robust but imprecise. 

\begin{figure}[!tbp]
    \centering
    \includegraphics[scale=0.4]{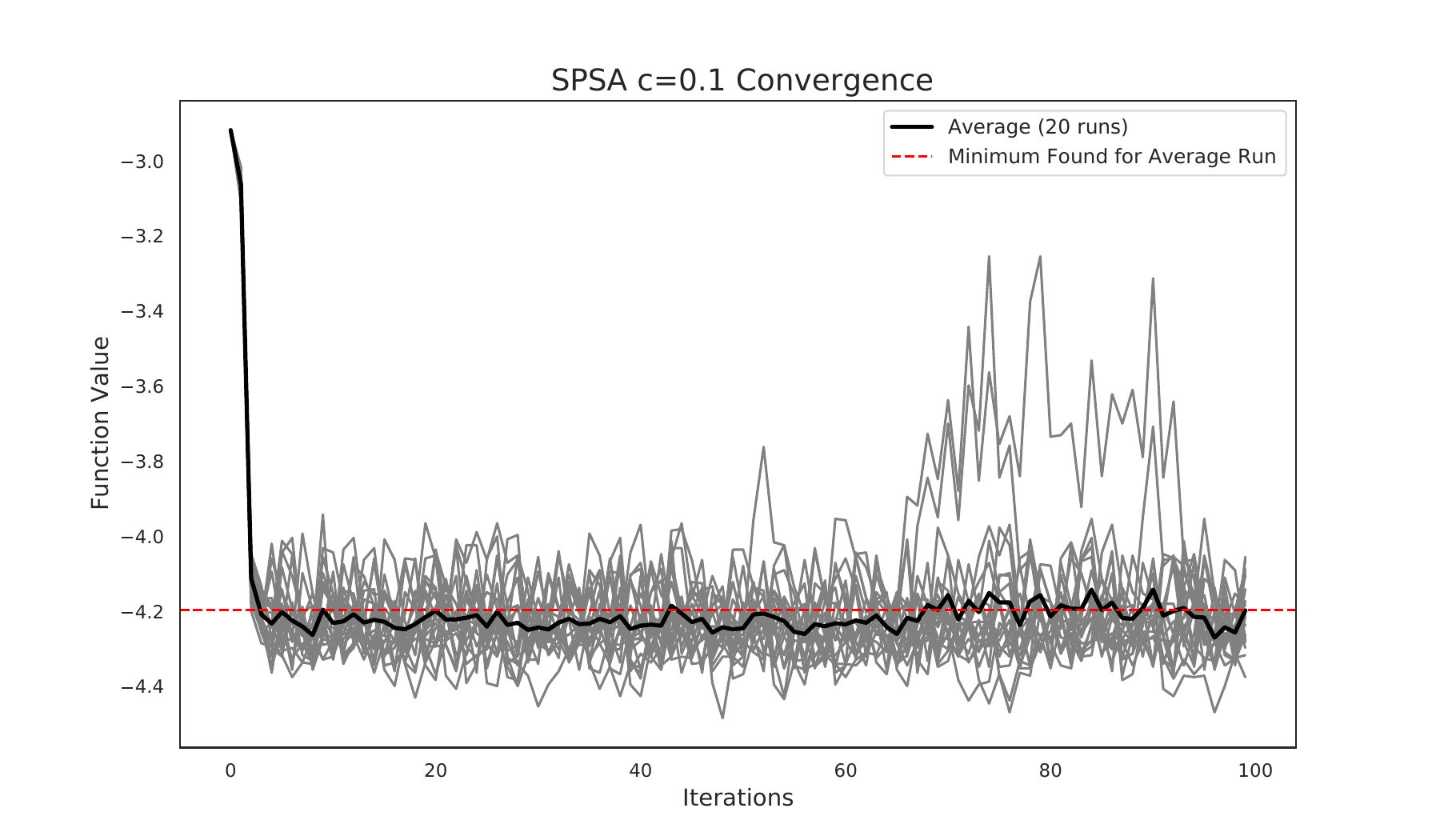}    \includegraphics[scale=0.4]{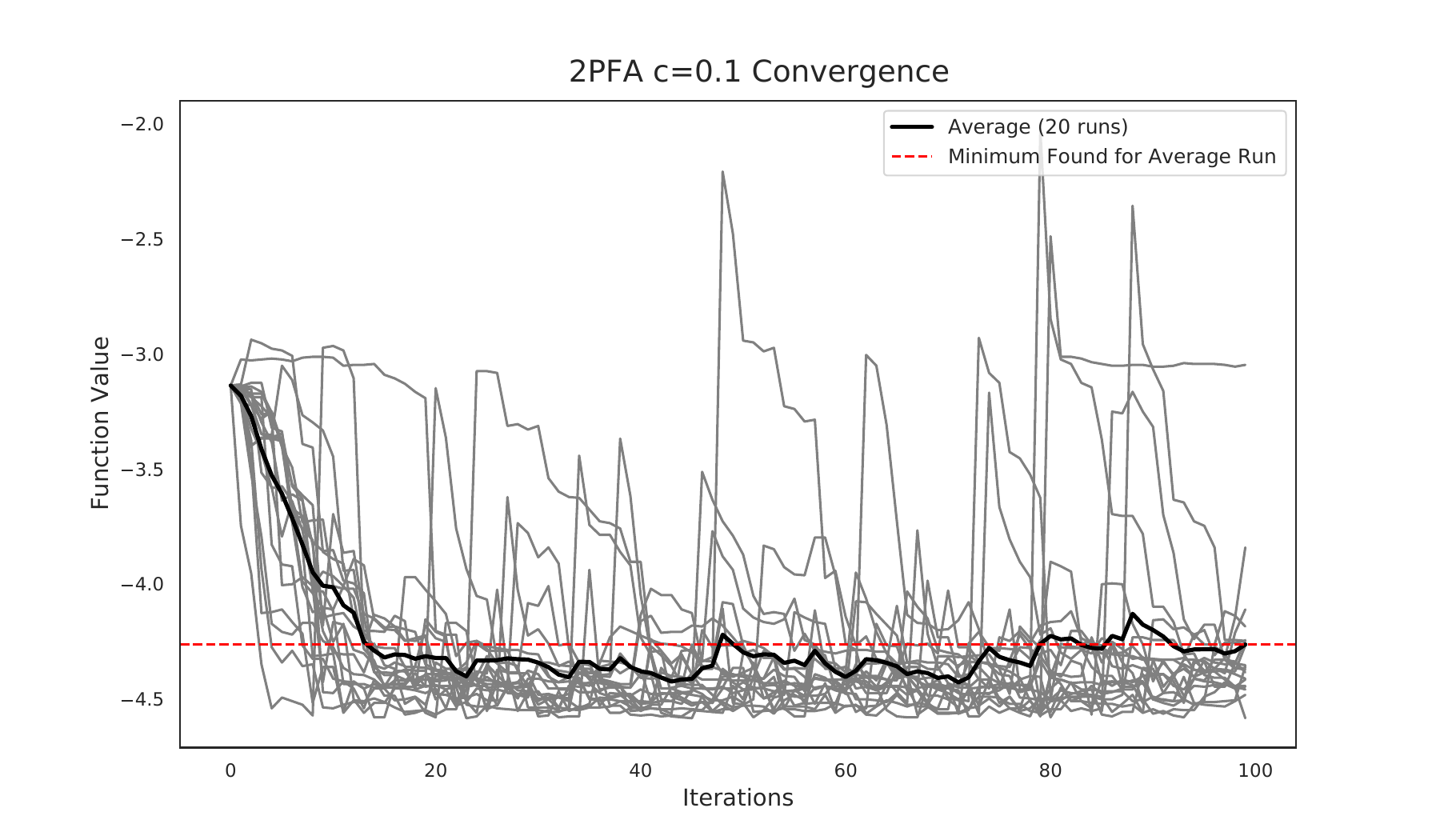}
\includegraphics[scale=0.4]{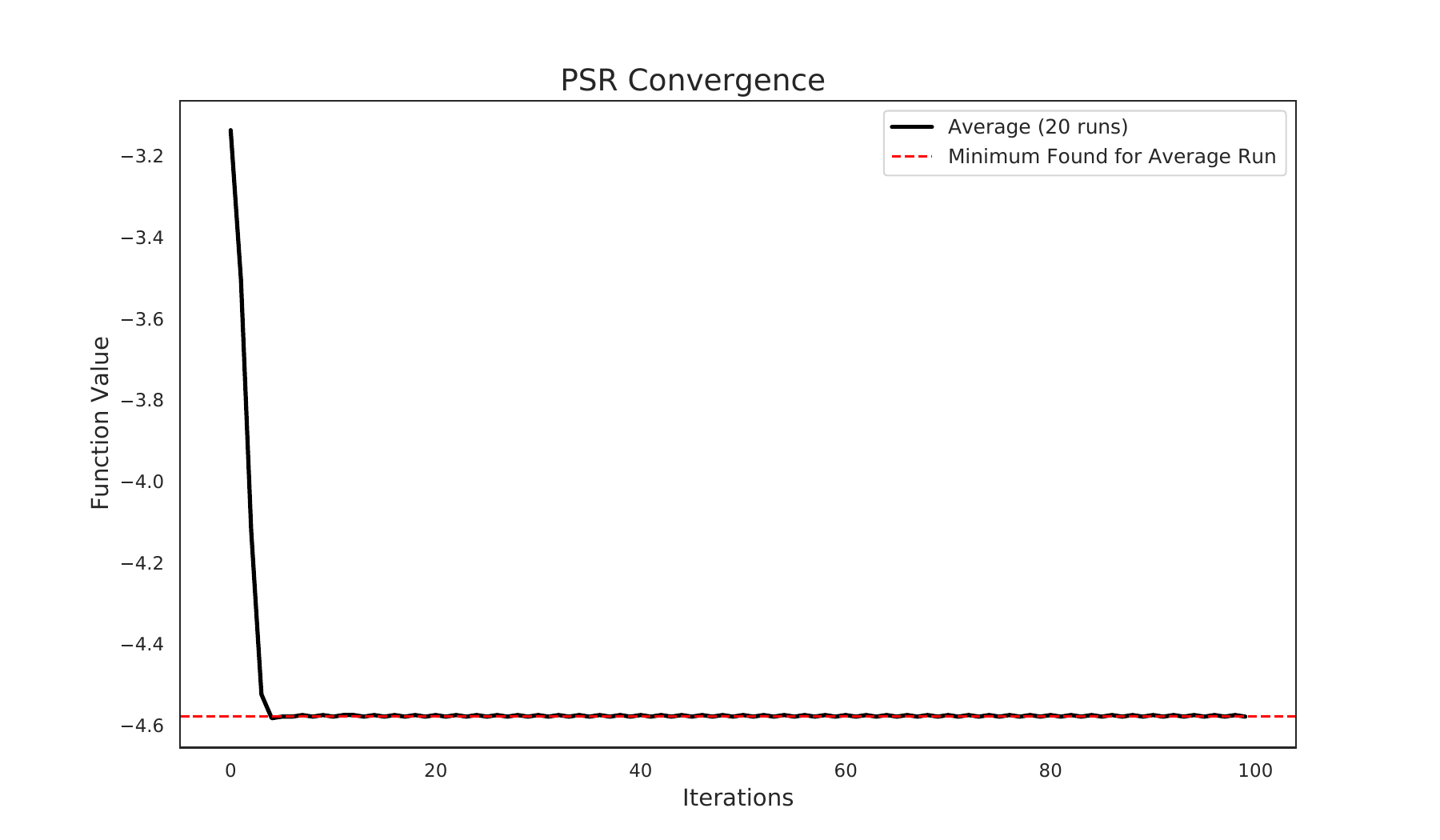}
    \caption{Results for SPSA and 2PFA for a representative run with $c=0.1$ and $\alpha=0.1$ and for PSR with $\alpha=0.1$.}
    \label{fig:bigcbigacompare}
\end{figure}
In Figure~\ref{fig:bigcbigacompare} we observe:
\begin{itemize}
    \item Impressively, even with an imprecise estimate with a high value of $c$, the speed of convergence of SPSA is the same as what can be considered the exact gradient of PSR. However, as expected the convergence is to a relatively large noise dominated region with a high average bias in the former case. By comparison 2PFA is both slower to converge in terms of iterations, while also being significantly less robust, with many late iterations even surpassing the objective value of the starting point.   
\end{itemize}


\subsubsection{Large Stepsize, Smaller Perturbation}
This regime indicates aggressive optimization towards a minimum, together with a precise but noisier estimate of the gradient, with the intention of finding a stationary point with lower bias. 
\begin{figure}[!tbp]
    \centering
    \includegraphics[width=0.49\columnwidth]{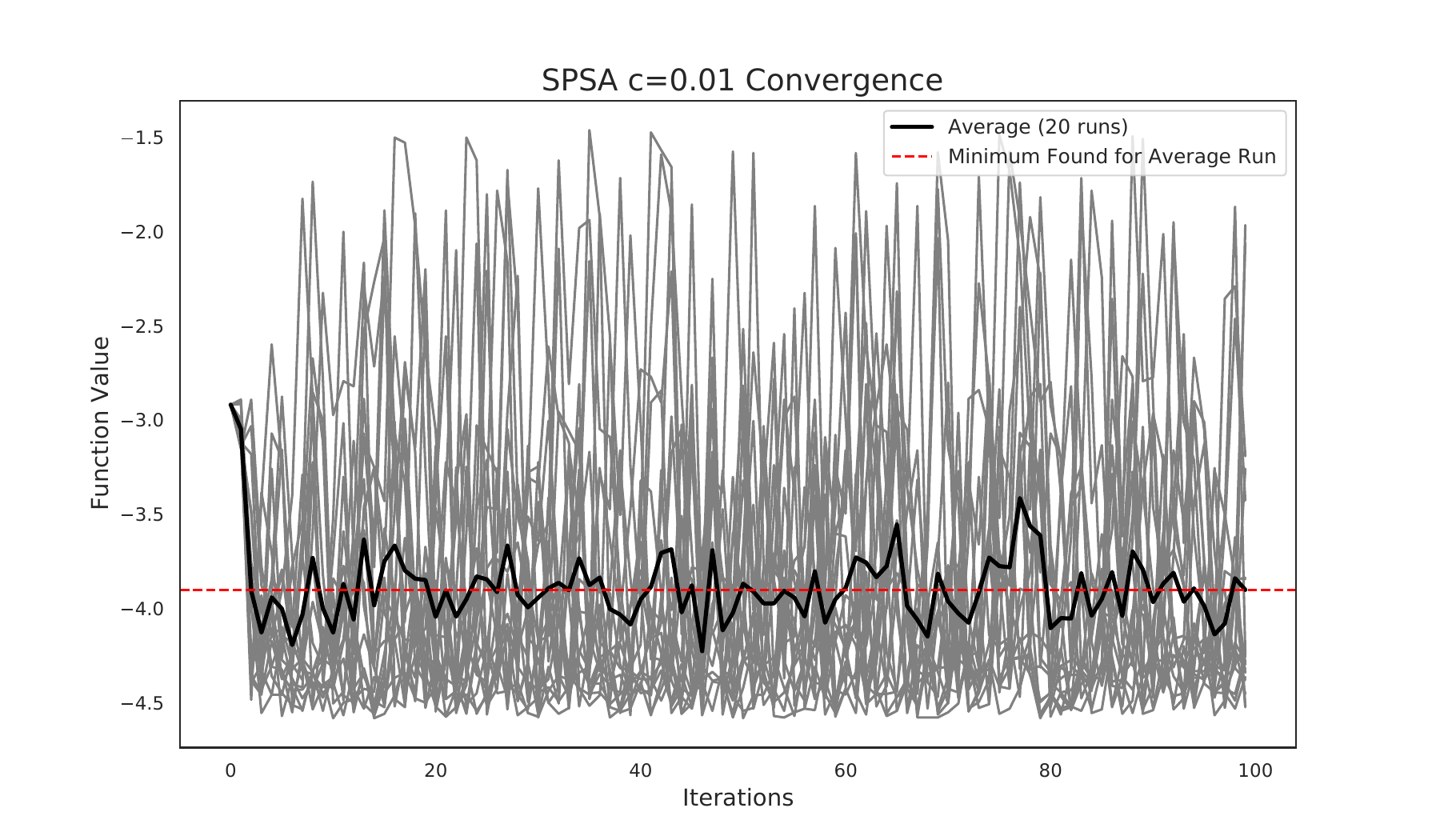}    \includegraphics[width=0.49\columnwidth]{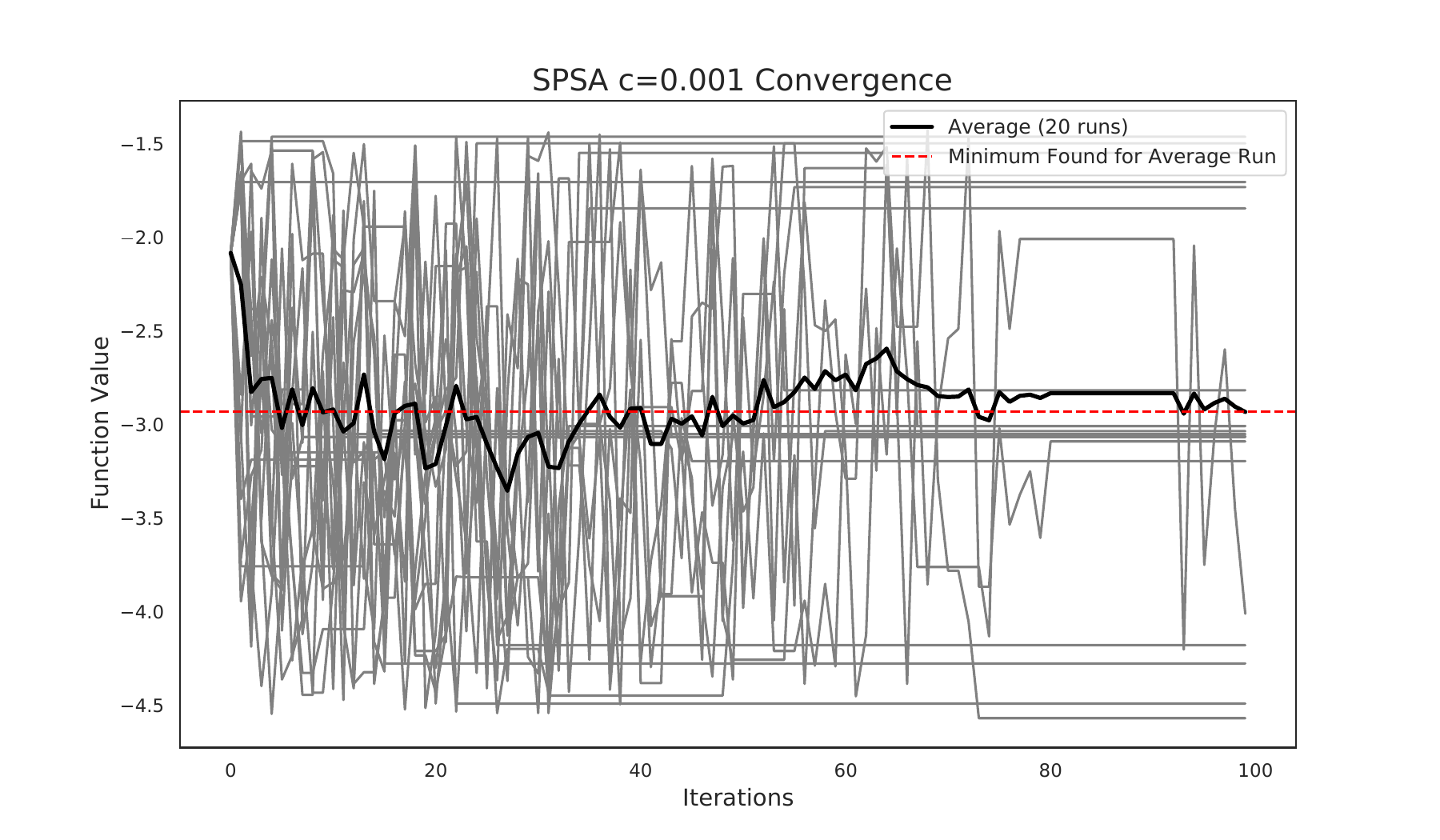} \\
\includegraphics[width=0.49\columnwidth]{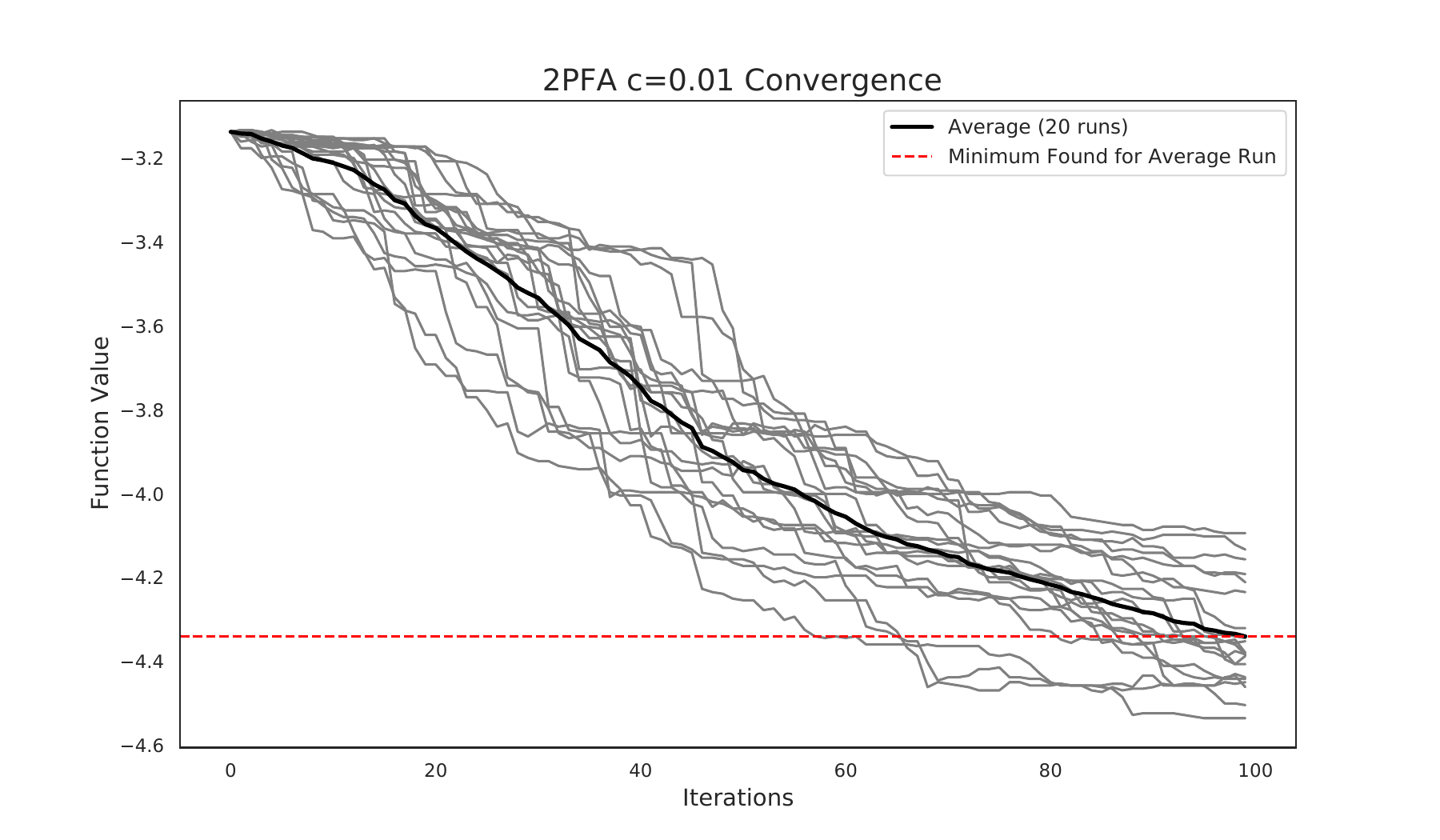}
\includegraphics[width=0.49\columnwidth]{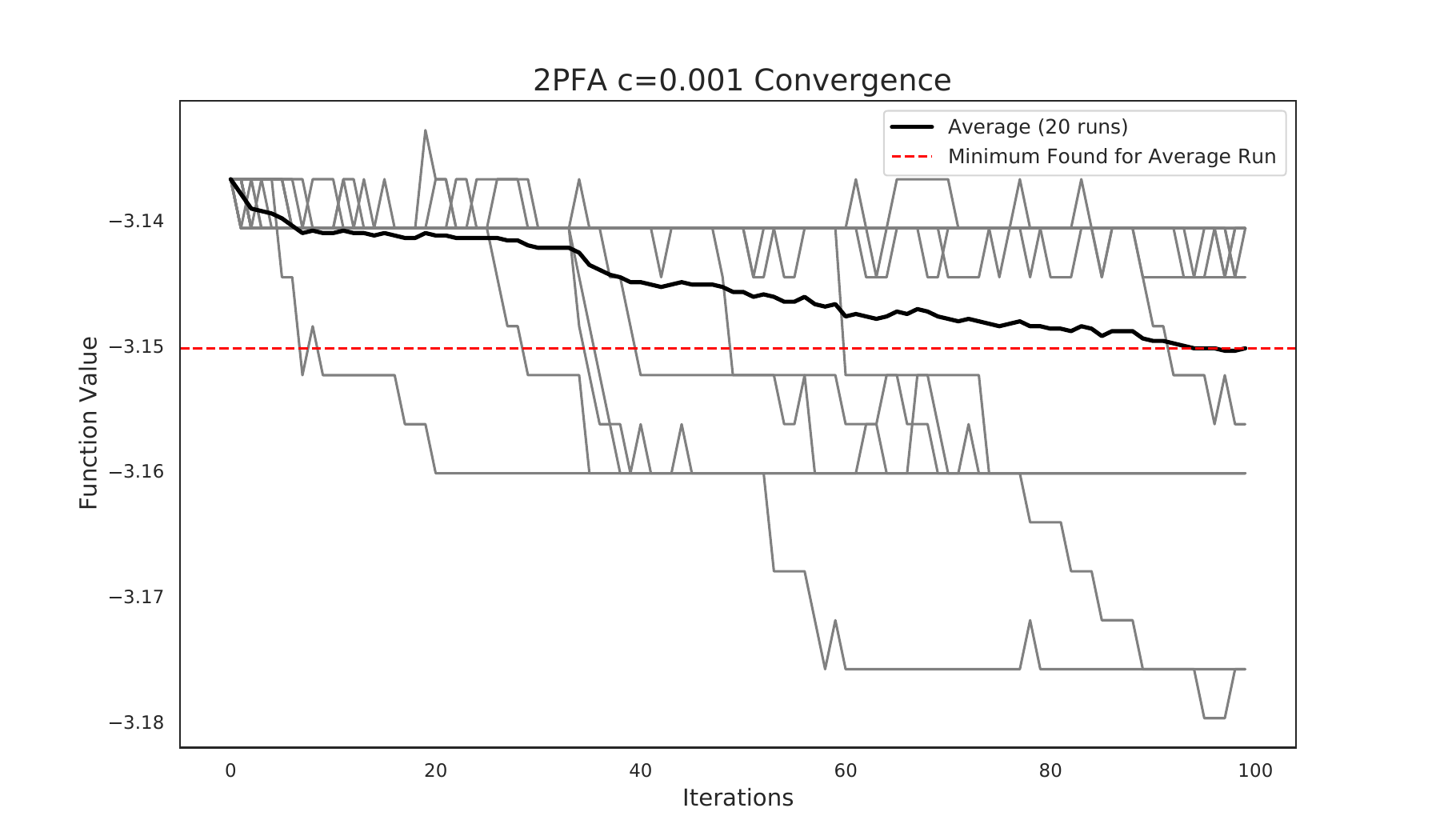}
    \caption{Results for SPSA and 2PFA for a representative run with $c=0.01,0.001$, and $\alpha=0.1$.}
    \label{fig:smallcbigacompare}
\end{figure}
The plots in Figure~\ref{fig:smallcbigacompare} are interesting and communicate some important subtleties in regards to the two zero order methods. We observe that SPSA is very noisy for even moderately small $c$ when the stepsize is large. However, 2PFA exhibits a steady and reliable, if slow, convergence with the moderately small perturbation to a fairly good objective value. Meanwhile it is completely useless with a higher variance estimate using $c=0.001$.

\subsubsection{Small Stepsize, Large Perturbation}
This regime is the ``safest'' in terms of obtaining a decent objective value reliably. A small stepsize means a smaller noise dominated region near optimality and more locally steps. A large perturbation results in a relatively low variance and but imprecise and high bias gradient.  
\begin{figure}[!tbp]
    \centering
    \includegraphics[width=0.49\columnwidth]{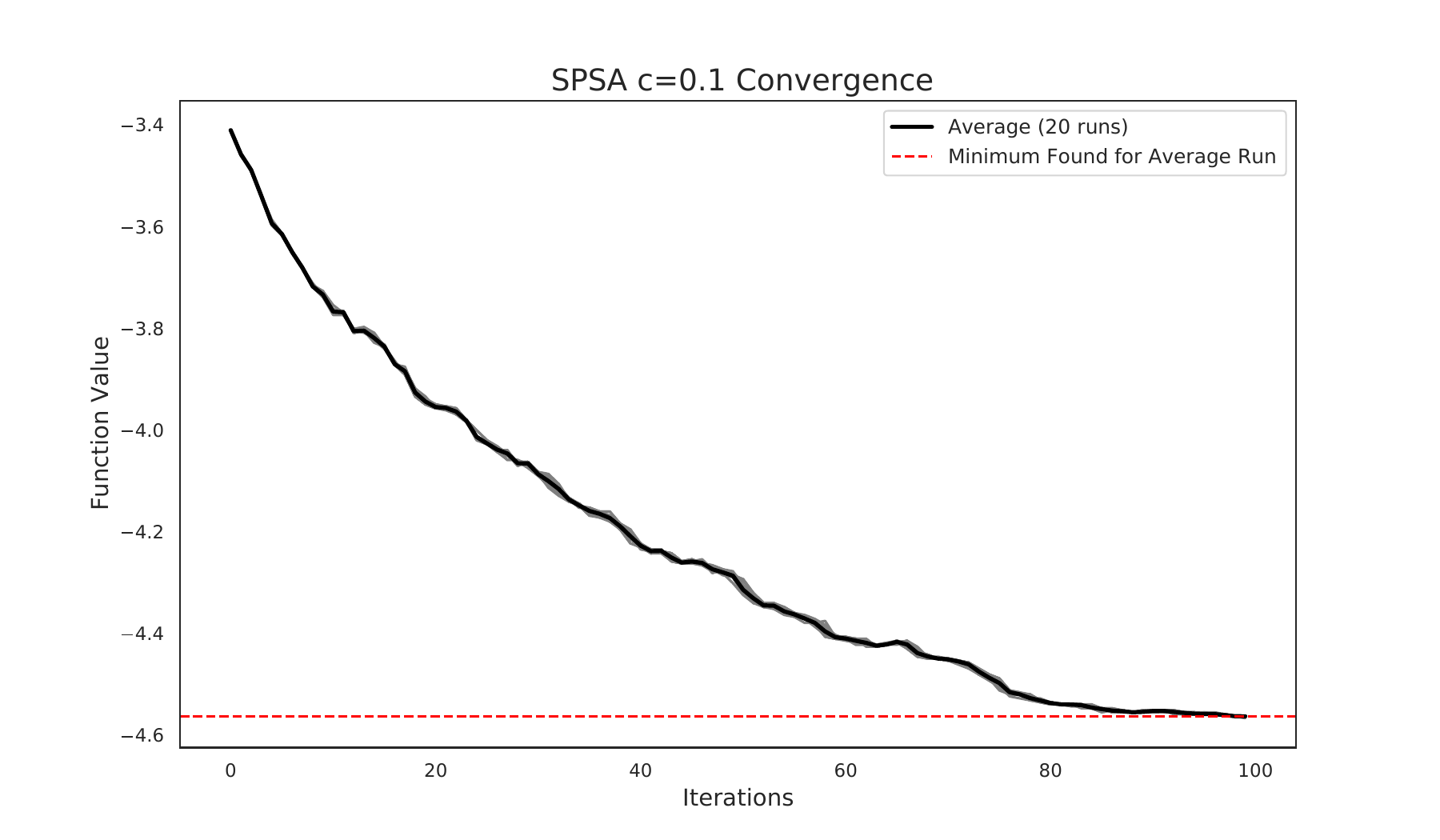}    \includegraphics[width=0.49\columnwidth]{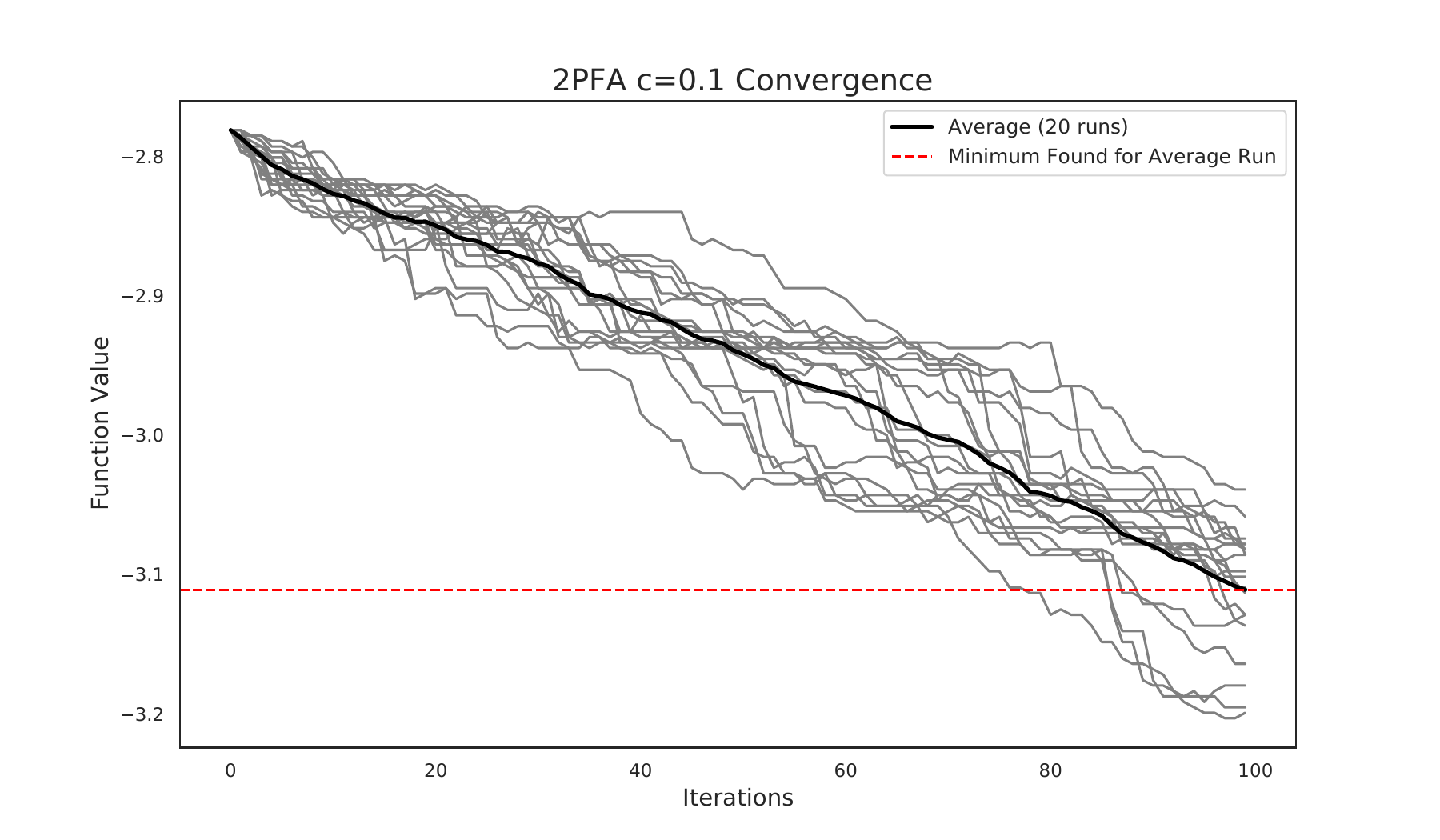}\\
\includegraphics[width=0.49\columnwidth]{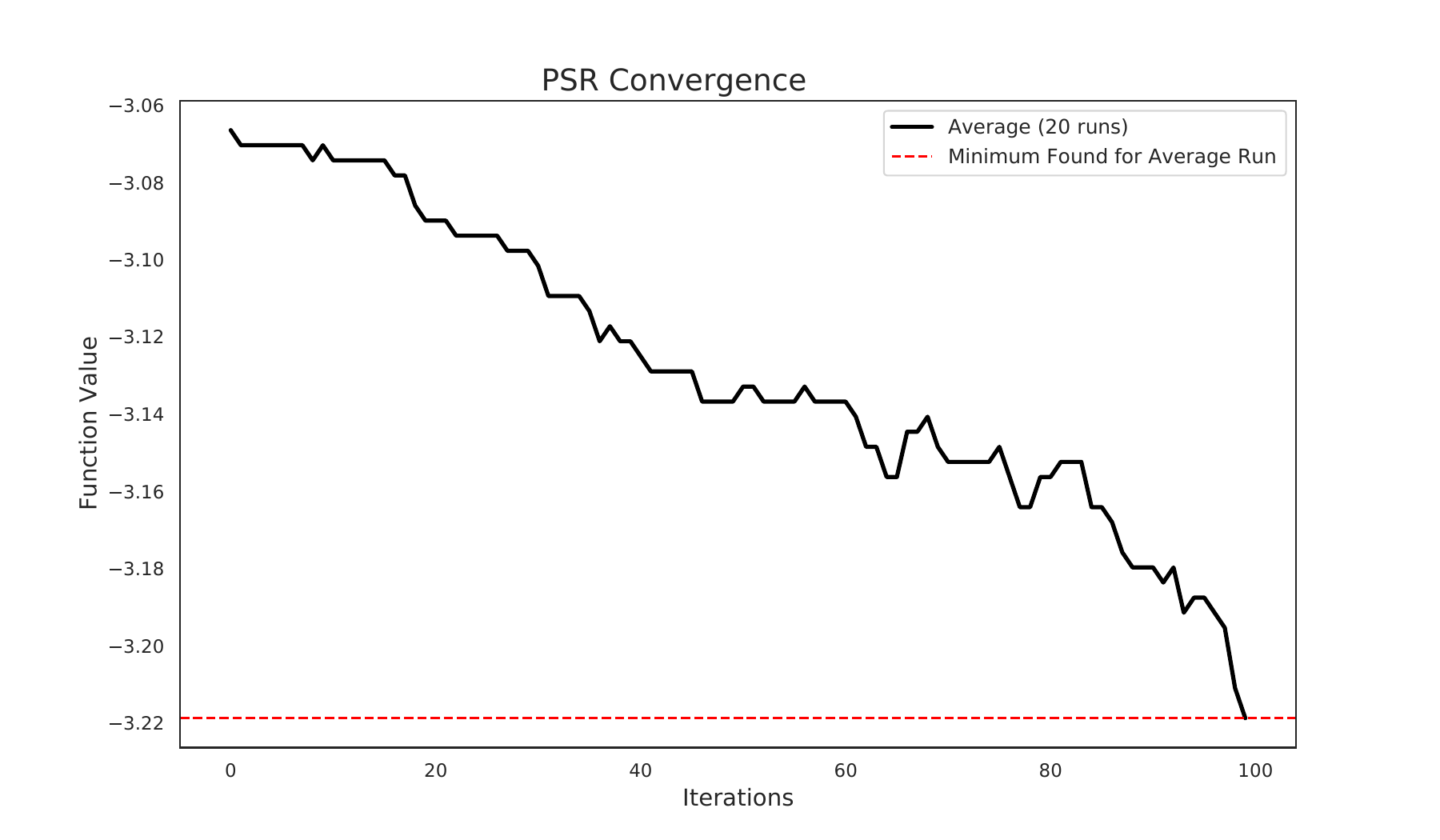}
    \caption{Results for SPSA and 2PFA for a representative run with $c=0.1$ and $\alpha=0.001$ and PSR with $\alpha=0.001$.}
    \label{fig:bigcsmallacompare}
\end{figure}
From Figure~\ref{fig:bigcsmallacompare} we observe, remarkably, that SPSA converges faster than even the PSR in the safe regime. Otherwise, all three algorithms exhibit steady decrease with the iterations, with 2PFA being the noisiest of the algorithms.
\subsubsection{Medium Stepsize, Small Perturbation}
This regime indicates a steady decrease, but with a precise gradient estimate, thus obtaining a final objective value lower in mean and higher in variance. 
\begin{figure}[!tbp]
    \centering
    \includegraphics[width=0.49\columnwidth]{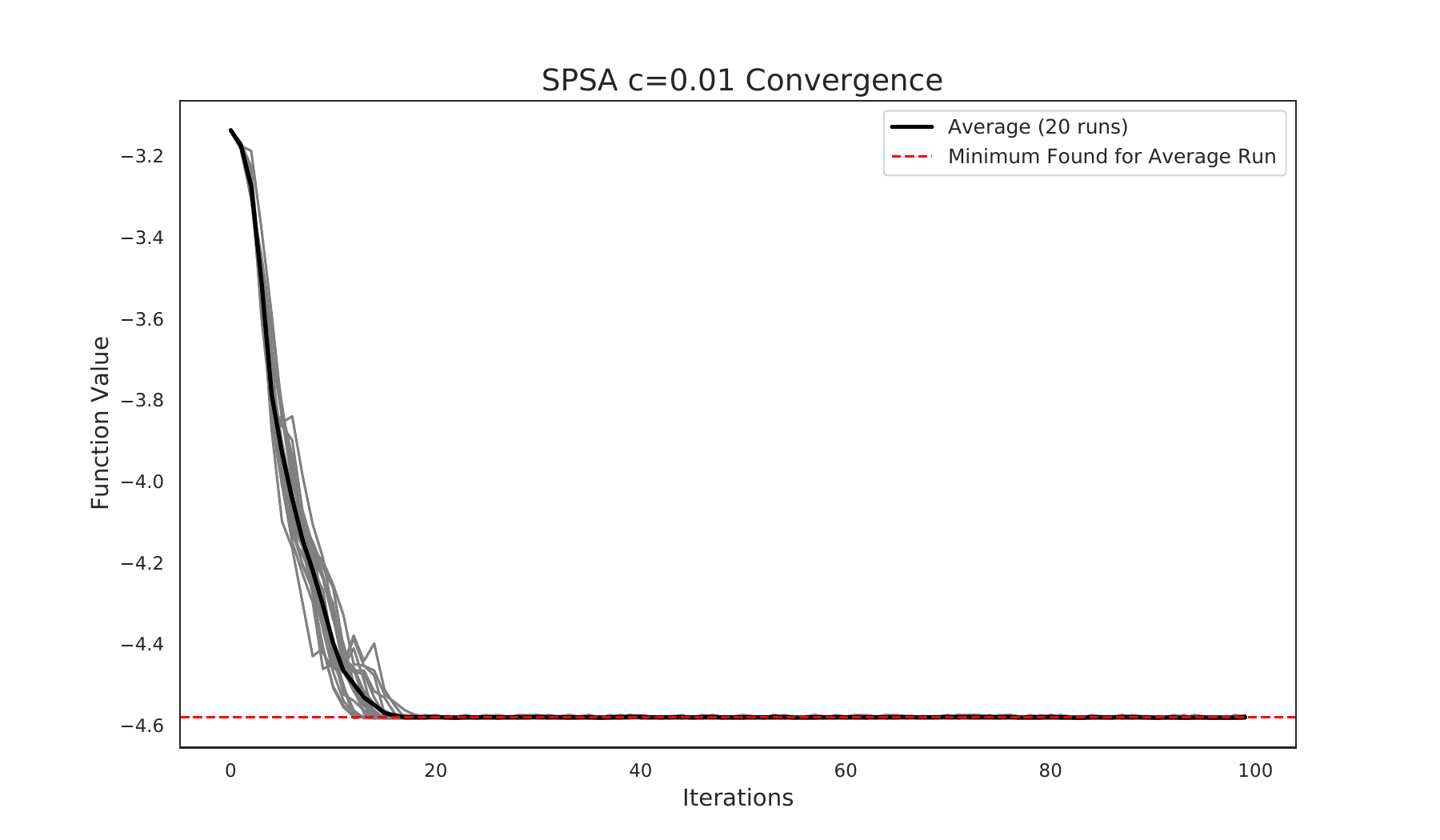}    \includegraphics[width=0.49\columnwidth]{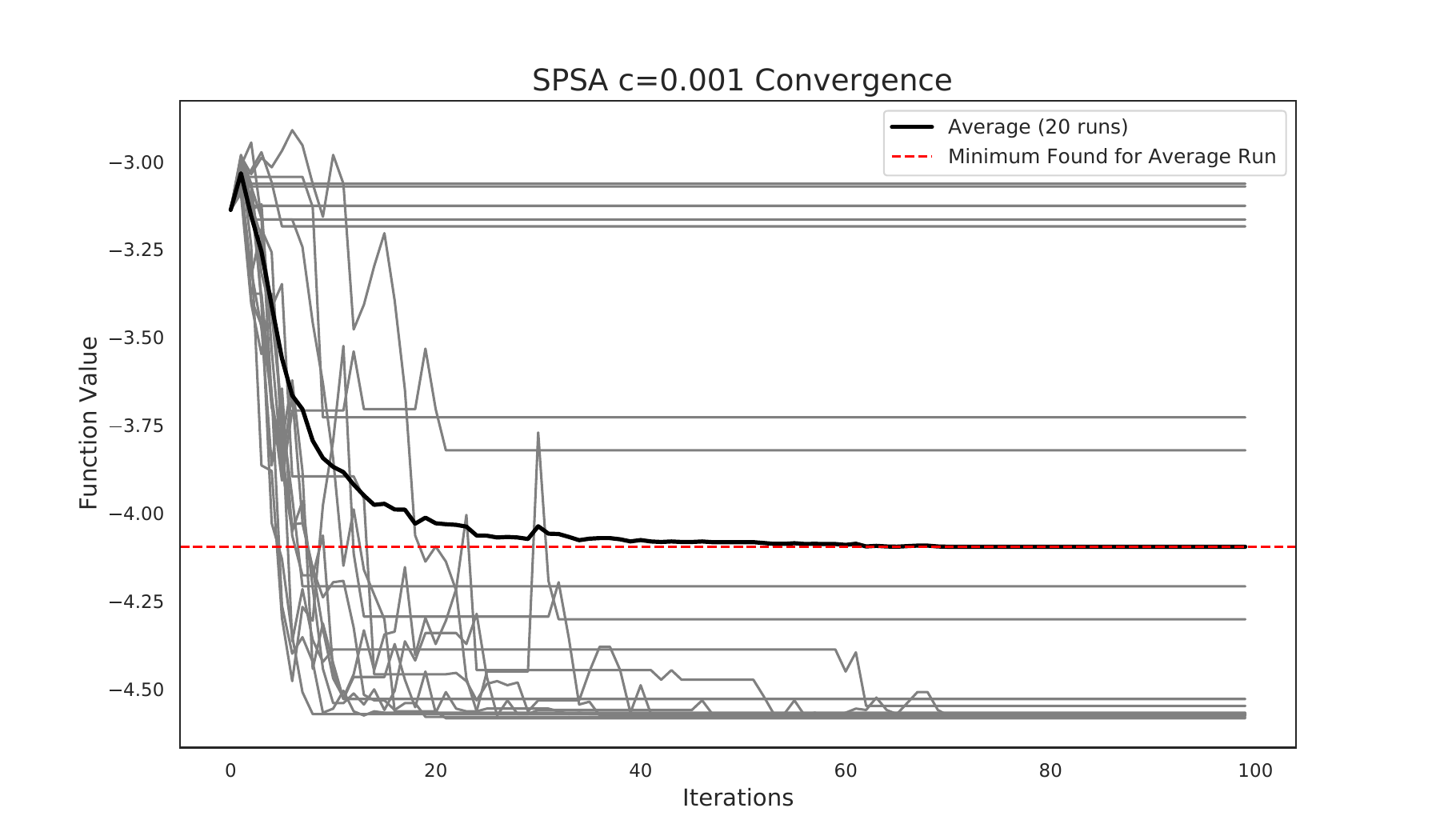}\\
\includegraphics[width=0.49\columnwidth]{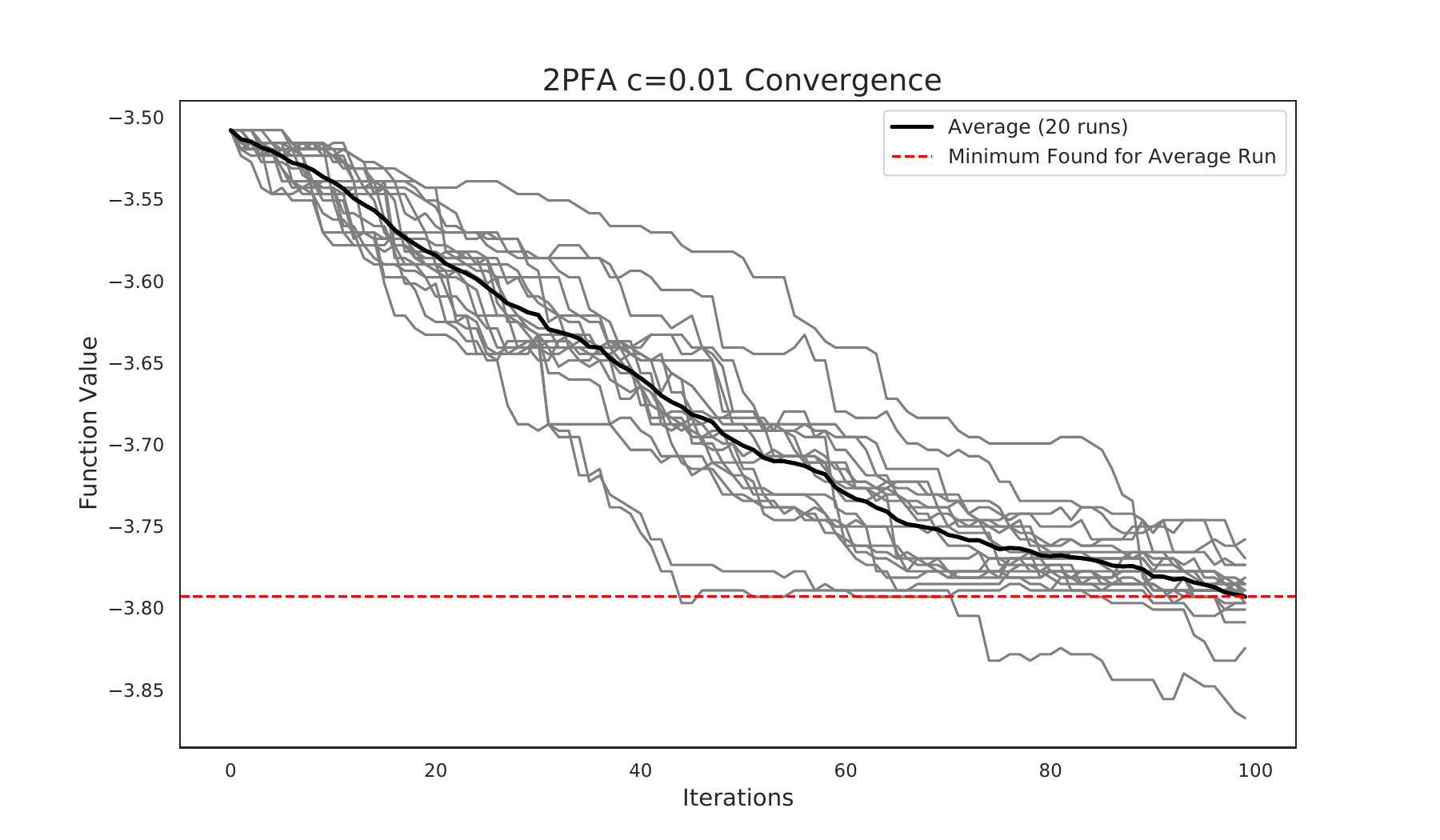}
\includegraphics[width=0.49\columnwidth]{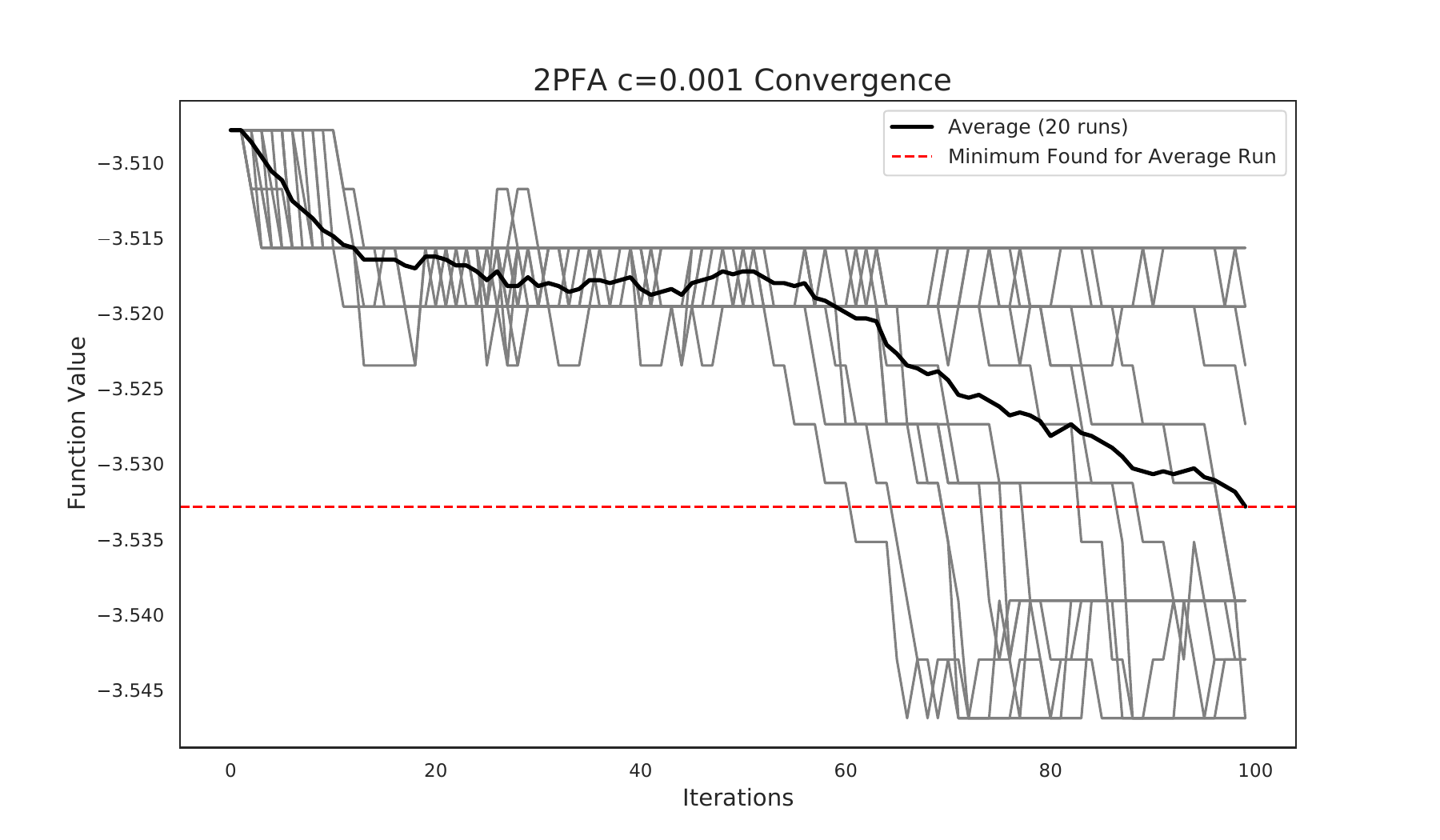}
    \caption{Results for SPSA and 2PFA for a representative run with $c=0.01,0.001$ and $\alpha=0.1$.}
    \label{fig:smallcsmallacompare}
\end{figure}
Finally, Figure~\ref{fig:smallcsmallacompare} indicates that with a small perturbation, the variance of the two point function approximation estimates practically eliminates any potential advantage of the steady reliable decrease of a smaller stepsize. For a more moderate perturbation, the convergence is both slow and noisy, albeit at least relatively steady. By contrast SPSA clearly has an optimal threshold, but with a small value of $c$, some iterates still obtain a good quality objective value. The top Figure indicates particularly strong performance in speed and reliability, together with the lowest bias seen among the comparisons. 
}


\section{A Discussion and Conclusion}\label{sec:conc}

We have analyzed the convergence guarantees of variational quantum algorithms in the presence of bias in the quantum part. Let us interpret these results. 

In regards to the asymptotic bias in the approximate measure of stationarity, we can notice that, predictably, the greater the bias in the function evaluation, the greater the error lower bound in the ultimate optimality criterion. A coarser discretization for estimating the gradient appears to stabilize the effect of the function evaluation bias but carries the trade-off in introducing additional asymptotic optimality bias in the overall rougher accuracy in gradient estimation. In addition, as typical for zero-order methods, the asymptotic bias scales poorly as a function of the problem dimension.

With respect to the iteration complexity, we can see that the overall rate is not affected by the presence of bias. The rate of convergence is the same as in a standard stochastic gradient analysis. That said, for the two terms in the convergence guarantee that do drop to zero with more iterations, exhibit a larger constant with greater bias.

More broadly, this is a small step towards a better understanding of the performance of variational quantum algorithms. Until recently, these have been seen as heuristics with no guarantees whatsoever. This and recent related work \cite[e.g.]{Egger2021warmstartingquantum} suggests that their performance can be characterized in some detail.

\paragraph*{Acknowledgement}
This work is a collaboration between the Fidelity Center for Applied Technology, Fidelity Labs, LLC., and the Czech Technical University in Prague. 

\bibliographystyle{quantum}
\bibliography{refs}
\FloatBarrier



\clearpage
\appendix
\section{Appendix: Proofs of Theoretical Results}
\label{app:proofs}

To begin with, we prove a bound on the error associated with the two point function approximation technique:
\begin{lemma}\label{lem:accuracy}
Under Assumption~\ref{as:boundbias}, the two point estimate~\eqref{eq:twopointNesterov} satisfies, for all $\bs\theta$
\[
\bs g(\bs\theta,\xi^1,\xi^2)=\nabla f_{c}(\bs\theta)+\bs r(\xi^1,\xi^2)+\bs b_c(\bs\theta),\,\mathbb{E}[\bs r(\xi^1,\xi^2)]=\bs 0,\,\|\bs b_c(\bs\theta)\|\le \frac{2b}{c}
\]
\end{lemma}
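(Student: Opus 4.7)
The plan is to substitute the noise decomposition from Assumption~\ref{as:boundbias} directly into the two-point estimator~\eqref{eq:twopointNesterov} and then split the result into three pieces corresponding to the deterministic smoothed gradient, a mean-zero stochastic part, and a pure bias part. Writing $F(\bs\theta,\xi) = f(\bs\theta) + r(\bs\theta,\xi) + b(\bs\theta,\xi)$ with $\mathbb{E}[r]=0$ and $\|b\|\le b$ almost surely, I would expand
\[
\bs g(\bs\theta,\xi^{1},\xi^{2}) \;=\; \frac{f(\bs\theta+c\bs\Delta)-f(\bs\theta)}{c}\bs\Delta \;+\; \frac{r(\bs\theta+c\bs\Delta,\xi^{1})-r(\bs\theta,\xi^{2})}{c}\bs\Delta \;+\; \frac{b(\bs\theta+c\bs\Delta,\xi^{1})-b(\bs\theta,\xi^{2})}{c}\bs\Delta.
\]

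Next I would appeal to the Gaussian smoothing identity~\eqref{eq:litdiff}(a), which gives $\mathbb{E}_{\bs\Delta}\!\left[\tfrac{f(\bs\theta+c\bs\Delta)-f(\bs\theta)}{c}\bs\Delta\right]=\nabla f_{c}(\bs\theta)$. I would then add and subtract $\nabla f_{c}(\bs\theta)$ and define $\bs b_c(\bs\theta)$ as the expectation (over $\bs\Delta,\xi^{1},\xi^{2}$) of the bias term, grouping all remaining contributions into $\bs r(\xi^{1},\xi^{2})$. Mean-zero-ness of $\bs r$ follows by construction on the first term, and from $\mathbb{E}[r(\cdot,\xi)]=0$ together with the independence of $\xi^{1},\xi^{2}$ from $\bs\Delta$ on the second.

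It remains to bound $\|\bs b_c(\bs\theta)\|$. By Jensen's inequality and the triangle inequality,
\[
\|\bs b_c(\bs\theta)\| \;\le\; \mathbb{E}\!\left[\frac{\|b(\bs\theta+c\bs\Delta,\xi^{1})\|+\|b(\bs\theta,\xi^{2})\|}{c}\,\|\bs\Delta\|\right] \;\le\; \frac{2b}{c}\,\mathbb{E}\|\bs\Delta\|,
\]
using the uniform bound $\|b(\cdot,\cdot)\|\le b$ from~\eqref{eq:biasas}. The claimed bound $\|\bs b_c\|\le 2b/c$ then follows from the implicit normalization $\mathbb{E}\|\bs\Delta\|\le 1$ used in the smoothing construction (this is exact for $\bs\Delta$ uniform on the unit sphere, which is the convention compatible with Nesterov--Spokoiny smoothing as cited after~\eqref{eq:smoothedfunc}).

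The main technical subtlety, and the step that will require a remark in the write-up, is precisely this last point: the cleanest bound $\|\bs b_c\|\le 2b/c$ hinges on the chosen normalization for $\bs\Delta$, and a standard Gaussian $\bs\Delta\sim \mathcal{N}(0,I)$ would introduce a $\mathbb{E}\|\bs\Delta\|=O(\sqrt{p})$ factor. Everything else (splitting, measurability, mean-zero identification) is bookkeeping once the decomposition $F=f+r+b$ is in hand; the nontrivial ingredient is isolating the bias and controlling it \emph{uniformly} in $\bs\Delta$, which is exactly what the a.s.\ boundedness of $b$ in Assumption~\ref{as:boundbias} delivers.
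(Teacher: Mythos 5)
Your proposal is correct and follows essentially the same route as the paper's proof: decompose $F=f+r+b$ inside the two-point estimator, identify the deterministic part with $\nabla f_{c}$ via the Gaussian smoothing identity~\eqref{eq:litdiff}, and bound the residual bias term by $\tfrac{2b}{c}$ times a first moment of $\bs\Delta$ using the almost-sure bound $\|b\|\le b$. Your closing remark is in fact sharper than the paper's own argument, which asserts the bound $2b/c$ directly by appeal to ``the symmetry of the Gaussian'' while working with $\bs\Delta\sim\mathcal{N}(0,I)$ as in~\eqref{eq:smoothedfunc}, thereby silently dropping the $\mathbb{E}\|\bs\Delta\|=O(\sqrt{p})$ factor that you correctly identify as requiring a unit-sphere or otherwise normalized convention for $\bs\Delta$.
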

\begin{proof}
\[
\begin{array}{l}
\mathbb{E}[\bs g] = \mathbb{E}_{\bs\Delta} \left[
\frac{\mathbb{E}_{\xi^1}[F(\bs\theta+c\bs \Delta,\xi^1)]-\mathbb{E}_{\xi^2}[F(\bs\theta,\xi^2)]}{c}\bs \Delta\right] \\
\quad = \mathbb{E}_{\bs \Delta} \left[
\frac{f(\bs\theta+c\bs \Delta)+\mathbb{E}_{\xi^1}[n(\bs\theta+c\bs \Delta,\xi^1)]-f(\bs \theta)-\mathbb{E}_{\xi^2}[n(\bs \theta,\xi^2)]}{c}\bs \Delta\right] \\ 
\quad = \nabla f_{c} \bs \theta)+\mathbb{E}_{\bs \Delta} \left[
\frac{\mathbb{E}_{\xi^1}[b(\bs \theta+c\bs \Delta,\xi^1)]-\mathbb{E}_{\xi^1}[b(\bs \theta,\xi^2)]}{c}\bs \Delta\right]
\end{array}
\]
where we have used the definition of $\nabla f_{c}$ as given in~\eqref{eq:smoothedfunc} and the functional form of the bias given in~\eqref{eq:biasas}.

Now we can bound the second term,
\[
\begin{array}{l}
\left\|\mathbb{E}_{\bs \Delta} \left[
\frac{\mathbb{E}_{\xi^1}[b(\bs \theta+c\bs \Delta,\xi^1)]-\mathbb{E}_{\xi^1}[b(\bs \theta,\xi^2)]}{c}\bs \Delta\right]\right\|  \le \frac{2}{(2\pi)^{\frac{n}{2}}}\int_0^{\infty}\frac{\|b(\cdot)\|_{\infty}+\|b(\cdot)\|_{\infty}}{c}|\bs \Delta| d\bs \Delta \le 2b/c
\end{array}
\]
where we used the symmetry of the Gaussian. 
\end{proof}

Now we can show the main result, which we restate for reference. The proof follows standard arguments, e.g.,~\cite[Theorem 3.2]{ghadimi2013stochastic}.
\begin{theorem}
Consider the two point function approximation algorithm with the iteration~\eqref{eq:saiter} for a problem satisfying Assumptions~\ref{as:boundbias} and~\ref{as:lipgrad}. 

\begin{enumerate}
    \item 
Consider a fixed budget of $K$ iterations. Let $\alpha_k\equiv\alpha/\sqrt{K}$ with $\alpha\le L/6$. It holds that the iterates satisfy:
\begin{equation}\label{eq:totalconvfixed0}
   \min\limits_{k\in[K]} \mathbb{E}\|\nabla f(\bs \theta_k)\|^2 \le 4\frac{f_c(\bs \theta_k)-f^*_{c}}{\alpha\sqrt{K}} +2\frac{L\alpha(\sigma^2+\frac{8b^2}{c^2})}{\alpha\sqrt{K}} +\left[\frac{2b^2}{c^2}+\frac{c^2}{2}L(p+3)^3\right]
\end{equation}

\item Consider a diminishing stepsize of $\alpha_k=\alpha/k^{\gamma}$ with $\gamma\in (1/2,1]$ and $\alpha\le L/2$. Then it holds that, taking,
\[
\bs \theta_R\sim \bs \theta_k\text{ w.p. } \alpha_k/\sum\limits_{l=1}^K \alpha_l
\]
we have,
\begin{equation}\label{eq:totalconvdim0}
   \mathbb{E}\|\nabla f(\bs \theta_R)\|^2 \le 4\frac{f_c(\bs \theta_k)-f^*_{c}}{\alpha K^\gamma} +2\frac{L\alpha(\sigma^2+\frac{8b^2}{c^2})}{\alpha K^{\gamma}} +\left[\frac{2b^2}{c^2}+\frac{c^2}{2}L(p+3)^3\right]
\end{equation}
\end{enumerate}
\end{theorem}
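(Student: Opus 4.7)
The plan is to adapt the standard Ghadimi--Lan style analysis for nonconvex stochastic gradient descent to the Gaussian smoothed surrogate $f_c$ from \eqref{eq:smoothedfunc}, then translate the resulting bound on $\mathbb{E}\|\nabla f_c\|^2$ back to $\mathbb{E}\|\nabla f\|^2$ via the approximation estimate in \eqref{eq:litdiff}. Note that because $f$ is $L$-smooth, so is $f_c$, and $f_c$ lies within $\tfrac{c^2}{2}Lp$ of $f$ pointwise, so a lower bound $f_c^*$ exists whenever $f$ is bounded below.

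The first step is the descent inequality applied to $f_c$:
\begin{equation*}
f_c(\bs\theta_{k+1}) \le f_c(\bs\theta_k) - \alpha_k\langle \nabla f_c(\bs\theta_k),\bs g_k\rangle + \tfrac{L\alpha_k^2}{2}\|\bs g_k\|^2.
\end{equation*}
Taking conditional expectation and invoking Lemma \ref{lem:accuracy} to write $\bs g_k = \nabla f_c(\bs\theta_k) + \bs r_k + \bs b_c(\bs\theta_k)$ with $\mathbb{E}[\bs r_k]=0$ and $\|\bs b_c\|\le 2b/c$, the cross term becomes $-\|\nabla f_c\|^2 - \langle\nabla f_c,\bs b_c\rangle$, which by Young's inequality is at most $-\tfrac12\|\nabla f_c\|^2 + \tfrac{2b^2}{c^2}$. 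For the second moment I would use $\mathbb{E}\|\bs g_k\|^2 \le 2\|\nabla f_c\|^2 + 2\|\bs b_c\|^2 + 2\mathbb{E}\|\bs r_k\|^2$, folding the variance of $\bs r_k$ stemming from Assumption~\ref{as:boundbias} into an effective $\sigma^2$ (this is the step I expect to require the most care, since Lemma \ref{lem:accuracy} as stated tracks only the bias of the two-point estimator and not its variance explicitly).

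Next, choosing $\alpha_k$ small enough that $2L\alpha_k \le \tfrac12$, the $L\alpha_k^2\cdot 4\|\nabla f_c\|^2$ contribution is absorbed into half of the descent gain, leaving
\begin{equation*}
\tfrac{\alpha_k}{4}\mathbb{E}\|\nabla f_c(\bs\theta_k)\|^2 \le \mathbb{E}[f_c(\bs\theta_k)-f_c(\bs\theta_{k+1})] + \tfrac{\alpha_k}{2}\cdot\tfrac{4b^2}{c^2} + L\alpha_k^2\Bigl(\sigma^2+\tfrac{8b^2}{c^2}\Bigr).
\end{equation*}
Summing $k=1,\dots,K$ telescopes the first term to $f_c(\bs\theta_1)-f_c^*$. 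In case (1), $\alpha_k\equiv\alpha/\sqrt{K}$ gives $\sum\alpha_k=\alpha\sqrt{K}$ and $\sum\alpha_k^2=\alpha^2$, so dividing by $\sum\alpha_k/4$ produces the $1/\sqrt{K}$ decay on the first two terms of \eqref{eq:totalconvfixed0}, with the persistent $\tfrac{2b^2}{c^2}$ summand surviving as the bias floor. In case (2) with $\alpha_k=\alpha/k^\gamma$ and $\gamma\in(\tfrac12,1]$, the probability-weighted iterate $\bs\theta_R$ converts the weighted sum into an expectation, and the rate is governed by $K^\gamma$ since $\sum\alpha_k\gtrsim K^{1-\gamma}$ dominates $\sum\alpha_k^2$, yielding \eqref{eq:totalconvdim0}.

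Finally, I would convert the bound on $\mathbb{E}\|\nabla f_c\|^2$ to one on $\mathbb{E}\|\nabla f\|^2$. By \eqref{eq:litdiff}, $\|\nabla f_c(\bs\theta)-\nabla f(\bs\theta)\|\le \tfrac{c}{2}L(p+3)^{3/2}$, so
\begin{equation*}
\|\nabla f\|^2 \le 2\|\nabla f_c\|^2 + 2\|\nabla f-\nabla f_c\|^2 \le 2\|\nabla f_c\|^2 + \tfrac{c^2}{2}L^2(p+3)^3,
\end{equation*}
which, after absorbing the factor of $2$ into the existing constants, contributes the $\tfrac{c^2}{2}L(p+3)^3$ term in the effect-of-bias bracket. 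The main obstacle I foresee, beyond the variance accounting mentioned above, is the careful coupling of the step-size constraint (e.g.\ $\alpha\le L/4$ in the statement, which I read as a typo for a condition of the form $\alpha L\le 1/4$) with the absorption step that turns the descent coefficient into $\alpha_k/4$; everything thereafter is routine bookkeeping of sums of $\alpha_k$ and $\alpha_k^2$ under each stepsize regime.
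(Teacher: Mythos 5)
Your proposal is correct and follows essentially the same route as the paper's proof: the descent lemma applied to the smoothed surrogate $f_c$, the decomposition of $\bs g_k$ via Lemma~\ref{lem:accuracy}, Young's inequality on the bias cross term, the second-moment bound $\mathbb{E}\|\bs g_k\|^2\le 2\|\nabla f_c\|^2+\sigma^2+\tfrac{8b^2}{c^2}$, telescoping under the two stepsize regimes, and the final conversion $\|\nabla f\|^2\le 2\|\nabla f_c\|^2+\tfrac{c^2}{2}L(p+3)^3$ borrowed from Ghadimi--Lan. Your two flagged concerns (the variance of the two-point estimator not being tracked explicitly in Lemma~\ref{lem:accuracy}, and the stepsize condition reading more naturally as $\alpha L\le 1/4$) are both real loose ends in the paper's own argument as well, so they do not constitute a gap relative to it.
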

\begin{proof}
We begin with the standard Descent Lemma on $f_{c}(\bs \theta)$, recalling that $f_{c}$ has a Lipschitz gradient with the Lipschitz constant bounded by $L$ and proceed with expressing the step as the composition given in Lemma~\ref{lem:accuracy}.
\[
\begin{array}{l}
f_c(\bs \theta_{k+1}) \le f_c(\bs \theta_k)-\alpha_k \langle \nabla f_c(\bs \theta_k),\bs g_k\rangle +\frac{L}{2}\alpha_k^2 \|\bs g_k\|^2 \\
\quad = f_c(\bs \theta_k)-\alpha_k\|\nabla f_c (\bs \theta_k)\|^2-\alpha_k \langle \nabla f_c(\bs \theta_k),\bs r(\xi^1_k,\xi^2_k)+\bs b_c(\bs \theta_k)\rangle +\frac{L}{2}\alpha_k^2 \|\bs g_k\|^2
\end{array}
\]
Taking conditional expectations with respect to the Filtration $\mathcal{F}_k$ corresponding to $(\xi_k^1,\xi_k^2)$, and summing up, we get, with $f^*_c$ a lower bound for $f_c\bs \theta)$,
\begin{equation}\label{eq:sumdesc}
    \sum\limits_{k=1}^K \alpha_k \mathbb{E}\left[\|\nabla f_c(\bs \theta_k)\|^2|\mathcal{F}_k\right] \le f_c(\bs \theta_k)-f^*_{c} - \sum\limits_{k=1}^K \mathbb{E}\left[\alpha_k \langle \nabla f_c(\bs \theta_k),{\bs r}(\xi^1_k,\xi^2_k)+{\bs b}_c(\bs \theta_k)\rangle|\mathcal{F}_k\right]+\frac{L}{2}\sum\limits_{k=1}^K \alpha_k^2 \mathbb{E}\left[\|\bs g_k\|^2|\mathcal{F}_k\right]
\end{equation}
Now, we have, from Lemma~\ref{lem:accuracy} and Assumption~\ref{as:boundbias}, and noting that cross terms involving $r(\xi^1_k,\xi^1_k)$ and deterministic quantities are zero by $\mathbb{E}[r(\xi^1_k,\xi^2_k)|\mathcal{F}_k]=0$,
\[
\mathbb{E}[\|\bs g_k\|^2|\mathcal{F}_k]\le 2\|\nabla f_c(\bs \theta_k)\|^2+\mathbb{E}[\|r(\xi^1_k,\xi^2_k)\|^2|\mathcal{F}_k]+2\|b_c(\bs \theta_k)\|^2 \le 2\|\nabla f_c(\bs \theta_k)\|^2+\sigma^2+\frac{8b^2}{c^2}
\]
Furthermore, using that $\mathbb{E}[r(\xi^1_k,\xi^2_k)|\mathcal{F}_k]=0$ and Young's inequality,
\[
-\mathbb{E}[\langle \nabla f_c(\bs \theta_k),r(\xi^1_k,\xi^2_k)+b_c(\bs \theta_k)\rangle|\mathcal{F}_k] = -\langle \nabla f_c(\bs \theta_k),b_c(\bs \theta_k)\rangle \le \frac{\|\nabla f_c(\bs \theta_k)\|^2}{2}+\frac{\|b_c\|^2}{2} \le \frac{\|\nabla f_c(\bs \theta_k)\|^2}{2}+\frac{2b^2}{c^2}
\]
Plugging in the previous two bounds into~\eqref{eq:sumdesc} and then
taking total expectations and using the expectation tower property,
\[
\sum\limits_{k=1}^K \frac{\alpha_k}{2}\left(1-2\alpha_k L\right) \mathbb{E}\|\nabla f_c(\bs \theta_k)\|^2 \le f_c(\bs \theta_k)-f^*_{c} +\frac{L(\sigma^2+\frac{8b^2}{c^2})}{2}\sum\limits_{k=1}^K \alpha_k^2 +\frac{2b^2}{c^2}\sum\limits_{k=1}^K \alpha_k
\]
Note that we have, from~\cite[(3.9)]{ghadimi2013stochastic},
\[
\|\nabla f(\bs \theta)\|^2-\frac{c^2}{2}L(p+3)^3 \le 2 \|\nabla f_c\bs \theta)\|^2
\]
and so finally,
\begin{equation}\label{eq:totalconv}
   \sum\limits_{k=1}^K \frac{\alpha_k}{2}\left(1-2\alpha_k L\right) \mathbb{E}\|\nabla f(\bs \theta_k)\|^2 \le f_c(\bs \theta_k)-f^*_{c} +\frac{L(\sigma^2+\frac{8b^2}{c^2})}{2}\sum\limits_{k=1}^K \alpha_k^2 +\left[\frac{2b^2}{c^2}+\frac{c^2}{2}L(p+3)^3\right]\sum\limits_{k=1}^K \alpha_k
\end{equation}

We can now consider budget and diminishing step-size regimes. In the first case, we have a fixed budget of $K$ iterations and take $\alpha_k:=\frac{\alpha}{\sqrt{K}}$ with $\alpha\le L/4$. The above expression becomes
\begin{equation}\label{eq:totalconvfixed}
   \frac{1}{4}\sqrt{K} \alpha\min\limits_{k\in[K]} \mathbb{E}\|\nabla f(\bs \theta_k)\|^2 \le f_c(\bs \theta_k)-f^*_{c} +\frac{L\alpha(\sigma^2+\frac{8b^2}{c^2})}{2} +\alpha\left[\frac{2b^2}{c^2}+\frac{c^2}{2}L(p+3)^3\right]\sqrt{K}
\end{equation}
from which we derive the first result. 
\end{proof}

Now we first re-state then prove Theorem~\ref{th:spsa}, the main convergence result for SPSA under biased noisy function evaluations.
\begin{theorem}
Consider the SPSA algorithm with the iteration~\eqref{eq:saiter} and gradient estimate~\eqref{eq:spsa} for a problem satisfying Assumptions~\ref{as:boundbias} and~\ref{as:lipgrad}. In addition, let the SPSA algorithm satisfy the conditions of Lemma~\ref{lem:spsaacc} together with there existing $B_3>0$ such that $[\bs \Delta_k]_i^{-1}\le B_3$ almost surely for all $k$. Then it holds that,
\begin{enumerate}
    \item 
Consider a fixed budget of $K$ iterations. Let $\alpha_k\equiv\alpha/\sqrt{K}$ with $\alpha\le L/2$. It holds that the iterates satisfy:
\begin{equation}\label{eq:totalconvspsafixed0}
   \min\limits_{k\in[K]} \mathbb{E}\|\nabla f(\bs \theta_k)\|^2 \le \frac{4(f_c(\bs \theta_k)-f^*_{c})}{\alpha\sqrt{K}} +\frac{2L\alpha\left[\frac{\sigma^2 B_3^2}{c_k^2}+2b_0 c_k^2+\frac{2b^2 B_3^2}{c^2_k}\right]}{\sqrt{K}} +\left[b_0 c_k^2+\frac{b^2 B_3^2}{c^2_k}\right]
\end{equation}

\item Consider a diminishing stepsize of $\alpha_k=\alpha/k^{\gamma}$ with $\gamma\in (1/2,1]$ and $\alpha\le L/2$. Then it holds that, taking,
\[
\bs \theta_R\sim \bs \theta_k\text{ w.p. } \alpha_k/\sum\limits_{l=1}^K \alpha_l
\]
we have,
\begin{equation}\label{eq:totalconvspsadim0}
   \mathbb{E}\|\nabla f(\bs \theta_R)\|^2 \le 4\frac{f_c(\bs \theta_k)-f^*_{c}}{\alpha K^\gamma} +2\frac{L\alpha\left[\frac{\sigma^2 B_3^2}{c_k^2}+4b_0 c_k^2+\frac{4b^2 B_3^2}{c^2_k}\right]}{ K^{\gamma}} +\left[b_0 c_k^2+\frac{b^2 B_3^2}{c^2_k}\right]
\end{equation}
\end{enumerate}
\end{theorem}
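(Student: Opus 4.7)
The strategy is to mirror the two-point argument: first establish a lemma characterizing the SPSA estimator under biased function evaluations as (true gradient) $+$ (mean-zero term) $+$ (bounded bias term), and then plug this decomposition into a standard descent argument with the two step-size regimes.

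\textbf{Step 1: decomposition of the SPSA estimate.} I first write each noisy evaluation as $F(\bs\theta_k\pm c_k\bs\Delta_k,\xi^{\pm}) = f(\bs\theta_k\pm c_k\bs\Delta_k) + r^\pm_k + b^\pm_k$ using Assumption~\ref{as:boundbias}. Substituting into~\eqref{eq:spsa} splits $\bs g_k = \bs g_k^{\mathrm{ideal}} + \bs g_k^{\mathrm{extra}}$, where $\bs g_k^{\mathrm{ideal}}$ is the SPSA estimate formed from the unbiased $f$ values (to which Lemma~\ref{lem:spsaacc} applies) and $\bs g_k^{\mathrm{extra}}$ has $i$-th component $(r^+_k - r^-_k + b^+_k - b^-_k)/(2c_k[\bs\Delta_k]_i)$. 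Using Assumption~\ref{as:gradbound}, each component of the additional bias is bounded in magnitude by $bB_3/c_k$, whence the norm bound $\|\tilde{\bs b}_k\|^2 \le 2b_0^2 c_k^4 + 2b^2 B_3^2/c_k^2$; independence of $r^\pm_k$ and $\bs\Delta_k$ and Assumption~\ref{as:boundbias} give the variance contribution $\sigma^2 B_3^2/c_k^2$ after conditioning on $\mathcal{F}_k$. Altogether,
\[
\bs g_k = \nabla f(\bs\theta_k) + \tilde{\bs r}_k + \tilde{\bs b}_k, \qquad \mathbb{E}[\tilde{\bs r}_k\mid\mathcal{F}_k] = 0, \qquad \mathbb{E}[\|\tilde{\bs r}_k\|^2 \mid \mathcal{F}_k] \le \sigma^2 B_3^2/c_k^2, \qquad \|\tilde{\bs b}_k\|^2 \le 2b_0^2 c_k^4 + 2b^2 B_3^2/c_k^2.
\]

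\textbf{Step 2: descent inequality and telescoping.} With the decomposition in hand I apply the $L$-Lipschitz descent lemma, exactly as in the two-point proof but to $f$ rather than $f_c$, so that $\nabla f$ plays the role that $\nabla f_c$ played in~\eqref{eq:sumdesc}:
\[
f(\bs\theta_{k+1}) \le f(\bs\theta_k) - \alpha_k\|\nabla f(\bs\theta_k)\|^2 - \alpha_k \langle \nabla f(\bs\theta_k),\tilde{\bs r}_k + \tilde{\bs b}_k\rangle + \tfrac{L}{2}\alpha_k^2\|\bs g_k\|^2.
\]
Conditioning on $\mathcal{F}_k$ the $\tilde{\bs r}_k$ cross term vanishes; the bias cross term is handled by Young's inequality as $-\alpha_k\langle\nabla f,\tilde{\bs b}_k\rangle \le \tfrac{\alpha_k}{2}\|\nabla f\|^2 + \tfrac{\alpha_k}{2}\|\tilde{\bs b}_k\|^2$; and $\mathbb{E}[\|\bs g_k\|^2\mid\mathcal{F}_k] \le 2\|\nabla f\|^2 + \mathbb{E}\|\tilde{\bs r}_k\|^2 + 2\|\tilde{\bs b}_k\|^2$. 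Taking total expectations, summing $k=1,\ldots,K$, and choosing $\alpha_k\le L/2$ so the quadratic $\alpha_k L$ factor is absorbed, I obtain the telescoped inequality
\[
\sum_{k=1}^K \tfrac{\alpha_k}{4}\,\mathbb{E}\|\nabla f(\bs\theta_k)\|^2 \le f(\bs\theta_1) - f^* + \tfrac{L}{2}\!\sum_{k=1}^K \alpha_k^2\!\left[\tfrac{\sigma^2 B_3^2}{c_k^2} + 2b_0 c_k^2 + \tfrac{2b^2 B_3^2}{c_k^2}\right] + \sum_{k=1}^K \alpha_k\!\left[b_0 c_k^2 + \tfrac{b^2 B_3^2}{c_k^2}\right],
\]
which is the SPSA analog of~\eqref{eq:totalconv}.

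\textbf{Step 3: step-size regimes.} For the fixed-budget regime with $\alpha_k \equiv \alpha/\sqrt{K}$, bounding the left-hand side below by $\tfrac{\alpha\sqrt{K}}{4}\min_{k\in[K]}\mathbb{E}\|\nabla f(\bs\theta_k)\|^2$ and dividing through yields~\eqref{eq:totalconvspsafixed0}. For the diminishing-stepsize regime $\alpha_k = \alpha/k^\gamma$ with $\gamma\in(1/2,1]$, the $\bs\theta_R$ sampling rule gives $\mathbb{E}\|\nabla f(\bs\theta_R)\|^2 = (\sum_k \alpha_k)^{-1}\sum_k \alpha_k \mathbb{E}\|\nabla f(\bs\theta_k)\|^2$, and the standard comparisons $\sum_{k=1}^K \alpha_k = \Theta(K^{1-\gamma})$, $\sum_{k=1}^K \alpha_k^2 < \infty$ (by~\eqref{eq:sagain}) produce~\eqref{eq:totalconvspsadim0}.

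\textbf{Anticipated main obstacle.} The principal difficulty is Step 1, specifically propagating the Assumption~\ref{as:boundbias} bias cleanly through the SPSA divisor $2c_k[\bs\Delta_k]_i$ without producing cross terms between the intrinsic SPSA bias of Lemma~\ref{lem:spsaacc} (which is $O(c_k^2)$) and the evaluation bias (which is $O(B_3 b / c_k)$). Writing $\bs g_k$ as the sum of the ideal estimator and a perturbation driven purely by $r^\pm_k + b^\pm_k$ decouples these contributions and uses the independence of $\bs\Delta_k$ from $(\xi^1_k,\xi^2_k)$ to keep the zero-mean part truly zero-mean; matching the precise constants $b_0, B_3, b, \sigma$ as they appear in~\eqref{eq:totalconvspsafixed0}--\eqref{eq:totalconvspsadim0} is then a careful bookkeeping exercise rather than a deeper obstruction.
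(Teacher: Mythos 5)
Your proposal is correct and follows essentially the same route as the paper: decompose the biased SPSA estimator into the bias-free ("ideal") SPSA estimator, to which Lemma~\ref{lem:spsaacc} applies, plus a perturbation driven by the evaluation noise whose components are controlled via $[\bs\Delta_k]_i^{-1}\le B_3$, and then run the standard descent-lemma telescoping with Young's inequality under the two step-size regimes. The only differences are bookkeeping (e.g.\ your $2b_0^2c_k^4$ versus the paper's looser $2b_0c_k^2$ in the squared-bias bound), and you spell out the diminishing-stepsize case slightly more explicitly than the paper does.
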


\begin{proof}
\label{secondproof}
Consider $\hat{\bs g}_k$ to be the SPSA estimate with zero bias in function evaluations and $\bs g_k$ the actual estimate. Recalling Lemma~\ref{lem:spsaacc}, we have that
\[
\bs g_k -\nabla f(\bs\theta_k) = \bs g_k-\hat{\bs{g}}_k+\hat{\bs g}_k-\nabla f(\bs\theta_k) =
\bs g_k-\hat{\bs{g}}_k+\bs r(\xi^1_k,\xi^2_k,\bs\Delta_k)+\bs b_0(\bs\theta_k)
\]
Now using the characterization of the function evaluation~\eqref{eq:biasas} as well as the definition of $\bs{g}_k$, we get,
\[
[\bs g_k]_i-[\hat{\bs{g}}_k]_i = \frac{\hat{r}_k+\hat{b}_k}{2 c_k[\bs \Delta_k]_i} 
\]
with $\mathbb{E}[\hat{r}_k]=0$ and $\|\hat b_k\|\le 2 b$ a.s.. Now, recalling the assumptions on $\bs\Delta_k$ we have that,
\[
\left|[\bs g_k]_i-[\hat{\bs{g}}_k]_i\right|\le |\bar{r}_k|+|b^1_k|
\]
with $\mathbb{E}|\bar{r}_k|\le \frac{\sigma B_3}{c_k}$ and $|b^1_k|\le bB_3/c_k$ almost surely. Putting these quantitites together we get that,
\[
\bs g_k -\nabla f(\bs\theta_k) = \bs r_s(\xi^1_k,\xi^2_k,\bs\Delta_k)+\bs b_s(\bs\theta_k)
\]
with $\mathbb{E}\bs r_s=0$ and $\|\bs b_s(\bs\theta_k)\|_{\infty}\le b_0c_k^2+\frac{b B_3}{c_k}$.

Now we can proceed by similar lines as in the proof of Theorem~\ref{th:twopoint}, noting that in this case, although the gradient error estimate is more complex, the accuracy is with respect to the original function as opposed to a smoothed variant, thus permitting a more simplified exposition of the main convergence result. 

Indeed, starting with the standard Descent Lemma on $f(\bs \theta)$,
\[
\begin{array}{l}
f(\bs \theta_{k+1}) \le f(\bs \theta_k)-\alpha_k \langle \nabla f(\bs \theta_k),\bs g_k\rangle +\frac{L}{2}\alpha_k^2 \|\bs g_k\|^2 \\
\quad = f(\bs \theta_k)-\alpha_k\|\nabla f (\bs \theta_k)\|^2-\alpha_k \langle \nabla f(\bs \theta_k),\bs r_s(\xi^1_k,\xi^2_k)+\bs b_s(\bs \theta_k)\rangle +\frac{L}{2}\alpha_k^2 \|\bs g_k\|^2
\end{array}
\]
Summing up, we get, with $f^*$ a lower bound for $f(\bs \theta)$,
\begin{equation}\label{eq:sumdescspsa}
    \sum\limits_{k=1}^K \alpha_k \mathbb{E}\left[\|\nabla f(\bs \theta_k)\|^2|\mathcal{F}_k\right] \le f(\bs \theta_k)-f^* - \sum\limits_{k=1}^K \alpha_k \mathbb{E}\left[\langle \nabla f(\bs \theta_k),\bs r_s(\xi^1_k,\xi^2_k)+\bs b_s(\bs \theta_k)\rangle|\mathcal{F}_k\right]+\frac{L}{2}\sum\limits_{k=1}^K \alpha_k^2 \mathbb{E}\left[\|\bs g_k\|^2|\mathcal{F}_k\right]
\end{equation}
Continuing, from Lemma~\ref{lem:spsaacc} and Assumption~\ref{as:boundbias},
\[
\mathbb{E}[\|\bs g_k\|^2|\mathcal{F}_k]\le 2\|\nabla f(\bs \theta_0)\|^2+\mathbb{E}[\|\bs r_s(\xi^1_k,\xi^2_k)\|^2|\mathcal{F}_k]+2\|\bs b_s(\bs \theta_k)\|^2 \le 2\|\nabla f(\bs \theta_k)\|^2+\frac{\sigma^2 B_3^2}{c_k^2}+4b_0 c_k^2+\frac{4b^2 B_3^2}{c^2_k}
\]
and, using that $\mathbb{E}[\bs r_s(\xi^1_k,\xi^2_k)|\mathcal{F}_k]=0$,
\[
\begin{array}{l}
-\mathbb{E}[\langle \nabla f(\bs \theta_k),\bs r_s(\xi^1_k,\xi^2_k)+\bs b_s(\bs \theta_k)\rangle|\mathcal{F}_k] = -\langle \nabla f(\bs \theta_k),\bs b_s(\bs \theta_k)\rangle \\ \qquad \le \frac{\|\nabla f(\bs \theta_k)\|^2}{2}+\frac{\|\bs b_s\|^2}{2} \le \frac{\|\nabla f(\bs \theta_k)\|^2}{2}+b_0 c_k^2+\frac{b^2 B_3^2}{c^2_k}
\end{array}
\]
Incorporating these two bounds into~\eqref{eq:sumdescspsa} and taking total expectations, we obtain,
\begin{equation}\label{eq:totalconvspsa}
\sum\limits_{k=1}^K \frac{\alpha_k}{2}\left(1-2\alpha_k L\right) \mathbb{E}\|\nabla f(\bs \theta_k)\|^2 \le f(\bs \theta_0)-f^* +\frac{L\left[\frac{\sigma^2 B_3^2}{c_k^2}+4b_0 c_k^2+\frac{4b^2 B_3^2}{c^2_k}\right]}{2}\sum\limits_{k=1}^K \alpha_k^2 +\left[b_0 c_k^2+\frac{b^2 B_3^2}{c^2_k}\right]\sum\limits_{k=1}^K \alpha_k
\end{equation}

We can now consider budget and diminishing step-size regimes. In the first case, we have a fixed budget of $K$ iterations and take $\alpha_k:=\frac{\alpha}{\sqrt{K}}$ with $\alpha\le L/4$. The above expression becomes
\begin{equation}\label{eq:totalconvfixed}
   \frac{1}{4}\sqrt{K} \alpha\min\limits_{k\in[K]} \mathbb{E}\|\nabla f(\bs \theta_k)\|^2 \le f_c(\bs \theta_k)-f^*_{c} +\frac{L\alpha^2 \left[\frac{\sigma^2 B_3^2}{c_k^2}+2b_0 c_k^2+\frac{2b^2 B_3^2}{c^2_k}\right]}{2} +\alpha\left[b_0 c_k^2+\frac{b^2 B_3^2}{c^2_k}\right]\sqrt{K}
\end{equation}
\end{proof}

\section{Bias in VQA and Its Effects on Convergence Properties}\label{sec:bias}

\paragraph*{Noise models in quantum circuits}\label{sec:noise}


In the circuit model of quantum computation, an ideal quantum gate performs a unitary transformation on the quantum state, which is described by density matrices $\rho \in \mathbb{S}_+^d$. The unitary transformation acts on an input state $\rho_{\rm{in}}$ as $U\rho_{\rm in}U^\dagger = \rho_{\rm out}$. 
In practice, real quantum hardware is very noisy because of its unavoidable interaction with the environment. This corresponds to an \emph{open quantum system} whose dynamics is never purely unitary.

In the most general setting, the evolution of states in open quantum systems can be described as $\rho_{\rm out}= \mathcal{E}(\rho_{\rm in})$ by a linear operator $\mathcal{E}: \mathbb{S}_+^d \to \mathbb{S}_+^d$ via the (nonunique) Kraus sum representation
\begin{align}\label{eq:Kraus}
    \mathcal{E}(\rho) \coloneqq \sum_{k=1}^\ell E_k \rho_{\rm in} E_k^\dagger,
\end{align}
where $\{E_k\}_{k=1}^\ell$ are called Kraus operators and satisfy $\sum_{k=1}^\ell \hat{E}_k^\dagger \hat{E}_k  = \mathds{1}$,\, $\ell \leq d^2-1$.

Let us consider some concrete examples of how Kraus operators appear in the noise typical noise channels of the gate model of quantum computing with the standard Pauli gates $\{ X,Y,Z\}$.
\begin{enumerate}
    \item[\emph{(1)}] The \emph{bit flip} with probability $p$ is described as $\mathcal{E}_{}(\rho)=(1-p) \rho+ p X \rho X^\dagger$ with the Kraus operators $\{\mathds{1}, X \}$.
    \item[\emph{(2)}] The \emph{dephasing channel} is described as $\mathcal{E}_{}(\rho)=(1-p) \rho+p Z \rho Z^\dagger$ with the Kraus operators being $\{\mathds{1}, Z \}$.
    \item[\emph{(3)}] The \emph{depolarizing channel} is described as $\mathcal{E}_{}(\rho)=(1-p) \rho+\frac{p}{4}(\rho+X \rho X^\dagger+Y \rho Y^\dagger+Z \rho Z^\dagger)=\frac{p}{2} \mathds{1} +(1-p) \rho$.
    \label{eq:depolarizing}
\end{enumerate}

\paragraph*{Bias in Quantum Circuits}\label{sec:biascirc}
Errors are endemic in the measurements of quantum states after a circuit run. A browse through any contemporary textbook or broad survey (e.g.,~\cite{NielsenChuang,benenti2019principles,LaGuardia2020}) indicates the degree to which research is being undertaken in the design and implementation of error-correcting codes as well as in hardware design for minimizing noise, attributes that are absolutely essential for the commercial implementation of fault-tolerant error corrected quantum computers as well as the NISQ computers. In fact, it is widely accepted that the latter can handle only shallow quantum circuits due to the presence of noise, which, with increasing depth, makes the output of any measurement progressively less reliable. 


Of course, the presence of noise itself does not indicate bias, and even what would be considered a lot of noise could be unbiased but with a high variance. However, at first glance, it is clear that bias is inherently present in the computation of noisy quantum circuits. There are two sources of bias in the noise:
\begin{enumerate}
    \item[\emph{(1)}] Bias due to certain noise channels that favor certain operation errors, e.g., Pauli errors, more frequently compared to other operations.
    \item[\emph{(2)}] Bias in the measurement, wherein we obtain positive counts for states that should appear with probability 0.  
    \item[\emph{(3)}] Bias in the estimators used for reconstructing quantum states (quantum state tomography) or quantum processes (quantum process tomography). 
\end{enumerate}

As far as point \emph{(1)} is concerned, consider, for example, a qubit system with Hamiltonian proportional to $Z$ where a common noise model description for this system is
that of the dephasing channel, which occurs at rates
much greater than the rates of other noise channels. Specifically, consider the case where the energy eigenstate basis of the state of a qubit before the execution of the circuit matches the computational basis $\{\ket{0}, \ket{1}\}$. The noise induced by the dephasing channel is much higher, for example, than the noise induced by bit flips. For example, in ~\cite{tannu2019mitigating} it was reported that simply measuring a zero state had $82\%$ fidelity (and the state of all ones $62\%$) on a state-of-the-art quantum device. If the circuit and the observable operator are designed so that, in the theoretically ideal scenario, there is one eigenstate that is measured with probability one, then the expectation \emph{should be} that eigenstate. However, suppose that one is measuring one state most of the time and another state some other percentage of the time, in practice, for this circuit, that is, an evaluation $f(\bs \theta)\sim E_1$ with probability $p_1$ and $f(\bs \theta)\sim E_2>E_1$ with probability $p_2$. In such a basic scenario, if $E_1$ is intended, then clearly all the noise is one-directional and thus biased. Furthermore, the noise is not even consistent since we expect repeated trials of this to continue to exhibit one-sided noise. As a matter of fact, in several types of physical qubit implementations certain types of errors over-dominate others, for example in solid state spin qubits \cite{Shulman2012,Watson2018} and superconducting qubits \cite{Pop2014}. This is a first indication that the implemented circuits are inherently biased. 

However, in the context of the practical implementation of VQAs, point \emph{(3)} is what really matters. Specifically, we are interested in how the noise in the actual measurements affects the values of the observables. As an example, consider the depolarizing channel, 
\begin{align}
    \mathcal{E}(\rho) =p\mathds{1}+(1-p)\rho
\end{align}
which replaces each qubit (or the entire system) in the state $\rho$ with the totally mixed state with some probability $p > 0$. Subsequently, consider that an observable $O$ is applied to compute the energy, entropy, or entanglement, for instance. It is clear that if in all circuit runs the state $\rho$ is the input of $\mathcal{E}$, any one evaluation will not have an expectation $\braket{O} \coloneqq {\rm Tr}(\rho O)$, rather $\braket{\tilde O} \coloneqq {\rm Tr}(\mathcal{E}(\rho) O)$, and no matter how many trials are performed, the sample average will not equal this desired value, i.e., 
\begin{align}
    \braket{O} \neq \mathbb{E}(\tilde{O})\coloneqq \frac{1}{N}\sum\limits_{i=1}^N f_i,\quad f_i\sim \mathbb{E}_i(\tilde{O}),
\end{align}
where $f_i$ corresponds to a measurement. Said differently, the estimator $\tilde{O}$ of some observable $O$ is biased in the sense $\mathbb{E}[O-\tilde{O}] \neq 0$. It is known that, for example, in IBM machines, there exists an asymmetric read-out error \cite{Nachman2020} since
$p(\ket{0}|\ket{1}) > p(\ket{1}|\ket{0})$. 
Thus, it is clear that the noise arising from a classical measurement applied to a quantum circuit, as is commonly done in, e.g., VQA, is not unbiased, as far as the expectation of the (parametrized) estimator is concerned, and thus any analysis of algorithms with function evaluations of this sort must incorporate this for faithful modeling of the problem. 


Next, let us consider the bias in the measurement (2). To give an intentionally simple example, consider a Bell state in a 2-qubit register obtained performing a Hadamard gate followed by a CNOT gate. Figure~\ref{Fig:ibmsimplegate}
presents the corresponding measurements obtained in hardware (\texttt{ibmq\_santiago}) and in simulation (using the \texttt{qiskit} Aer simulator). 
Table~\ref{table:simpleexprobs} presents the same data numerically. 

\begin{figure*}[t]
    \centering
    \includegraphics[width = 0.48\columnwidth]{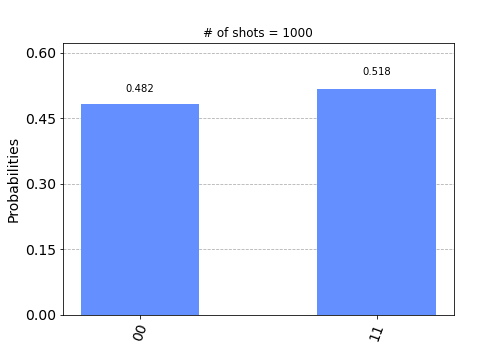}
    \includegraphics[width = 0.48\columnwidth]{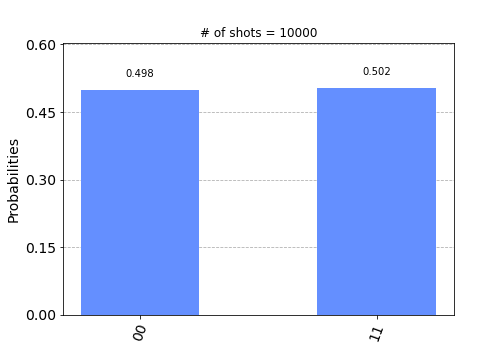}
    \includegraphics[width = 0.48\columnwidth]{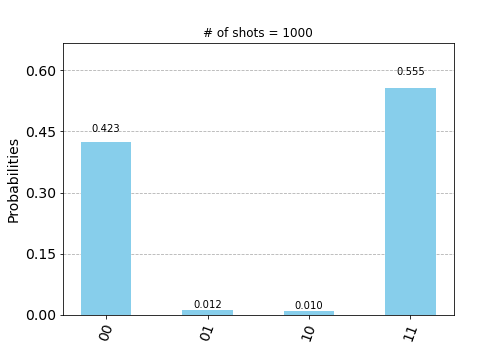}
    \includegraphics[width = 0.48\columnwidth]{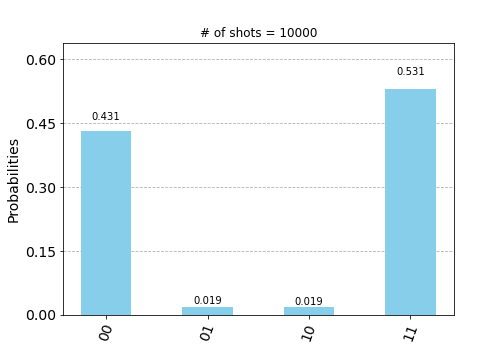}
    \caption{Bias in the measurement of a Bell state: The top two plots present the measurements of the circuit of Fig. \ref{fig:HCNOT}, in a noise-free simulation. 
    The bottom two plots present the corresponding measurements from IBM's 5-qubit device \texttt{ibmq\_santiago}. The fact that states $\ket{01}$ and $\ket{10}$ appear with non-zero probability suggests there is bias, which is independent of the number of shorts. 
    }
    \label{Fig:ibmsimplegate}
\end{figure*}

\begin{table}[tb]
\centering
\begin{tabular}{ccc} \Xhline{5\arrayrulewidth} 
State $\ket{\psi_f}$ & Analytical & Hardware   \\ \hline
$\ket{00}$    & 0.5 & 0.43  \\
$\ket{01}$    & 0 & 0.02  \\
$\ket{10}$    & 0  & 0.02  \\ 
$\ket{11}$    & 0.5  & 0.53  \\ \hline
\end{tabular} 
\caption{A numerical comparison of the probabilities for the Bell state of Figure~\ref{Fig:ibmsimplegate}. 
}
\label{table:simpleexprobs}
\end{table}
Errors that result in inaccurate sample averages of particular states are, of course, well-known issues that are endemic to near-term devices, and a host of error mitigation strategies exist to reduce the magnitude of said errors. 

However, regardless of the quantitative improvement in the frequency of errors, there would remain some nonzero appearance of the mixed states $\ket{01}$ and $\ket{10}$. This in itself makes it the case that there are observables for which, when they define the objective of a VQA circuit, there will always be bias, resulting in a fundamental discrepancy between the ideal distribution $\Xi$ and the observed distribution of objective evaluations $\chi$. 

As a simple example, consider the operation of simply projecting onto the $\ket{01}$ state. This would correspond to taking the observable $O \in \Herm(\mathbb{C}^4)$ as the Hermitian matrix,
\[ O =
\begin{pmatrix}
0 & 0 & 0 & 0 \\
0 & 1 & 0 & 0 \\
0 & 0 & 0 & 0 \\
0 & 0 & 0 & 0
\end{pmatrix}.
\]
Define $F(\bs\theta,\xi)$ as the output of $\braket{\psi_0|U^\dagger(\bs \theta)O U(\bs \theta)|\psi_0}=\braket{\psi_f|O |\psi_f}$. Note that there is no parameter $\bs\theta$ that enters the expression for $\ket{\psi_0}$ or the gates $U$ or $O$, and so 
$F$ is constant with respect to $\bs\theta$, and only the noise affects the outcome of its evaluation.

We have that $F(\cdot,\cdot)$ under the ideal distribution $\Xi$ and the actual distribution $\chi$ are,
\[
F(\bs \theta,\cdot) \sim_{\Xi} 
0  \text{ w.p. }1,\quad
F(\bs \theta,\cdot) \sim_{\chi} \left\{ \begin{array}{lr}
0 & \text{w.p.}\approx 0.98,\\
1 & \text{w.p.}\approx 0.02\end{array}\right.,\,
\]
thus with $f(\bs\theta)\equiv 0$ as the desired expectation, we have,
\begin{align}
F(\bs\theta,\xi)= f(\bs\theta)+n(\bs\theta,\xi)= f(\bs\theta)+r(\bs\theta,\xi)+b(\bs\theta)
\end{align}
with $b(\bs\theta)=(\mathbb{E}_{\chi}-\mathbb{E}_{\Xi})[F(\bs\theta,\cdot)]=0.02$ the bias and $r(\bs\theta,\xi)=0$ the additional statistical variation. Thus, it can be clearly seen that even in the simple example, bias in the function evaluation is present, suggesting that the discrepancy in the mean and limiting sample average between the ideal and observed values of noisy VQA function evaluations is an endemic issue, and classic optimization convergence analyses that do not take this into account are not directly applicable to studying the minimization of VQA circuits.
Note that increasing the sample size does not diminish the error percentage, indicating that the procedure is not statistically consistent.

\begin{figure}[t]
    \centering
    \quad\quad
    \includegraphics[scale=1.2]{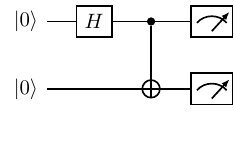} \qquad 
    \caption{An intentionally simple quantum circuit preparing a Bell state.}
   \label{fig:HCNOT}
\end{figure}


\section{Gate definitions}
\label{sec:gatedef}

For the sake of completeness, let us reiterate the definition of gates used in Figure \ref{fig:qaoa}:


\begin{align*}
 U_2(\phi,\lambda) & = \frac{1}{\sqrt{2}}\left(\begin{array}{cc}
1 & -e^{i \lambda} \\
e^{i \phi} & e^{i(\phi+\lambda)}
\end{array}\right),  \qquad
 R_z(\lambda) = \left(\begin{array}{cc}
e^{-i \frac{\lambda}{2}} & 0 \\
0 & e^{i \frac{\lambda}{2}}
\end{array}\right), \\
R(\theta,\phi) & = \left(\begin{array}{cc}
\cos \frac{\theta}{2} & -i e^{-i \phi} \sin \frac{\theta}{2} \\
-i e^{i \phi} \sin \frac{\theta}{2} & \cos \frac{\theta}{2}
\end{array}\right).
\end{align*}



\end{document}